\documentclass[12pt, a4paper, reqno]{amsart}

\usepackage{lipsum}
\usepackage{amsmath}
\usepackage{amssymb}
\usepackage{amsthm}
\usepackage{mathtools}
\usepackage{fancyhdr}
\usepackage{graphicx}
\usepackage{verbatim}
\usepackage{latexsym}
\usepackage{mathrsfs}
\usepackage{mathdots}

\usepackage{tikz}
\usepackage{tikz-cd}

\usepackage{luatex85} 
\usepackage[all]{xy}

\usepackage{float}
\usepackage{authblk}
\usepackage{eqparbox}

\usepackage{array}   
\usepackage{mathtools}
\usepackage{cellspace}
\usepackage{colortbl}
\usepackage{multicol}

\usepackage[foot]{amsaddr}

\setlength{\marginparwidth}{2cm}

\usepackage{todonotes}
\setuptodonotes{inline}

\usepackage[shortlabels]{enumitem}
\usepackage{pdfpages}
\usepackage{xcolor}
\usepackage{stmaryrd}
\usepackage{wallpaper}
\usepackage{setspace}
\usepackage[top = 2 cm, bottom = 3 cm, left = 2 cm, right = 2 cm]{geometry}
\usepackage{bbm}

\usepackage[UKenglish]{isodate}
\cleanlookdateon

\usepackage{anyfontsize}
\usepackage{pgf,pgfplots}
\usepackage{textcomp}
\usepackage[margin = 1cm]{caption}
\usepackage{subcaption}
\usepackage{stackengine}

\usepackage{wasysym}
\usepackage{esint} 

\usepackage[style=alphabetic,
  sorting = nyt,
  sortcites=true,
  giveninits=true,
  date=year,
  isbn=false,
  maxbibnames=99,
  maxalphanames=6,
  backend=biber]{biblatex}
\addbibresource{./bibliography.bib}
\DeclareFieldFormat*{titlecase}{\MakeSentenceCase{#1}}

\DeclareSourcemap{\maps[datatype=bibtex]{\map{\step[fieldset=shorthand, null]}}}

\DeclareSourcemap{
  \maps[datatype=bibtex]{
    \map[overwrite]{
      \step[fieldsource=doi, final]
      \step[fieldset=url, null]
      \step[fieldset=eprint, null]
    }  
  }
}

\DeclareSourcemap{
  \maps[datatype=bibtex, overwrite]{
    \map{
      \step[fieldset=language, null]
      \step[fieldset=month, null]
      \step[fieldset=urldate, null]
    }
  }
}

\AtEveryBibitem{%
  \clearlist{language}%
  \clearlist{month}%
  \clearlist{urldate}%
  \clearfield{urldate}%
}


\renewbibmacro*{in:}{%
  \setunit{\addcomma\space}%
  \ifentrytype{article}
    {}
    {\printtext{%
       \bibstring{in}\intitlepunct}}}

\renewcommand*{\intitlepunct}{\addspace}

\DeclareFieldFormat[misc]{title}{\mkbibquote{#1}}
\DeclareFieldFormat[article,inproceedings]{volume}{\mkbibbold{#1}}
\DeclareFieldFormat[inproceedings]{series}{\mkbibitalic{#1}}

\setcounter{biburllcpenalty}{9000}
\setcounter{biburlucpenalty}{9000}
\setcounter{biburlnumpenalty}{9000}

\usepackage[hidelinks,unicode,breaklinks=true]{hyperref}
\usepackage[capitalise,noabbrev,sort]{cleveref}

\usepackage{bookmark}


\usepackage{seqsplit}
\usepackage[notcite,notref,final]{showkeys}

\pgfplotsset{compat=1.16}
\usetikzlibrary{arrows}
\usetikzlibrary{arrows.meta}
\usetikzlibrary{patterns}
\usetikzlibrary{calc}
\usetikzlibrary{math} 

\usetikzlibrary{positioning,decorations.pathreplacing} 

\usepackage{chngcntr}
\counterwithin{figure}{section}
\counterwithin{equation}{section}
\counterwithin{table}{section}


\let\C\relax

\newcommand{\R}{\mathbb{R}}
\newcommand{\C}{\mathbb{C}}
\newcommand{\Q}{\mathbb{Q}}
\newcommand{\N}{\mathbb{N}}
\newcommand{\Z}{\mathbb{Z}}

\newcommand{\X}{\mathbb{X}}

\newcommand{\T}{\mathbb{T}}
\newcommand{\A}{\mathbb{A}}

\renewcommand{\H}{\mathbb{H}}

\newcommand{\mcN}{\mathcal{N}}

\newcommand{\mcH}{\mathcal{H}}
\newcommand{\mcF}{\mathcal{F}}

\newcommand{\mcE}{\mathcal{E}}

\newtheoremstyle{theorems}
  {3pt}
  {3pt}
  {\itshape}
  {}
  {\bfseries}
  {.}
  { }
  {}

\newtheoremstyle{proofparts}
  {3pt}
  {0pt}
  {}
  {\parindent}
  {\scshape}
  {:}
  {\newline}
  {}

\newtheoremstyle{claims}
  {2pt}
  {2pt}
  {}
  {\parindent}
  {\bfseries}
  {.}
  { }
  {}

\theoremstyle{theorems}
\newtheorem{thm}{Theorem}[section]

\newtheorem{lemma}[thm]{Lemma}
\newtheorem*{lemma*}{Lemma}

\newtheorem{prop}[thm]{Proposition}

\newtheorem*{conj*}{Conjecture}

\theoremstyle{definition}
\newtheorem{defn}[thm]{Definition}

\newtheorem{remark}[thm]{Remark}
\newtheorem{notation}[thm]{Notation}

\theoremstyle{proofparts}

\makeatletter
\@addtoreset{proofpart}{thm}
\makeatother

\theoremstyle{claims}

\newtheorem*{claim*}{Claim}

\crefname{thm}{theorem}{theorems}
\crefname{problem}{problem}{problems}
\crefname{lemma}{lemma}{lemmas}
\crefname{cor}{corollary}{corollaries}
\crefname{prop}{proposition}{propositions}
\crefname{conj}{conjecture}{conjectures}
\crefname{defn}{definition}{definitions}
\crefname{note}{note}{notes}
\crefname{ex}{example}{examples}
\crefname{remark}{remark}{remarks}
\crefname{notation}{notation}{notations}
\crefname{assumption}{assumption}{assumptions}
\crefname{claim}{claim}{claims}
\crefname{claim*}{claim}{claims}

\pagestyle{plain}

\makeatletter
\newcommand{\Biggg}{\bBigg@{3}}
\newcommand{\vast}{\bBigg@{4}}
\newcommand{\Vast}{\bBigg@{5}}
\makeatother

\newcommand{\norm}[1]{\left\Vert #1 \right\Vert}
\newcommand{\abs}[1]{\left\vert #1 \right\vert}

\DeclareMathOperator{\supp}{supp}

\DeclareMathOperator{\spec}{spec}

\DeclareMathOperator{\tspan}{span}

\let\Re\relax\DeclareMathOperator{\Re}{Re}

\definecolor{emphcolor}{rgb}{0,0,1}           

\newcommand{\ip}[2]{\left\langle #1 \middle\vert #2 \right\rangle}
\newcommand{\longip}[3]{\left\langle #1 \middle\vert #2 \middle\vert #3 \right\rangle}

\newcommand{\expect}[1]{\left\langle #1 \right\rangle}

\newcommand{\ud}{\,\textnormal{d}}
\newcommand{\dd}[1]{\frac{\textnormal{d}}{\textnormal{d} #1}}

\newcommand{\ket}[1]{\left\vert #1 \right\rangle}

\newcommand{\hc}{\textnormal{h.c.}}

\let\oldepsilon\epsilon
\let\epsilon\varepsilon
\let\varepsilon\oldepsilon

\let\eps\epsilon

\title{Ground state energy of the dilute spin-polarized Fermi gas: Lower bound}
\author{Asbjørn Bækgaard Lauritsen}
\email{alaurits@ist.ac.at}
\author{Robert Seiringer}
\email{rseiring@ist.ac.at}
\address{Institute of Science and Technology Austria, Am Campus 1, 3400 Klosterneuburg, Austria}
\date{\today}

\allowdisplaybreaks

\begin{document}

\begin{abstract}
We prove a lower bound on the ground state energy of the dilute spin-polarized Fermi gas
capturing the leading correction to the kinetic energy resulting from repulsive interactions. 
This correction depends on the $p$-{wave scattering length} of the interaction 
and matches the corresponding upper bound in [J. Funct. Anal. 286.7 (2024), p. 110320].
\end{abstract}

\maketitle

\section{Introduction and Main Results}

In recent years much effort in mathematical physics has been devoted towards establishing the validity of asymptotic formulas for the ground state energy (and the free energy at non-zero temperature) of dilute quantum gases, motivated in part by the advances in the physics of cold atoms. The validity of leading order terms was proved for bosons at zero \cite{Dyson.1957,Lieb.Yngvason.1998,Lieb.Yngvason.2001}  and positive  \cite{Seiringer.2008,Yin.2010,Deuchert.Mayer.ea.2020,Mayer.Seiringer.2020} temperature, as well as for fermions with $q\geq 2$ spin components \cite{Lieb.Seiringer.ea.2005,Seiringer.2006,Falconi.Giacomelli.ea.2021}, in dimensions $d\geq 2$. For bosons, even the next-order term (the Lee--Huang--Yang correction) has been established \cite{Yau.Yin.2009,Fournais.Solovej.2020,Basti.Cenatiempo.ea.2021,Fournais.Solovej.2022,Fournais.Girardot.ea.2022,Haberberger.Hainzl.ea.2023}. In all these cases, the strength of the interparticle interaction is quantified by the $s$-wave scattering length. In the special case of one dimension, both bosonic and fermionic gases were recently studied in \cite{Agerskov.Reuvers.ea.2022}. 

Notably absent from this list is the spinless Fermi gas in dimensions $d\geq 2$. At low density, the effect of the interactions (quantified by the $p$-wave scattering length in this case) is significantly smaller than for bosons or for fermions with spin, due to the vanishing of the wave functions at spatial coincidences of the particles as a consequence of the Pauli principle. This makes the mathematical analysis much more subtle. 
In \cite{Lauritsen.Seiringer.2024}, an asymptotically correct upper bound on the ground state energy of a dilute spinless Fermi gas was obtained, by developing a cluster expansion technique (see also \cite{Lauritsen.Seiringer.2023a} for a version of this technique applicable at non-zero temperature). 
In this paper, we prove a corresponding lower bound. 
Our method is inspired by \cite{Falconi.Giacomelli.ea.2021}, and utilizes a suitable unitary transformation implementing the relevant correlations when two particles are at close distances. 

To formulate our result more precisely, consider a gas of $N$ indistinguishable fermions in a $d$-dimensional box $\Lambda = [-L/2,L/2]^d$ of side length $L>0$ 
interacting through a repulsive pair potential $V$, meaning that $V \geq 0$.
Our main focus will be on the physically most relevant case $d=3$.
The Hamiltonian of such a system is given by 
(in units where $\hbar = 1$ and the particle mass is $m=1/2$)
\begin{equation}\label{eqn.def.hamiltonian}
H_N = \sum_{j=1}^N -\Delta_{x_j} + \sum_{j < k} V(x_j - x_k)
\end{equation}
and is defined on (an appropriate domain in) $L^2_a(\Lambda^N; \C^q) := \bigwedge^N L^2(\Lambda; \C^q)$ 
with $q \in \N$ the number of spin states.
We consider such a system in the regime where the particle density $\rho = N/L^d$ is small compared to the length scale set by the interaction potential $V$.
The particles being fermions, an important role is played by the spin:
The interactions between fermions in different spin states gives a much larger contribution to the energy  than the interactions 
between  fermions in the same spin state. This is due to the Pauli exclusion principle suppressing the probability of 
two  fermions of equal spin being close. For fermions of different spin there is no such suppression.

In this paper we study the setting of  spinless (or, equivalently, fully spin-polarized) fermions, meaning that $q=1$. In the physics literature, one finds the following conjecture 
for the ground state energy $E_N = \mathrm{inf\, \spec\,} H_N$  \cite{Efimov.1966,Efimov.Amusya.1965,Amusia.Efimov.1968,Ding.Zhang.2019}
(see also the numerics in \cite{Bertaina.Tarallo.ea.2023}): for small $a k_F$
\begin{flalign}
E_N^{d=3} & = N k_F^2 \left[\frac{3}{5} 
	+ \frac{2}{5\pi} a^3 k_F^3  
	- \frac{1}{35\pi} a^6 R_{\textnormal{eff}}^{-1 }k_F^5
  	+ \frac{2066-312\log 2}{10395\pi^2} a^6 k_F^6 + o( a^6k_F^6) \right]
\label{eqn.conjectured}
\end{flalign}
with $k_F = (6\pi^2\rho)^{1/3}$ the Fermi momentum,
$a$ the $p$-\emph{wave scattering length} and $R_{\textnormal{eff}}$ the $p$-\emph{wave effective range}. 
The first term $\frac{3}{5} N k_F^2$ is the (kinetic) energy of the free (i.e., non-interacting) Fermi gas.
In \cite{Lauritsen.Seiringer.2024} the validity of the first three terms was proved as an upper bound. 
In this paper we shall prove the validity of the first two terms as a lower bound. In particular, 
in combination with the result in \cite{Lauritsen.Seiringer.2024} we establish the validity of \Cref{eqn.conjectured}
to order $Na^3k_F^5$. 

For comparison, let us consider the case of fermions with spin $\geq 1/2$, where there are $q\geq 2$  spin states.
To leading order, one only sees the interaction between fermions of different spins. In this case it is known  \cite{Falconi.Giacomelli.ea.2021,Giacomelli.2023,Lauritsen.2023,Lieb.Seiringer.ea.2005} that, 
for a gas with $N_\sigma = \rho_\sigma L^3$ fermions of spin $\sigma$, 
\begin{align}\label{E:q}
E_{ \{N_\sigma\} }^{d=3} 
	& = \sum_{\sigma} \frac{3}{5} N_\sigma (6\pi^2\rho_\sigma)^{2/3}
  + \sum_{\sigma \ne \sigma'} 4 \pi a_s N_\sigma N_{\sigma'} L^{-3}
  + o(N a_s \rho)
\end{align}
with $a_s$ the $s$-\emph{wave scattering length} of the interaction $V$. 
The first term $\sum_\sigma \frac{3}{5} N_\sigma (6\pi^2\rho_\sigma)^{2/3}$ is again 
the energy of a free Fermi gas.
We note that the second term, of order $Na_s \rho$, is much larger than the corresponding second term for the spin-polarized fermions in \Cref{eqn.conjectured}. 
The next term in the expansion \Cref{E:q} is conjectured to be of order $N a_s^2 \rho^{4/3}$ 
\cite{Huang.Yang.1957,Giacomelli.2023,Giacomelli.2023a}. 
Also this term arises  from  interactions of fermions with different spin and is still much larger 
than the largest term coming from the same-spin interaction in \Cref{eqn.conjectured} above.

\newcommand{\pushright}[1]{\ifmeasuring@#1\else\omit\hfill$\displaystyle#1$\fi\ignorespaces}

Finally, we consider also the lower-dimensional cases $d\leq 2$. Here the expected  formulas 
for the spin-polarized gas read \cite{Lauritsen.Seiringer.2024,Agerskov.Reuvers.ea.2022}
\begin{align*}
E_N^{d=2} 
& = N k_F^2 \left[\frac{1}{2} + \frac{1}{4} a^2 k_F^2 + o(a^2k_F^2)\right],
&
E_N^{d=1} 
& = N k_F^2 \left[\frac{1}{3} + \frac{2}{3\pi} ak_F + o(ak_F)\right]
\end{align*}
with $k_F = (4\pi \rho)^{1/2}$ for $d=2$ and $k_F = \pi\rho$ for $d=1$, and $a$ the $p$-\emph{wave scattering length} in the respective dimension. 
In \cite{Lauritsen.Seiringer.2024} we proved the validity of both of these formulas as upper bounds, and in 
\cite{Agerskov.Reuvers.ea.2022}  the one-dimensional formula is proved both as an upper and a lower bound. 
In this paper, we shall also prove the formula for $d=2$ as a lower bound.

\subsection{Precise statement of results}
We shall now give a precise statement of our main results, given in \Cref{thm.main} below. 
To do this, we first define the Hamiltonian $H_N$ and its ground state energy $E_N$ properly. 

We shall work with periodic boundary conditions on $\Lambda = [-L/2,L/2]^3$. 
In particular we replace the interaction $V$ by its periodization 
$\sum_{n \in \Z^3} V(x + nL)$, which we will with a slight abuse of notation continue to denote by $V$. 
We assume that $V$ has compact support, so for $L$ large enough at most one of the summands in $\sum_{n\in \Z^3} V(x+nL)$
is non-zero and no confusion should arise.
The Hamiltonian $H_N$ is then defined as in \Cref{eqn.def.hamiltonian} with $\Delta$ denoting the 
Laplacian with periodic boundary conditions on the box $\Lambda$
and realized as a self-adjoint operator on (an appropriate domain in) 
the fermionic space $L^2_a(\Lambda^N) = \bigwedge^N L^2(\Lambda)$.
The ground state energy $E_N$ is then given by 
\begin{equation*}
E_N = \inf_{\psi\in L^2_a(\Lambda^N)} \frac{\longip{\psi}{H_N}{\psi}}{\ip{\psi}{\psi}}.
\end{equation*}
The $p$-\emph{wave scattering length} of the interaction potential $V$ is defined as follows 
(a different-looking but equivalent definition is given in \cite{Lauritsen.Seiringer.2024,Lauritsen.Seiringer.2023a,Seiringer.Yngvason.2020}).
\begin{defn}\label{defn.scat.fun}
Let $\varphi_0$ be the solution of the  $p$-\emph{wave scattering equation}
\begin{equation}\label{eqn.scat}
x \Delta \varphi_0 + 2 \nabla \varphi_0 + \frac{1}{2} x V (1-\varphi_0) = 0
\end{equation}
on $\R^3$, 
with $\varphi_0(x) \to 0 $ for $|x|\to \infty$. Then $\varphi_0(x) = a^3/|x|^3$ for $x\notin \supp V$
for some constant $a$ called the $p$-\emph{wave scattering length}.
\end{defn}

With these definitions we can formulate our main theorem:
\begin{thm}\label{thm.main}
Let $V\in L^1$ 
be non-negative, radial and compactly supported. 
Then for $ak_F$ small enough and $N$ large enough we have 
\begin{equation*}
\frac{E_N}{N} \geq  k_F^2\left[\frac{3}{5} + \frac{2}{5\pi} a^3 k_F^3 + O( (ak_F)^{3+3/10} \abs{\log ak_F}) + O(N^{-1/3}) \right].
\end{equation*}
\end{thm}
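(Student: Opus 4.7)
The plan is to adapt the dressing-transformation strategy of \cite{Falconi.Giacomelli.ea.2021}, originally developed for the $q\geq 2$ case, to the considerably more delicate spinless setting, where the leading correction is of relative order $(ak_F)^3$ compared to the leading kinetic energy. The idea is to conjugate $H_N$ by a unitary $U = e^{B}$ whose generator $B$ implements the short-range two-body correlations dictated by the $p$-wave scattering equation \eqref{eqn.scat}, and then bound the transformed Hamiltonian from below by comparison with the free Fermi sea.

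Concretely, I would pass to fermionic Fock space and fix an intermediate length scale $b$ with $a\ll b \ll k_F^{-1}$. Let $\omega$ be a smooth truncation of $1-\varphi_0$ supported on the ball of radius $b$, and define the antisymmetric generator
\[
B \;=\; \tfrac{1}{2}\sum_{p,q,k}\widehat\omega(k)\, P_{>}(p+k)\,P_{>}(q-k)\,P_{<}(p)\,P_{<}(q)\, c^*_{p+k}c^*_{q-k}c_q c_p \;-\; \textnormal{h.c.},
\]
with $c^*,c$ fermionic creation/annihilation operators and $P_{<}, P_{>}$ the projections onto momenta inside and outside the Fermi ball, respectively. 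Expanding $U^* H_N U$ by Baker-Campbell-Hausdorff, the scattering equation \eqref{eqn.scat} is designed to produce the crucial cancellation between $[T,B]$ and the singular part of $\tfrac{1}{2}\sum_{i<j} V(x_i-x_j)$. What remains, in leading order, is an effective interaction whose expectation in the free Fermi sea yields precisely the contribution $\tfrac{2}{5\pi} N a^3 k_F^5$.

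The principal obstacle is to control all remainders to relative error $o((ak_F)^3)$, substantially sharper than what is needed in the spinful case. These remainders include higher-order terms in the BCH expansion, quartic-in-$B$ corrections, and contributions from modes outside a chosen high-momentum cutoff; each must be absorbed into a small fraction of the kinetic energy $T \geq \tfrac{3}{5} N k_F^2$ or of the positive potential energy $\tfrac{1}{2}\sum V(x_i-x_j)\geq 0$. The exponent $3+\tfrac{3}{10}$ in the error term reflects a multi-parameter optimization among the intermediate scale $b$, a high-momentum cutoff, and a threshold on "bad" particle configurations violating a priori density constraints. Two structural features should be exploited: the angular-momentum-$1$ nature of the $p$-wave channel, which makes $\omega$ vanish linearly at the origin and enhances several commutator bounds, and the Pauli principle, which already suppresses close fermion encounters independently of the dressing. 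The $O(N^{-1/3})$ error reflects finite-size corrections from comparing the continuum Fermi ball with its discrete counterpart on the periodic box. I expect the hardest single step to be showing that the quartic-in-$B$ term in $U^* H_N U$, which a priori couples four particles with all four momenta near $k_F$, is genuinely smaller than $N a^3 k_F^5$; this is where the angular-momentum-$1$ decay of $\widehat\omega$ together with the phase-space restrictions from the projections $P_{>},P_{<}$ must be used in an essential way.
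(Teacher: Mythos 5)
Your proposal follows the same dressing-transformation strategy that the paper implements, so at the level of architecture you are on the right track: conjugate by $e^B$ with a quartic pair-excitation generator built from the $p$-wave scattering solution, use the scattering equation to cancel the most singular piece of the interaction against $[\ud\Gamma(-\Delta),B]$, and control remainders by absorbing them into $\H_0\geq 0$ and $\Q_4\geq 0$. However, several of your concrete choices diverge from what actually closes the argument, in ways that would likely derail you if followed literally.

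First, the truncation of $\varphi_0$ in the paper is \emph{not} at an intermediate scale $b$ with $a\ll b\ll k_F^{-1}$: it is at the largest scale $\sim k_F^{-1}$, via $\varphi = \varphi_0\,\chi_\varphi(k_F|\cdot|)$ (necessary because $\varphi_0\sim a^3/|x|^3$ is borderline non-integrable). The free parameter that is ultimately optimized to produce the exponent $3+3/10$ is a \emph{momentum} cutoff $\alpha$, defining a number operator $\mcN_{>\alpha}$ for modes with $|k|>k_F(ak_F)^{-\alpha}$, controlled via $\H_0$. Second, the proposal skips the particle-hole transformation $R$; the paper conjugates by $R$ first, works with $\H_0 = \sum_k||k|^2-k_F^2|a_k^*a_k$ and normal-ordered $\Q_2,\Q_4$, and then implements the heuristic BCH computation rigorously as a \emph{Duhamel} integral along $\xi_\lambda = e^{-\lambda B}R^*\psi$, which in turn requires propagating a priori bounds on $\mcN$, $\H_0$, $\Q_4$ in $\lambda$ (Grönwall). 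These a priori bounds rest on the positivity of $V$ and on the assumption that $\psi$ is an approximate ground state; without them, none of your remainder estimates can be closed. Third, your diagnosis of the "hardest single step" as a quartic-in-$B$ commutator does not match the paper: the dominant error is $\mcE_{\textnormal{scat}}$, coming from $[\H_0+\Q_4,B]+\Q_2$, which fails to vanish exactly because of (i) the truncation of $\varphi_0$, (ii) the regularization $b\mapsto b^r$ in $B$, and (iii) Taylor-remainder terms from replacing $c_y$ by $c_x + (y-x)\cdot\nabla c$. Controlling that requires a trace-class estimate of Birman--Solomjak type (the paper's Lemma on $\int F(x-y)a_y c_y\,dy$) that you do not mention and that is genuinely load-bearing: it is what lets one bound a single $b_y$ in operator norm by pairing with a $c_y$ and an appropriate $F$. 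Without this tool, or an equivalent replacement, the phase-space restrictions you invoke are not enough to reach the precision $o(Na^3k_F^5)$.
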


\begin{remark}\label{rmk.length=a.in.bdds}
The appearance of the scattering length $a$ in  the error term is for dimensional consistency. The error term $O( (ak_F)^{3+3/10} \abs{\log ak_F})$
really depends on the range $R_0$ of $V$ and on $\norm{V}_{L^1}$ (both of dimension length). 
We think of $a$ as a constant of dimension length and thus use the bounds 
\begin{equation*}
R_0\leq C a, \qquad 
\norm{V}_{L^1}\leq Ca 
\end{equation*}
with the constants $C$ then being  dimensionless.
\end{remark}

\begin{remark}[{Extension to less regular $V$}]
A posteriori we can extend \Cref{thm.main} to less regular $V$
(and, in particular, to the case of hard spheres where, formally, $V_{\textnormal{hs}}(x) = \infty$ for $|x|\leq R_0$ and $V_{\textnormal{hs}}(x) = 0$ otherwise). 
Indeed, any positive radial and compactly supported measurable function $V$ can be approximated from below by some (positive radial compactly supported) 
$\widetilde V \in L^1$.
Then we can apply the theorem for $\widetilde V$ and note that $E_N \geq \widetilde E_N$ with $\widetilde E_N$ the ground state energy with interaction $\widetilde V$.
The error bounds in the theorem, being dependent on $\Vert\widetilde V\Vert_{L^1}$, 
necessarily blow up when $\widetilde V$ converges to $V$.
However, choosing $\widetilde V$ to converge to $V$ slowly enough we may achieve $a(\widetilde V) = a(V)(1+ o(1))$ with the error-terms of \Cref{thm.main} still being small.
Then 
\begin{equation*}
\frac{E_N}{N} \geq k_F^2 \left[\frac{3}{5} + \frac{2}{5\pi} a^3 k_F^3 + o(a^3k_F^3) + O(N^{-1/3})\right].
\end{equation*}
In the same way, also the restriction to $V$ having compact support can be lifted. For finiteness of the $p$-wave scattering length it is only needed that $x \mapsto |x|^2 V(x)$ is integrable outside some ball. 
\end{remark}

While the main focus of this paper is a {\em lower} bound on the ground state energy, our method can also be applied to obtain a corresponding {\em upper} bound. In fact, we shall show the following.

\begin{prop}[Upper bound]\label{prop.upper.bdd}
Let $V\in L^1$ 
be positive, radial and compactly supported. 
Then for $ak_F$ small enough and $N$ large enough we have 
\begin{equation*}
\frac{E_N}{N} \leq  k_F^2\left[\frac{3}{5} + \frac{2}{5\pi} a^3 k_F^3 + O( a^4 k_F^4 \abs{\log ak_F}) + O(N^{-1/3}) \right].
\end{equation*}
\end{prop}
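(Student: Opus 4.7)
The plan is to take as variational trial state $\Psi_{\mathrm{trial}} = U \ket{\Phi_0}$, where $\ket{\Phi_0}$ is the Slater determinant built from the $N$ plane waves $\{L^{-3/2} e^{i k \cdot x}: k \in \tfrac{2\pi}{L}\Z^3,\ |k| \leq k_F\}$ (with $k_F$ tuned to fit exactly $N$ momenta, a correction of order $L^{-1}$ relative to the continuum value), and $U = e^B$ is the unitary transformation already constructed for the proof of \Cref{thm.main}. Since $U$ is unitary and $\ket{\Phi_0}$ is normalized,
\begin{equation*}
E_N/N \leq N^{-1} \longip{\Phi_0}{U^* H_N U}{\Phi_0},
\end{equation*}
so it suffices to evaluate the conjugated Hamiltonian on a single explicit quasi-free state. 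The appeal of recycling $U$ is that the machinery already developed for the lower bound---the Baker--Campbell--Hausdorff expansion of $U^* H_N U$, the commutator bounds, and the truncation of $\varphi_0$ at an intermediate length scale $b$ with $a \ll b \ll k_F^{-1}$---transfers directly.

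The design of $B$ through the $p$-wave scattering solution $\varphi_0$ of \Cref{defn.scat.fun} ensures that the two-body part of $U^* H_N U$ produces, at leading order, the renormalized pair interaction $V(1-\varphi_0)$. Evaluated on $\ket{\Phi_0}$ this reduces by Wick's theorem to a two-body integral against the free pair correlation $\rho_2^{\Phi_0}(r) = \rho^2 - |\gamma_1^{\Phi_0}(r)|^2$, where $\gamma_1^{\Phi_0}(r) = \int_{|k|\leq k_F} e^{i k \cdot r}\,\ud k / (2\pi)^3$ is the one-body density matrix kernel of the Fermi sea. The Pauli principle forces $\gamma_1^{\Phi_0}(0) = \rho$ and $\partial_i \partial_j \gamma_1^{\Phi_0}(0) = -\delta_{ij}\, k_F^5/(30\pi^2)$, giving $\rho_2^{\Phi_0}(r) = \rho k_F^5 |r|^2/(30\pi^2) + O(\rho^2 (k_F |r|)^4)$ at short distances, while taking the dot product of \Cref{eqn.scat} with $x$, integrating, and using the boundary behaviour $\varphi_0(x) = a^3/|x|^3$ outside $\supp V$ to evaluate the surface terms yields the identity
\begin{equation*}
\int V(x)\bigl(1-\varphi_0(x)\bigr)\, |x|^2 \ud x = 24\pi a^3.
\end{equation*}
Combined, these two facts produce the leading interaction correction $\tfrac{2}{5\pi} a^3 k_F^5$ per particle, while the kinetic piece $\longip{\Phi_0}{\sum_j (-\Delta_{x_j})}{\Phi_0}$ contributes the standard free Fermi gas energy $\tfrac{3}{5} N k_F^2 (1 + O(N^{-1/3}))$.

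What remains is to bound the subleading contributions: the cubic and higher commutators in the Baker--Campbell--Hausdorff expansion of $U^* H_N U$, the truncation error from cutting $\varphi_0$ at scale $b$, and the error from replacing $\rho_2^{\Phi_0}$ by its short-distance expansion. Each should contribute at most $O(a^4 k_F^4 \abs{\log a k_F})$ per particle, the logarithm originating from the optimization in $b$. This step should be substantially easier than in the lower bound because all expectations are now taken on the single quasi-free state $\ket{\Phi_0}$, where every non-local operator built from $B$ collapses via Wick contractions into explicit integrals of products of the sine kernel $\gamma_1^{\Phi_0}$. The main technical obstacle I anticipate is ensuring that the double commutator $\tfrac{1}{2}[B,[B,H_N]]$, which houses the three-body terms responsible for most of the work in the lower bound, is genuinely subleading here; the explicit Pauli cancellations available on $\ket{\Phi_0}$ should make this tractable.
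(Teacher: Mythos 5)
Your proposal takes essentially the same route as the paper: the trial state $U\ket{\Phi_0}=RT\Omega$ (so that $\xi_1=T^{-1}R^*\psi=\Omega$), the identity $\int V(1-\varphi_0)|x|^2\ud x=24\pi a^3$, the short-distance expansion of the free pair correlation giving the $\tfrac{2}{5\pi}Na^3k_F^5$ term, and reuse of the commutator machinery from the lower bound. One small correction to your final paragraph: the error terms are not literally evaluated by Wick contraction on the quasi-free state $\Phi_0$ — the Duhamel formula forces expectations in the interpolating (non-quasi-free) states $\xi_\lambda=T^{\lambda-1}\Omega$, and the improvement over the lower-bound error comes from propagating the a priori bounds \emph{backwards} from $\xi_1=\Omega$ (where $\mcN,\H_0,\Q_4$ all vanish), which yields $\expect{\mcN}_{\xi_\lambda}\lesssim N(ak_F)^5$ instead of $N(ak_F)^{3/2}$; it is this sharper bound, not an explicit Pauli cancellation in $[B,[B,H_N]]$, that produces the $O(a^4k_F^4|\log ak_F|)$ rate.
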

We remark, however, that in \cite[Theorem 1.3]{Lauritsen.Seiringer.2024}, using a very different method,  a significantly stronger upper bound was shown (capturing also the  next term of order $(a k_F)^5$) under weaker assumptions on the interaction $V$ (in particular, allowing also for hard spheres).

In the proof of \Cref{thm.main} and \Cref{prop.upper.bdd} we will consider particle numbers $N$ arising from a ``filled Fermi ball''.
This is done for convenience. 
We shall discuss  in \Cref{rem.choice.N=B_F} below why this is in fact not a restriction 
on $N$ to the precision given in \Cref{thm.main}.
Concretely this means the following.
\begin{defn}
For $k_F>0$ the \emph{Fermi ball} is defined as $B_F = \{k\in \frac{2\pi}{L}\Z^3 : |k|\leq k_F\}$.
We then take $N = \# B_F$.
\end{defn}

In this case there are two variables, which we are free to choose:
The side-length of the box $L$ and the Fermi momentum $k_F$.
The particle number $N$ and density $\rho = N/L^3$ then depend on the values of $L, k_F$.
Any constraint on $N$ (that it is sufficiently large, say) 
should thus more precisely be written as a constraint on $k_FL \sim N^{1/3}$. 
Counting the number of lattice points inside a ball of a given radius we see that $\rho = N/L^3$ and $k_F$ are related by 
$k_F = (6\pi^2\rho)^{1/3}(1 + O(N^{-1/3}))$.

\begin{remark}\label{rem.choice.N=B_F}
The choice of $N = \# B_F$ puts a restriction on which values $N$ may assume ---
not all integers arise as $\# B_F$ for some $L$ and $k_F$.
To the precision given in \Cref{thm.main} and \Cref{prop.upper.bdd}, however, 
it suffices to consider $N$ arising as $N=\# B_F$. 
To see this, assume that \Cref{thm.main} and \Cref{prop.upper.bdd} hold for integers $N$ arising as $N=\# B_F$.

For a general integer $N$ and length $L$ define $k_F^<$ and $k_F^>$ as the largest, 
respectively smallest, $k_F$ such that 
$N^< = \# B_F^< \leq N \leq \# B_F^> = N^>$ with $B_F^<, B_F^>$ defined using the Fermi momenta $k_F^<$ and $k_F^>$.
Then $N^<, N^> = N + O(N^{2/3})$ since $k_F^< + \frac{2\pi}{L} \geq k_F^>$.
Moreover, $k_F^<,k_F^> = (6\pi^2\rho)^{1/3}(1+ O(N^{-1/3}))$.
We may apply \Cref{thm.main} and \Cref{prop.upper.bdd} for particle numbers $N^>$ and $N^<$. 
By positivity of the interaction we have $E_{N^<} \leq E_N \leq E_{N^>}$.
Thus,
\begin{equation*}
\begin{aligned}
  \frac{E_N}{N}
    \geq \frac{E_{N^<}}{N^<} ( 1 + O(N^{-1/3}))
    & 
    \geq (k_F^<)^2\left[\frac{3}{5} + \frac{2}{5\pi} a^3 (k_F^<)^3 + O( (ak_F^<)^{3+3/10} \abs{\log ak_F^<}) + O(N^{-1/3}) \right]
    \\
    & = k_F^2\left[\frac{3}{5} + \frac{2}{5\pi} a^3 k_F^3 + O( (ak_F)^{3+3/10} \abs{\log ak_F}) + O(N^{-1/3}) \right]
\end{aligned}
\end{equation*}
and similarly for the upper bound. 
\end{remark}

\subsubsection{Two dimensions}
We consider next the analogous problem in two dimensions. 
The scattering length is defined as in \Cref{defn.scat.fun}, with the only difference that now 
 one has $\varphi_0(x) = a^2/|x|^2$ outside the support of $V$.
The two-dimensional analogue of \Cref{thm.main} is as follows.

\begin{thm}[Two dimensions]
\label{thm.main.d=2}
Let $V\in L^1$ 
be positive, radial and compactly supported. 
Then for $ak_F$ small enough and $N$ large enough we have 
\begin{equation*}
\frac{E_N}{N} \geq  k_F^2\left[\frac{1}{2} + \frac{1}{4} a^2 k_F^2 + O( (ak_F)^{2+1/4} \abs{\log ak_F}) + O(N^{-1/2}) \right].
\end{equation*}
\end{thm}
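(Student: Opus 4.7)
The plan is to adapt the three-dimensional proof of \Cref{thm.main} to the two-dimensional setting, preserving the overall structure of the argument and adjusting only the dimension-dependent constants and the scattering function asymptotics. In $d=2$ the $p$-wave scattering function satisfies $\varphi_0(x)=a^2/|x|^2$ outside the support of $V$, and integrating the scattering equation \Cref{eqn.scat} against $|x|^2$ gives the identity $\int_{\R^2}|x|^2V(1-\varphi_0)=8\pi a^2$. Combined with the relation $k_F^2=4\pi\rho$ and the free kinetic energy per particle $\tfrac12 k_F^2$, this is what produces the prefactor $\tfrac14$ in the claimed bound.

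The proof proceeds as in three dimensions. First I would localize the problem to boxes of intermediate side length $\ell$ with $a\ll\ell\ll k_F^{-1}$ by Neumann bracketing of the kinetic energy, so that each box carries an approximately constant density, and set aside a small fraction $\kappa$ of the kinetic energy as an error reservoir. Using the scattering equation, which has the same form in any dimension, I would apply the fermionic Dyson-type lemma developed in the three-dimensional proof to replace $V$ by a softer effective potential whose relevant spatial integral equals the scattering constant above, at a controlled cost in the reserved kinetic energy. Next, the same fermionic Bogoliubov-type unitary transformation, built from the two-particle correlation $1-\varphi_0$, is applied to remove the pair creation and annihilation terms out of the Fermi ball $B_F$. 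Evaluating the transformed quadratic Hamiltonian in the free Fermi state yields the leading contribution, a constant multiple of $\rho\int|x|^2V(1-\varphi_0)$ per particle, which with the 2D identity above evaluates to exactly $\tfrac14 a^2k_F^2$. The non-quadratic remainders are controlled by the reserved kinetic energy using the same second-order estimates as in the 3D argument, and the reduction of a general particle number to $N=\#B_F$ is handled exactly as in \Cref{rem.choice.N=B_F}.

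The main obstacle is the re-optimization of the error exponent. In two dimensions the momentum-space volume factors scale as $k_F^2$ rather than $k_F^3$, and $1-\varphi_0(x)$ decays only as $|x|^{-2}$ at infinity (versus $|x|^{-3}$), so several integrals that were convergent in three dimensions become logarithmically divergent at the critical scale $\ell\sim k_F^{-1}$; this is the origin of the explicit $|\log ak_F|$ in the statement. Balancing the contributions from (i) the localization length $\ell$, (ii) the Dyson softening radius, (iii) the high-momentum cutoff used to bound the cubic and quartic remainders of the Bogoliubov rotation, and (iv) the kinetic-energy reservoir $\kappa$ produces the exponent $2+\tfrac14$ in place of the 3D exponent $3+\tfrac{3}{10}$. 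No essentially new idea beyond the three-dimensional proof is required, but this error bookkeeping has to be redone dimension by dimension, and checking that no single step produces a worse error is the only genuinely tedious part of the argument.
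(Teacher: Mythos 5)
Your proposal describes a proof strategy that does not correspond to the one actually used in this paper, either in the three-dimensional case or in the two-dimensional adaptation. You state that the $d=3$ argument localizes by Neumann bracketing to boxes of intermediate side length $\ell$, applies a fermionic Dyson-type lemma to soften the potential, and only then performs a Bogoliubov-type rotation. None of the first two steps appear in the paper: there is no localization, no Dyson lemma, and no potential-softening step. The paper works directly on the full periodic box $\Lambda=[-L/2,L/2]^d$ with $N=\# B_F$, conjugates the second-quantized Hamiltonian by the particle-hole transformation $R$ and the unitary $T=e^B$ of \Cref{eqn.def.B}, and controls the difference $\expect{H_N}_\psi - E_F - \expect{\ud\Gamma(V(1-\varphi))}_F$ through the Duhamel identity leading to \Cref{eqn.main}, i.e.\ through the error functionals $\mcE_V,\mcE_{\Q_2},\mcE_{\textnormal{scat}}$. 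The $d=2$ proof then consists of proving the two-dimensional analogues of the commutator estimates in \Cref{lemma.bdd.list.operators} (namely \Cref{lemma.bdd.list.operators.d=2}), using the $d=2$ norms of $\varphi$ and the $d=2$ version of \Cref{lem.bdd.F*ac.b(F)}, together with the a priori bounds $\expect{\mcN}_{\xi_0}\lesssim Na k_F$, $\expect{\H_0}_{\xi_0},\expect{\Q_4}_{\xi_0}\lesssim N a^2 k_F^4$, followed by propagation and optimization over $\alpha$ (ending at $\alpha=1/4$, which is where the $(ak_F)^{2+1/4}$ exponent comes from).

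Beyond the mischaracterization, the localization/Dyson route is unlikely to reach the required precision here. The correction sought is of order $Nk_F^2(ak_F)^{d+1}$ (relative size $(ak_F)^{d+1}$), which is far smaller than the $Na_s\rho$ correction in the $q\ge 2$ case where bracketing plus Dyson lemma suffice. Neumann bracketing introduces boundary errors in the free kinetic energy that already compete with or overwhelm the target term, and a Dyson-type softening of $V$ typically costs kinetic energy at a scale that is acceptable at leading order but not at this refined order. This is precisely why the paper adopts the Falconi--Giacomelli--Hainzl--Porta-type transformation on the full box rather than the Lieb--Seiringer--Solovej-style localization. Your identification of the $8\pi a^2$ integral identity and the origin of the $|\log a k_F|$ (slower decay of $\varphi_0\sim a^2/|x|^2$) is correct, but it is attached to the wrong scaffolding, and the central content of the appendix --- re-deriving each commutator bound of \Cref{lemma.bdd.list.operators.d=2} with the $d=2$ power counting and showing the optimization yields $\alpha=1/4$ --- is not addressed.
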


\begin{remark}
\Cref{thm.main.d=2} matches the upper bound of \cite[Theorem 1.10]{Lauritsen.Seiringer.2024} to order $N a^2 k_F^4$. 
Indeed, $k_F$ and $\rho$ are related by $k_F = (4\pi \rho)^{1/2}(1 + O(N^{-1/2}))$.
\end{remark}

We sketch in \Cref{sec.d=2} how to adapt the proof of \Cref{thm.main} to the two-dimensional setting. 

\subsection{Second quantization}
An important step in the proof is to write the Hamiltonian in second quantization 
and analyse it using an appropriate unitary transformation. The choice of the unitary can be motivated by 
a suitable version of second order perturbation theory, as we shall discuss in the next subsection.

We will here only briefly describe the central concepts of second quantization.
A detailed introduction to second quantization in general can be found in \cite{solovej.qm.lecture}.
Moreover, the specific case of a Fermi gas in a periodic box $\Lambda = [-L/2,L/2]^3$ is discussed in detail in \cite{Falconi.Giacomelli.ea.2021,Giacomelli.2023},
see also \cite{Giacomelli.2023a}.

Since we consider the box $\Lambda$ with periodic boundary conditions, a natural basis for the one-body space $L^2(\Lambda)$ is  
given by the plane waves $f_k(x) = L^{-3/2} e^{ikx}$ with momenta $k\in \frac{2\pi}{L}\Z^3$.
We will denote the creation and annihilation operators in the state $f_k$ by $a_k^* = a^*(f_k)$ and $a_k = a(f_k)$, respectively.
They satisfy the canonical anti-commutation relations 
$\{a_k^*, a_{k'}\} = \delta_{k,k'}$ and $\{a_k, a_{k'}\} = \{a_k^*, a_{k'}^*\} = 0$.
In second quantization the Hamiltonian is then given by 
\begin{equation}\label{eqn.def.H.2nd.quant}
\mcH 
= \ud \Gamma(-\Delta) + \ud \Gamma(V)
= \sum_k |k|^2 a_k^* a_k + \frac{1}{2L^3} \sum_{p,k,k'} \hat V(p) a_{k+p}^* a_{k'-p}^* a_{k'} a_k
\end{equation}
and defined on the Fock space $\mcF = \bigoplus_{n=0}^\infty L^2_a(\Lambda^n) = \bigoplus_{n=0}^\infty \bigwedge^n L^2(\Lambda)$.
Here we have adopted the notation
\begin{notation}
The Fourier transform of a function $g$ (on $\Lambda$) is given by $\hat g(k) = \int g(x) e^{-ikx} \ud x$.
\end{notation}

\begin{notation}
For any sum the variables are summed over $\frac{2\pi}{L}\Z^3$ unless otherwise noted. 
That is, $\sum_k = \sum_{k\in \frac{2\pi}{L}\Z^3}$.
\end{notation}  
We will consider $N$-particle states $\psi$, meaning that $\mcN \psi = N \psi$ with 
$\mcN = \sum_k a_k^* a_k$ the number operator.  To extract the leading contribution to the ground state energy it is convenient to introduce the particle-hole transformation $R$, 
satisfying
\begin{equation}\label{def:ph}
R^* a_k R = 
\begin{cases}
a_k^* & k \in B_F, \\ a_k & k \notin B_F,
\end{cases}
\qquad 
R \Omega = \left[\prod_{k\in B_F} a_k^*\right] \Omega = \psi_F.
\end{equation}
Here $\Omega$ denotes the vacuum and the free Fermi state $\psi_F$ is given by 
\begin{equation*}
\psi_F(x_1,\ldots,x_N) = \frac{1}{\sqrt{N!}} \det \left[f_k(x_j)\right]_{\substack{k \in B_F \\ 1\leq j \leq N}},
\end{equation*}
with $N = \# B_F$. 
The state $\psi_F$ is the ground state of the corresponding non-interacting system. 
For later use we introduce the following convenient notation. 

\begin{notation}
The expectation of an operator $A$ in the free Fermi state $\psi_F$ is denoted by $\expect{A}_F$.
\end{notation}

It will prove helpful to distinguish between creation and annihilation operators 
for momenta inside and outside the Fermi ball.  
Moreover, it will sometimes be convenient  to consider operators written in configuration space, 
i.e., using the operator-valued distributions $a_x, a_x^*$ given by $a_x = L^{-3/2}\sum_k e^{ikx} a_k$. 
For instance we have 
\begin{equation*}
\ud \Gamma(V) = \frac{1}{2} \iint V(x-y) a_x^* a_y^* a_y a_x \ud x \ud y.
\end{equation*}
Also for the operator-valued distributions $a_x,a_x^*$ we wish to 
be able to distinguish whether they arise from particles inside or outside the Fermi ball. 
This leads to the following definition.

\begin{defn}
Define
\begin{equation*}
c_k = a_k \chi_{(k\in B_F)},
\qquad 
b_k = a_k \chi_{(k\notin B_F)},
\end{equation*}
with $\chi$ denoting the characteristic function.
Define further the operator-valued distributions 
\begin{equation}
b_x = \frac{1}{L^{3/2}} \sum_{k} e^{ikx} b_k = \frac{1}{L^{3/2}} \sum_{k\notin B_F} e^{ikx} a_k,
\qquad 
c_x = \frac{1}{L^{3/2}} \sum_{k} e^{-ikx} c_k = \frac{1}{L^{3/2}} \sum_{k\in B_F} e^{-ikx} a_k.
\label{eqn.def.bx.cx}
\end{equation}
\end{defn}
Note the different choice of signs in the exponents. This is done for convenience, so that the particle-hole transformation satisfies
\begin{equation}\label{eqn.prop.particle-hole.R}
R^* a_x R = b_x + c_x^*.
\end{equation}
Note that $c_x$ and $c_x^*$ are in fact bounded operators with $\norm{c_x}=\norm{c_x^*}\leq C k_F^{3/2}$.
Moreover, since their supports (in momentum-space) are disjoint we have that $b_x$ and $c_y^*$ anti-commute.

For later use we define the operators $u,v$ as follows.
\begin{defn}\label{def.u.v.proj}
Define the operators $u,v$ as the projection outside and inside the Fermi ball, respectively. 
That is, their kernels are given by (with a slight abuse of notation)
\begin{equation*}
\begin{aligned}
v(x;y) & = v(x-y) = \frac{1}{L^3}\sum_k \hat v(k) e^{ikx} = \frac{1}{L^3} \sum_{k \in B_F} e^{ikx},
\\
u(x;y) & = u(x-y) = \frac{1}{L^3}\sum_k \hat u(k) e^{ikx} = \frac{1}{L^3} \sum_{k \notin B_F} e^{ikx} = \delta(x-y) - v(x-y).
\end{aligned}
\end{equation*}
\end{defn}

\begin{remark}
The operator-valued distributions $b_x$ and $c_x$ are denoted $a(u_x) = a(u(\cdot;x))$ and $a(\overline{v}_x) = a(\overline{v}(\cdot;x))$, 
respectively, in \cite{Falconi.Giacomelli.ea.2021,Giacomelli.2023}.
The different signs in the exponents in \Cref{eqn.def.bx.cx} above reflect the $\overline{\phantom{v}}$ in $a(\overline{v}_x)$.
\end{remark}

\subsection{Heuristics}
Conceptually, the strategy of the proof of \Cref{thm.main} can be motivated via (a suitable version of) second order perturbation theory, 
where one treats only the ``off-diagonal'' parts of the interaction 
as the perturbation and includes the ``diagonal'' part in the unperturbed operator. This is similar to the method of \cite{Giacomelli.2023,Falconi.Giacomelli.ea.2021}, which itself is inspired by Bogoliubov transformations in the case of bosons; the latter effectively appear as pairs of fermions with opposite spin. This (bosonic) picture does not apply in the spinless case, but the method can be applied nonetheless, as we shall show below.

Before we explain this is more details we first recall second order perturbation theory in general.

\subsubsection{General second order perturbation theory}
Consider a generic perturbed Hamiltonian $H = H_0 + \lambda V$ and write $V = V_{\textnormal{D}} + V_{\textnormal{OD}}$
with $V_{\textnormal{D}}$ the part of $V$ diagonal in a basis where $H_0$ is diagonal. 
We use a formulation of second order perturbation theory using a unitary operator $e^{\lambda  B}$ with $ B$ chosen appropriately
such that $e^{-\lambda  B}H e^{\lambda  B}$ is (approximately) diagonal. 
Using the Baker--Campbell--Hausdorff formula
$e^{-X}Y e^X \approx Y + [Y,X] + \frac{1}{2}[ [Y, X], X] + \ldots$
we have 
\begin{equation*}
\begin{aligned}
e^{-\lambda  B}H e^{\lambda  B}
  & = H_0 + \lambda \left([H_0, B] + V\right) + \lambda^2 \left(\frac{1}{2} [ [H_0,  B],  B] + [V, B]\right)  
    + O(\lambda^3)
  \\
  & = H_0 + \lambda \left([H_0 + \lambda V_{\textnormal{D}}, B] + V\right) 
    + \lambda^2 \left(\frac{1}{2} [ [H_0+ \lambda V_{\textnormal{D}},  B],  B] + [V_{\textnormal{OD}}, B]\right)  
    + O(\lambda^3)
\end{aligned}
\end{equation*}
Since $H_0$ and $V_{\textnormal{D}}$ are both diagonal operators, $[H_0 + \lambda V_{\textnormal{D}}, B]$ is off-diagonal for any $ B$.
We  choose $ B$ to (approximately) cancel the off-diagonal part of $V$, i.e., such that $[H_0 + \lambda V_{\textnormal{D}}, B] \approx - V_{\textnormal{OD}}$.
Then 
\begin{equation}\label{eqn.2nd.order.perturbation}
e^{-\lambda  B}H e^{\lambda  B} 
  \approx H_0 + \lambda V_{\textnormal{D}} + \frac{1}{2} \lambda^2 [V_{\textnormal{OD}}, B]
\end{equation}
is  diagonal to order $\lambda$. 
With $e_n^{(0)} = \longip{n}{H_0 + \lambda V_{\textnormal{D}}}{n}$ the eigenvalues of $H_0 + \lambda V_{\textnormal{D}}$ the equation for $B$ reads 
$\longip{n}{B}{m}(e_n^{(0)}  -  e_{m}^{(0)} )= \longip{n}{V}{m}$.
Thus, we find the well-known formula for the  eigenvalues $e_n$ of $H$
as the diagonal matrix-elements of $e^{-\lambda  B}H e^{\lambda  B}$ 
\begin{equation*}
e_n \approx e_n^{(0)} + \lambda^2 \sum_{m:m\ne n} \frac{\abs{\longip{n}{V}{m}}^2}{e_n^{(0)} - e_{m}^{(0)}}
\approx
\longip{n}{H_0}{n} + \lambda \longip{n}{V}{n} + \lambda^2 \sum_{m:m\ne n} \frac{\abs{\longip{n}{V}{m}}^2}{\longip{n}{H_0}{n} - \longip{m}{H_0}{m}}
\end{equation*}
valid to order $\lambda^2$.

If one is only interested in the ground state energy, one can simplify the computation above. 
Decompose the Hilbert space as $\tspan \{\ket{0}\}\oplus\tspan \{\ket{0}\}^\perp$ with $\ket{0}$ being the ground state of $H_0$.
We then instead take $V_\textnormal{D}$ as the part of $V$ \emph{block} diagonal in this decomposition and 
choose $B$ to only cancel the \emph{block} off-diagonal part of $V$.
(This is  in general a considerably simpler choice of $B$.)
The formula in \Cref{eqn.2nd.order.perturbation} then still applies, being however only (approximately) \emph{block} diagonal,
and we find the ground state energy as 
\begin{equation}\label{2op}
e_0 \approx e_0^{(0)} + \lambda^2 \sum_{m\ne 0} \frac{\abs{\longip{0}{V}{m}}^2}{e_0^{(0)} - e_{m}^{(0)}}.
\end{equation}

In our case of interest, there is no small coupling parameter $\lambda$. Nevertheless, the relevance of \eqref{2op} is suggested by the following observation. If one integrates \Cref{eqn.scat} against $x$, one finds via an integration by parts that 
\begin{equation}\label{a3}
12 \pi a^3 = \frac 12 \int V(x) |x|^2 (1-\varphi_0(x))\ud x .
\end{equation}
\Cref{eqn.scat} can be rewritten as $(-x\Delta -2 \nabla + \frac 12 x V)\varphi_0 = \frac 12 xV$, hence \eqref{a3} is reminiscent of \eqref{2op}, only that it is actually exact!

\subsubsection{Application to the many-body setting}
Our goal is to apply the formula in \Cref{eqn.2nd.order.perturbation} to the Hamiltonian $H_N$,  with $\lambda = 1$. 
This is not a small parameter ---  the potential $V$ is not assumed to be small. 
For \Cref{eqn.2nd.order.perturbation} to still be a good approximation, it is essential that the ``diagonal'' part of $V$ 
is taken as part of the unperturbed Hamiltonian.
Indeed, the formula in \Cref{thm.main} is non-perturbative and \emph{does not} arise from any finite order ``standard'' perturbation theory,
where all of $V$ is taken as the perturbation.

First, we need to define what we mean by ``diagonal'' and ``off-diagonal''.
Recall from \Cref{def.u.v.proj} that $v$ and $u=\mathbbm{1}-v$ are the projections inside, respectively outside, the Fermi ball. 
For the two-body operator $V$ we then  have 
\begin{equation*}
V \approx V_{\textnormal{D}} + V_{\textnormal{OD}},
\qquad 
V_{\textnormal{D}} = vvVvv + uuVuu,
\qquad 
 V_{\textnormal{OD}} = vvVuu + uuVvv.
\end{equation*}
(By an expression like $vv$ we mean the two-body operator $v\otimes v$.)
Here we have neglected terms with a factor $uv$ or $vu$. These will turn out to be small. 
The ground state of the free Hamiltonian $\sum_{j=1}^N -\Delta_{x_j}$ is the free Fermi state $\psi_F$. 
Hence the calculation above suggests that the ground state of $H_N$ is roughly $e^{\widetilde B}\psi_F$
with $\widetilde B$ the analogue of the operator $B$ from above.

To compute $\longip{e^{\widetilde B} \psi_F}{H_N}{e^{\widetilde B} \psi_F}$
it is convenient to first conjugate the Hamiltonian $H_N$ by the particle-hole transformation $R$, defined in \Cref{def:ph} above. 
Define also $B = R^* \widetilde B R$.
Then the ground state of $H_N$ is expected to be roughly $e^{\widetilde B}\psi_F = R T \Omega$ with $T=e^B$.
Our choice of $B$ will be of the form 
\begin{equation*}
B = \frac{1}{2L^3} \sum_{p,k,k'} \hat\varphi(p) b_{k+p}b_{k'-p}c_{k'} c_k - \hc
  = \frac{1}{2}\iint \varphi(z-z') b_z b_{z'} c_{z'} c_z \ud z \ud z' - \hc
\end{equation*}
for $\varphi \approx \varphi_0$. 
This particular choice of $B$ will be justified by the validity of \Cref{eqn.aprrox.scat.eqn.H0+Q} below.

\begin{remark}
The form of the operator $B$ is (up to the spin dependence) 
the same as for the operator $B - B^*$ considered in \cite{Falconi.Giacomelli.ea.2021,Giacomelli.2023}.
Furthermore, we note that $\widetilde B$ is closely related to the \emph{transfer operator} $M_2$ considered in \cite{Amusia.Efimov.1968}.
In \cite{Amusia.Efimov.1968} it is claimed that the ground state is (up to normalization) approximately given by $(1+M_2)\psi_F$. 
Pretending that $\widetilde B \sim M_2$ is small we find that the ground state is roughly $e^{\widetilde B} \psi_F = RT\Omega$ as claimed.
\end{remark}

To compute the expectation of $H_N$ in the state $\psi \approx RT\Omega$ 
we first compute the conjugation of the Hamiltonian $\mcH$ by the particle-hole transformation $R$.
(Recall that $\mcH$, defined in \Cref{eqn.def.H.2nd.quant}, is the second quantized analogue of $H_N$.) 
To calculate $R^*\mcH R$ we use \Cref{eqn.prop.particle-hole.R} and normal order. 
This is a straightforward calculation which we omit. 
The details of the calculation can for instance be found in \cite[Proposition 4.1]{Hainzl.Porta.ea.2020}.
We have for the kinetic energy 
(on the space of states of the form $R^*\psi$ with $\psi$ an $N$-particle state, see \cite[Proposition 4.1]{Hainzl.Porta.ea.2020})
\begin{equation}\label{eqn.KE.particle-hole}
R^*\ud\Gamma(-\Delta) R = E_F + \H_0,
\qquad 
E_F = \sum_{k\in B_F} |k|^2,
\qquad 
\H_0 = \sum_k \abs{|k|^2 - k_F^2} a_k^* a_k
\end{equation}
and for the interaction 
(in the second $\approx$ by neglecting the quadratic terms)
\begin{equation*}
R^* \!\ud \Gamma(V) R 
	\approx R^* \!\ud \Gamma (V_{\textnormal{D}} + V_{\textnormal{OD}}) R
  \approx \expect{\!\ud \Gamma (V)}_F + \Q_2 + \Q_4
\end{equation*}
with 
$\Q_2$ and $\Q_4$ given by 
\begin{align}
\Q_2
  & = \frac{1}{2L^3} \sum_{p,k,k'} \hat V(p) b_{k+p}^* b_{k'-p}^* c_{k'}^* c_k^* + \hc 
  \hspace*{-2em}
  &&  = \frac{1}{2} \iint V(x-y) b_x^* b_y^* c_y^* c_x^* \ud x \ud y + \hc ,
  \label{eqn.def.Q2}
  \\
\Q_4 
  & = \frac{1}{2L^3} \sum_{p,k,k'} \hat V(p) b_{k+p}^* b_{k'-p}^* b_{k'} b_k
  \hspace*{-2em}
  &&  = \frac{1}{2} \iint V(x-y) b_x^* b_y^* b_y b_x \ud x \ud y .
  \label{eqn.def.Q4}
\end{align}
Here 
$\expect{\!\ud\Gamma(V)}_F + \Q_4$ is the operator corresponding to the ``diagonal'' part $vvVvv + uuVuu$
and 
$\Q_2$ is the operator corresponding to the ``off-diagonal'' part $vvVuu + uuVvv$.

\begin{remark}[{Comparison to \cite{Falconi.Giacomelli.ea.2021,Giacomelli.2023}}]
The analogues of the operators $\Q_2$ and $\Q_4$ above are denoted by $\Q_4$ and $\Q_1$, respectively, in \cite{Falconi.Giacomelli.ea.2021,Giacomelli.2023}.
\end{remark}

The calculation of general second order perturbation theory above then translates to the calculation 
\begin{equation}\label{eqn.BCH.heuristic}
\begin{aligned}
T^*(\H_0 + \Q_2 + \Q_4)T
 & \approx 
  \H_0 + \Q_4
  + [\H_0 + \Q_4,B] + \Q_2
  + \frac{1}{2}[[\H_0 + \Q_4, B], B] + [\Q_2, B].
\end{aligned}
\end{equation}
The operator $B$ is chosen such that 
\begin{equation}\label{eqn.aprrox.scat.eqn.H0+Q}
[\H_0 + \Q_4,B] + \Q_2 \approx 0,	
\end{equation}
which constrains the function $\varphi$ in the definition of $B$ to (roughly) satisfy the $p$-wave scattering equation \eqref{eqn.scat}. 
Then,
\begin{equation*}
\begin{aligned}
T^*(\H_0 + \Q_2 + \Q_4)T
  & \approx 
    \H_0 + \Q_4
  + \frac{1}{2}[\Q_2, B].
\end{aligned}
\end{equation*}
Next, one computes that 
$\expect{[\Q_2, B]}_\Omega \approx - 2\expect{\!\ud\Gamma(V\varphi)}_F$.
We then find 
\begin{equation*}
E_N \approx E_F + \expect{\H_0 + \Q_4}_\Omega + \expect{\!\ud\Gamma(V)}_F + \frac{1}{2}\expect{[\Q_2, B]}_\Omega
  \approx E_F + \expect{\!\ud\Gamma(V(1-\varphi))}_F.
\end{equation*}
Finally, we compute that 
$E_F \approx \frac{3}{5}N k_F^2$ and  
$\expect{\!\ud\Gamma(V(1-\varphi))}_F \approx \frac{2}{5\pi}N a^3 k_F^5$ (see \Cref{eqn.calc.<Vphi>.F} below).
This recovers the formula of \Cref{thm.main}.

\section{Rigorous Analysis}
In this section we shall describe how to rigorously implement the heuristic computation above in order to prove \Cref{thm.main}. 
A first step is the precise definition of the operators $B$ and $T$.

\subsection{Regularizing the operators} For technical reasons, we need to regularize the operator(-valued distribution)s $b_k$ and $b_x$.
We introduce  regularized $b_k$-operators using a smooth radial function $\hat u^r : \R^3 \to [0,1]$ with 
\begin{equation*}
\hat u^r(k) = \begin{cases}
0 & |k|\leq 2 k_F,
\\
1 & |k|\geq 3 k_F.
\end{cases}
\end{equation*}
Let 
\begin{equation}\label{def:br}
b_k^r = \hat u^r(k) a_k, \qquad b_x^r = \frac{1}{L^{3/2}} \sum_k e^{ikx} b_k^r.
\end{equation}
The function $\hat u^r$ defines an operator $u^r$ on $L^2(\Lambda)$ 
with kernel (with slight abuse of notation) $u^r(x;y) = u^r(x-y)$.
As an operator $0 \leq u^r \leq \mathbbm{1}$. Moreover, by proceeding as in \cite[Proposition 4.2]{Falconi.Giacomelli.ea.2021} one easily checks that $u^r - \delta \in L^1$. More precisely:

\begin{lemma}
\label{lem.bdd.vr.L1}
Write $u^r = \delta - v^r$. Then $\norm{v^r}_{L^1}\leq C$ for large $k_F L$.
\end{lemma}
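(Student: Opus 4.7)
The estimate is a simple application of Poisson summation: the kernel $v^r$ on the torus is the $L$-periodization of the inverse Fourier transform of $\hat v^r = 1-\hat u^r$ viewed as a Schwartz function on $\R^3$, and the triangle inequality then reduces matters to an $L^1(\R^3)$ bound that is uniform in the cutoff scale $k_F$.

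\emph{Step 1 (Periodization via Poisson summation).} Define $g\colon \R^3 \to \R$ by
\[ g(x) = \frac{1}{(2\pi)^3}\int_{\R^3} \hat v^r(k)\, e^{ikx}\ud k. \]
Since $\hat v^r$ is smooth and compactly supported, $g$ is Schwartz. Poisson summation yields
\[ v^r(x) = \frac{1}{L^3}\sum_{k\in \frac{2\pi}{L}\Z^3}\hat v^r(k)\, e^{ikx} = \sum_{n\in\Z^3} g(x+nL), \]
so by the triangle inequality and unfolding the sum,
\[ \norm{v^r}_{L^1(\Lambda)} \leq \int_\Lambda \sum_{n\in\Z^3}\abs{g(x+nL)}\ud x = \norm{g}_{L^1(\R^3)}. \]
It thus remains to bound $\norm{g}_{L^1(\R^3)}$ uniformly in $k_F$.

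\emph{Step 2 ($k_F$-independent $L^1$ estimate).} The natural choice is to take $\hat u^r$ as a fixed profile rescaled to the Fermi scale, $\hat u^r(k) = \hat u^{(1)}(k/k_F)$, with $\hat u^{(1)}$ a fixed smooth radial function that is $0$ on $\{\abs{k}\leq 2\}$ and $1$ on $\{\abs{k}\geq 3\}$. The substitution $k\mapsto k_F k$ in the defining integral gives
\[ g(x) = k_F^3\, g^{(1)}(k_F x), \qquad g^{(1)}(y) = \frac{1}{(2\pi)^3}\int_{\R^3} \bigl(1-\hat u^{(1)}(k)\bigr) e^{iky}\ud k, \]
and a change of variables then yields $\norm{g}_{L^1(\R^3)} = \norm{g^{(1)}}_{L^1(\R^3)}$. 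Since $g^{(1)}$ is a fixed Schwartz function, this is a finite constant independent of $k_F$, and combined with Step 1 gives $\norm{v^r}_{L^1(\Lambda)}\leq C$.

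\emph{What is hard?} Nothing in this lemma is truly hard; the argument is a short application of Poisson summation plus one scaling. The only point requiring care is the uniformity in $k_F$, which one secures either by the explicit rescaling above or, should $\hat u^r$ not be chosen in strict scaling form, by integrating by parts in the defining integral of $g$ to get the pointwise decay $\abs{g(x)} \leq C_N k_F^3 (1 + k_F\abs{x})^{-N}$ and integrating; both routes rely on the standard bound $\abs{\nabla^\alpha \hat u^r}\leq C_\alpha k_F^{-\abs{\alpha}}$. The hypothesis $k_F L$ large plays no role in Poisson summation (which is an exact identity); it enters only through the natural setup in which the Fermi ball contains many lattice points.
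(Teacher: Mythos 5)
Your proof is correct, and the approach (Poisson summation to pass from the torus kernel to a Schwartz function on $\R^3$, then a rescaling to get a $k_F$-independent $L^1$ constant) is the natural one. Note that the paper itself does not prove this lemma; it only cites Proposition~4.2 of the reference by Falconi, Giacomelli, Hainzl and Porta, so there is no in-house proof to compare against, but your route is certainly a clean and self-contained way to obtain it. Two small remarks: first, your argument is in fact unconditional --- Poisson summation is an exact identity and the rescaling is exact --- so the hypothesis ``$k_F L$ large'' plays no role, meaning you prove a (marginally) stronger statement than what is asserted; second, you correctly flag that the uniformity in $k_F$ rests on choosing $\hat u^r$ as a fixed profile rescaled by $k_F$ (equivalently, on the derivative bounds $|\nabla^\alpha \hat u^r|\lesssim k_F^{-|\alpha|}$), which is implicit in the paper's construction and should be made explicit if one wants the constant $C$ to be truly universal.
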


\subsection{The scattering function}
We shall now define the function $\varphi$ appearing in the operator $B$.
Since $\varphi_0$ is not integrable at infinity (it decays like $a^3/|x|^3$)
it  will be necessary to introduce a cut-off for technical reasons. 
Moreover, we need to periodize to define $\varphi$ on the torus.

Let $\chi_\varphi: [0,\infty) \to [0,1]$ be a smooth function with 
\begin{equation*}
\chi_\varphi (t) = \begin{cases}
1 & t \leq 1, 
\\
0 & t \geq 2.
\end{cases}
\end{equation*}
Then 
we choose the function $\varphi$ as 
\begin{equation}\label{eqn.def.scat.fun.phi}
\varphi(x) = \varphi_0(x) \chi_\varphi(k_F|x|)
\end{equation}
with $\varphi_0$ the  $p$-wave scattering function defined in \Cref{defn.scat.fun}.
To be more precise, we choose $\varphi$ to be the periodization of $\varphi_0 \chi_\varphi(k_F|\cdot|)$. 
We shall always assume that $L > 4k_F^{-1}$ so that $\supp \varphi_0 \chi_\varphi(k_F|\cdot|) \subset \Lambda$,
in which case no confusion should arise.

To measure the error in $\varphi$ not exactly satisfying the scattering equation \eqref{eqn.scat} we define 
\begin{equation}\label{eqn.defn.mcE_varphi.scat.eqn}
\begin{aligned}
\mcE_\varphi(x) 
  & = x \Delta \varphi(x) + 2 \nabla \varphi(x) + \frac{1}{2} x V(x) (1-\varphi(x))
  \\ & = 2 k_F x \nabla^\mu \varphi_0(x) \nabla^\mu \chi_\varphi(k_Fx)
  + k_F^2 x \varphi_0(x) \Delta\chi_\varphi(k_Fx)
  + 2 k_F \varphi_0(x) \nabla \chi_\varphi(k_Fx).
\end{aligned}
\end{equation}
Again, $\mcE_\varphi$ is more precisely the periodization of the expression in the second line. 
Here we have used the notation:
\begin{notation}\label{Enot}
We adopt the Einstein summation convention of summing over repeated indices denoting components of a vector,
i.e., $x^\mu y^\mu = \sum_{\mu=1}^3 x^\mu y^\mu = x\cdot y$ for vectors $x,y\in \R^3$.
\end{notation}
\begin{remark}
We will in general abuse notation slightly and refer to any (compactly supported) function and its periodization by the same name. 
For sufficiently large $L$ at most one summand in the periodization is non-zero and so no issue will arise. 
\end{remark}

\begin{remark}
Any derivative of $\chi_\varphi(t)$ is supported in $1\leq t\leq 2$. In particular $\mcE_\varphi$ is supported in the annulus $k_F^{-1} \leq |x|\leq 2k_F^{-1}$.
For $ak_F$ small enough such $x$ are outside the support of $V$ and hence $\varphi_0(x) = a^3/|x|^3$ there.  
In particular 
\begin{equation}\label{eqn.bdd.mcE_varphi}
\abs{\mcE_\varphi(x)} \leq C k_F \frac{a^3}{|x|^3} \chi_{k_F^{-1}\leq |x|\leq 2 k_F^{-1}} \leq C k_F \varphi(x/2).
\end{equation}
\end{remark}

\subsection{The transformation \texorpdfstring{$T$}{T}}
The operators $B$ and $T$ are defined as 
\begin{equation}\label{eqn.def.B}
T = e^{B}, 
\qquad 
B = \frac{1}{2L^3} \sum_{p,k,k'} \hat\varphi(p) b_{k+p}^r b_{k'-p}^r c_{k'} c_k - \hc
  = \frac{1}{2}\iint \varphi(z-z') b_z^r b_{z'}^r c_{z'} c_z \ud z \ud z' - \hc.
\end{equation}
with $\varphi$ given in \Cref{eqn.def.scat.fun.phi} and $b_k^r$ the regularized operators in \eqref{def:br}.

\begin{remark}[{Comparison to \cite{Falconi.Giacomelli.ea.2021,Giacomelli.2023}}]
Compared to the analogous construction of an operator $T$ in \cite{Falconi.Giacomelli.ea.2021,Giacomelli.2023}
we avoid regularizing the operators $c_x$ and we do not have an ultraviolet regularization of the operators $b_x$. 
This simplifies much of the analysis below, since we do not have to treat the errors arising from such a regularization. 
\end{remark}

\subsection{Implementation}
We shall describe now how to rigorously implement the heuristic calculation in the previous section. 
The main step is the calculation in \Cref{eqn.BCH.heuristic}, which uses the Baker--Campbell--Hausdorff expansion. 
We give a rigorous implementation of this calculation using a Duhamel expansion as follows.

As discussed in the heuristic analysis above, we expect the ground state to be roughly $RT\Omega$. 
Thus define (for some $\psi$)
\begin{equation}\label{eqn.def.xi.lambda}
\xi_\lambda = T^{-\lambda}R^* \psi  \ , \quad  \text{where} \  T^{-\lambda}:= e^{-\lambda B} , \quad \lambda \in [0,1].
\end{equation}
Then $\psi = R\xi_0 = RT\xi_1$.
We will consider the following $\psi$'s:
\begin{defn}
Let $\psi$ be an $N$-particle state. Then $\psi$ is called an \emph{approximate ground state} 
if 
\begin{equation*}
\abs{\expect{H_N}_\psi - \sum_{k\in B_F} |k|^2} \leq C N a^3 k_F^5.
\end{equation*}
for some $C>0$ (independent of $N$ and $a k_F$).
\end{defn}

\begin{remark}[Existence of approximate ground states]
Note that the free Fermi state $\psi_F$ is an approximate ground state. 
Indeed, we show in the proof of \Cref{lem.ini.V=Q2+Q4} below that 
$\expect{\!\ud\Gamma(V)}_F \leq C N a^3 k_F^5$.
In particular, the ground state is an approximate ground state. 
\end{remark}

Motivated by the discussion above we write 
\begin{equation}\label{eqn.main.expand.V=Q2+Q4}
\expect{H_N}_\psi = E_F + \expect{\!\ud\Gamma(V)}_F + \expect{\H_0 + \Q_2 + \Q_4}_{\xi_0} + \mcE_V(\psi)
\end{equation}
with $\mcE_V(\psi)$ defined so that this holds, i.e., $\mcE_V(\psi) = \expect{\!\ud\Gamma(V)}_\psi - \expect{\!\ud\Gamma(V)}_F - \expect{\Q_2 + \Q_4}_{\xi_0}$.
We shall show in \Cref{prop.mcE_V} below that $\mcE_V(\psi)$ is small for an approximate ground state $\psi$, i.e., it is negligible compared to $N a^3 k_F^5$. 
To compute $\expect{\H_0 + \Q_2 + \Q_4}_{\xi_0}$ 
we note that for any operator $A$ we have 
\begin{equation*}
\begin{aligned}
\longip{\xi_\lambda}{A}{\xi_\lambda}
  & = \longip{\xi_1}{A}{\xi_1} -\int_\lambda^1 \ud \lambda'\, \dd{\lambda'} \longip{\xi_{\lambda'}}{A}{\xi_{\lambda'}}
  = \longip{\xi_1}{A}{\xi_1} + \int_\lambda^1 \ud \lambda'\, \longip{\xi_{\lambda'}}{[A,B]}{\xi_{\lambda'}}.
\end{aligned}
\end{equation*}
Using this for $\H_0$ and $\Q_4$ we find 
\begin{equation*}
\begin{aligned}
\expect{\H_0 + \Q_4}_{\xi_0}
  & = \longip{\xi_1}{\H_0 + \Q_4}{\xi_1} + \int_0^1 \ud \lambda\, \longip{\xi_\lambda}{[\H_0 + \Q_4,B] + \Q_2}{\xi_\lambda}
    - \int_0^1 \ud \lambda \, \longip{\xi_\lambda }{\Q_2}{\xi_\lambda}.
\end{aligned}
\end{equation*}
Similarly, for $\Q_2$ we find 
\begin{equation*}
\begin{aligned}
\expect{\Q_2}_{\xi_0} 
  & = \longip{\xi_1}{\Q_2}{\xi_1} + \int_0^1 \ud \lambda\, \longip{\xi_{\lambda}}{[\Q_2,B]}{\xi_{\lambda}},
\\
\int_0^1 \ud \lambda \, \longip{\xi_\lambda }{\Q_2}{\xi_\lambda}
& = 
\longip{\xi_1}{\Q_2}{\xi_1} +
\int_0^1 \ud \lambda\int_\lambda^1 \ud \lambda'  \longip{\xi_{\lambda'}}{[\Q_2,B]}{\xi_{\lambda'}}.
\end{aligned}
\end{equation*}
In conclusion then 
\begin{equation*}
\begin{aligned}
\longip{\xi_0}{\H_0 + \Q_2 + \Q_4}{\xi_0}
  & = 
  \int_0^1 \ud \lambda\, \longip{\xi_{\lambda}}{[\Q_2,B]}{\xi_{\lambda}}
  - \int_0^1 \ud \lambda\int_\lambda^1 \ud \lambda'  \longip{\xi_{\lambda'}}{[\Q_2,B]}{\xi_{\lambda'}}
\\ & \quad + 
\longip{\xi_1}{\H_0 + \Q_4}{\xi_1} + \int_0^1 \ud \lambda\, \longip{\xi_\lambda}{[\H_0 + \Q_4,B] + \Q_2}{\xi_\lambda} .
\end{aligned}
\end{equation*}
The two terms with $[\Q_2,B]$ are the leading terms. 
To leading order they are given by the constant (i.e. fully contracted) contribution from $[\Q_2,B]$ obtained after normal ordering. 
This constant term is (approximately) given by $-2\expect{\!\ud\Gamma(V\varphi)}_F$. 
This motivates to write 
\begin{equation}\label{comQ2B}
[\Q_2,B] = -2\expect{\!\ud\Gamma(V\varphi)}_F + \Q_{2;B}^\mcE.
\end{equation}
The last term, with $[\H_0 + \Q_4,B] + \Q_2$, is an error term. 
It is small by virtue of $\varphi$ (almost) satisfying the $p$-wave scattering equation.
The third term $\longip{\xi_1}{\H_0 + \Q_4}{\xi_1}$ is positive.
We thus find 
\begin{equation*}
\begin{aligned}
\longip{\xi_0}{\H_0 + \Q_2 + \Q_4}{\xi_0}
  & = - \expect{\!\ud\Gamma(V\varphi)}_F 
  + \longip{\xi_1}{\H_0 + \Q_4}{\xi_1} + \mcE_{\Q_2}(\psi) + \mcE_{\textnormal{scat}} (\psi),
\end{aligned}
\end{equation*}
where 
\begin{align}
\mcE_{\Q_2}(\psi) 
& 
= 
    \int_0^1 \ud \lambda\, \longip{\xi_\lambda}{\Q_{2;B}^\mcE}{\xi_\lambda}
  - \int_0^1 \ud \lambda\int_\lambda^1 \ud \lambda'  \longip{\xi_{\lambda'}}{\Q_{2;B}^\mcE}{\xi_{\lambda'}},
 \label{eqn.def.mcE_Q2}
\\ 
\mcE_{\textnormal{scat}} (\psi)
& 
= \int_0^1 \ud \lambda\, \longip{\xi_\lambda}{[\H_0 + \Q_4,B] + \Q_2}{\xi_\lambda}.
 \label{eqn.def.mcE_scat}
\end{align}
We will derive bounds on these terms in Propositions~\ref{prop.mcE_Q2} and~\ref{prop.mcE_scat} below. 
In conclusion then (for any $N$-particle state $\psi$)
\begin{equation}\label{eqn.main}
\expect{H_N}_\psi 
  = E_F + \expect{\!\ud\Gamma(V(1-\varphi))}_F + \longip{\xi_1}{\H_0 + \Q_4}{\xi_1} + \mcE_V(\psi) + \mcE_{\Q_2}(\psi) + \mcE_{\textnormal{scat}}(\psi).
\end{equation}

With this formula we give the 

\begin{proof}[{Proof of \Cref{thm.main}}]
First, we calculate that 
$E_F = \frac{3}{5} N k_F^2 (1 + O(N^{-1/3}))$ 
by viewing the sum $E_F = \sum_{k\in B_F} |k|^2$ as a Riemann-sum for the corresponding integral $\frac{L^3}{(2\pi)^3} \int_{|k|\leq k_F} |k|^2 \ud k$.
Next, we calculate 
$\expect{\!\ud\Gamma(V(1-\varphi))}_F$.
With the aid of \Cref{a3} we find 
(since $V$ has compact support) that 
\begin{equation*}
\int V(1-\varphi) |x|^2 \ud x 
  = \int_{|x|\leq k_F^{-1}} V (1-\varphi_0) |x|^2 \ud x = 24 \pi a^3
\end{equation*}
for small $k_F$. 
Moreover, the two-particle reduced density of the free Fermi state satisfies \cite[Lemma 2.14]{Lauritsen.Seiringer.2024}
\begin{equation}\label{rho2b}
\rho^{(2)}(x,y) = \frac{1}{5(6\pi^2)^2} k_F^8 |x-y|^2 \left(1 + O(k_F^2|x-y|^2) + O(N^{-1/3})\right)
\end{equation}
(with $|x-y|$ denoting the metric on the torus.)
Thus, 
\begin{equation}\label{eqn.calc.<Vphi>.F}
\begin{aligned}
\expect{\!\ud\Gamma(V(1-\varphi))}_F
  & = \frac{1}{2} \iint V(x-y)(1-\varphi(x-y)) \rho^{(2)}(x,y) \ud x \ud y 
  \\ & = \frac{2}{5\pi} N a^3 k_F^5(1 + O( (ak_F)^2) + O(N^{-1/3})).
\end{aligned}
\end{equation}

For the third term in \Cref{eqn.main}, we can simply use  $\H_0 + \Q_4 \geq 0$.
Bounding the error terms $\mcE_V(\psi), \mcE_{\Q_2}(\psi)$ and $\mcE_{\textnormal{scat}}(\psi)$ occupies the main part of this paper.
We shall show that they are bounded as follows.

\begin{prop}
\label{prop.mcE_V}
\label{eqn.mcE_V.main}
Let $\psi$ be an approximate ground state. Then 
\begin{equation*}
\abs{\mcE_{V}(\psi)}
  \leq C N a^3 k_F^5 (ak_F)^{3/4}.
\end{equation*}
\end{prop}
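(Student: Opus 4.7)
The plan is to control $\mcE_V(\psi)$ by identifying explicitly the ``quadratic terms'' dropped when writing $R^*\ud\Gamma(V)R \approx \expect{\ud\Gamma(V)}_F + \Q_2 + \Q_4$, and then bounding their expectation in $\xi_0$ using a priori estimates from the approximate ground state hypothesis. First I would expand $R^*\ud\Gamma(V)R$ by substituting $R^*a_xR = b_x + c_x^*$ into $\tfrac{1}{2}\iint V(x-y) a_x^*a_y^*a_ya_x \ud x\ud y$, producing sixteen terms, and normal-order each using the relations $\{b_x, c_y^\#\} = 0$, $\{b_x, b_y^*\} = \delta(x-y) - v(x-y)$, and $\{c_x, c_y^*\} = v(x-y)$. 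The constant piece equals $\expect{\ud\Gamma(V)}_F$, the four-$b$ piece gives $\Q_4$, the ``fully off-diagonal'' pieces give $\Q_2 + \Q_2^*$, and what remains is a finite sum of quadratic operators whose kernels are of schematic form $V(x-y)v(x-y)$ or $V(x-y)v(0)$. In other words, one obtains an exact identity
\begin{equation*}
R^*\ud\Gamma(V)R = \expect{\ud\Gamma(V)}_F + \Q_2 + \Q_4 + \mathcal{R}_V,
\end{equation*}
so that $\mcE_V(\psi) = \expect{\mathcal{R}_V}_{\xi_0}$.

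Next I would extract a priori bounds on $\xi_0$. Starting from $R^* H_N R = E_F + \H_0 + R^*\ud\Gamma(V)R$, the approximate ground state hypothesis combined with positivity of $\H_0$ and $\ud\Gamma(V)$ immediately yields
\begin{equation*}
\expect{\H_0}_{\xi_0} \leq CN a^3k_F^5,
\qquad
\expect{\ud\Gamma(V)}_\psi = \expect{R^*\ud\Gamma(V)R}_{\xi_0} \leq CN a^3 k_F^5.
\end{equation*}
From the definition $\H_0 = \sum_k \bigl||k|^2 - k_F^2\bigr| a_k^* a_k$ one deduces in addition a control on the number of particles outside (and holes inside) the Fermi ball by a momentum splitting. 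With these a priori bounds at hand, each quadratic summand $Q$ of $\mathcal{R}_V$ is estimated by a Cauchy--Schwarz splitting of the form $|\expect{Q}_{\xi_0}| \leq \epsilon \expect{A}_{\xi_0} + \epsilon^{-1}\expect{B}_{\xi_0}$, where $A$ and $B$ are drawn from $\H_0$, $\ud\Gamma(V)$, and number-type operators, weighted by $L^1$ or $L^\infty$ norms of the kernels. The main numerical inputs are $\|V\|_{L^1}\leq Ca$ from \Cref{rmk.length=a.in.bdds} and $\|v\|_\infty = v(0) = \rho$, so in particular $\|V\,v\|_{L^1} \leq C a\, k_F^3$.

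The main obstacle is to extract the exact improvement factor $(ak_F)^{3/4}$. A naive triangle inequality bound on $\mathcal{R}_V$ yields only $|\mcE_V(\psi)| \leq CN a^3 k_F^5$, which is the leading scale of the interaction correction and therefore useless as an error term. The additional power $(ak_F)^{3/4}$ must come from optimising the Cauchy--Schwarz splitting above. In each term of $\mathcal{R}_V$, one side of the splitting is automatically controlled by the a priori bound of size $Na^3 k_F^5$, while the other can be shown to be smaller than this reference scale by a positive power of $ak_F$ arising from the localisation of $V$ on the scale of the scattering length $a$ versus the ambient density scale $k_F^{-3}$. Choosing $\epsilon$ so as to equate the two contributions, and then systematically book-keeping these fractional savings across every summand produced in the normal-ordering in the first step, is the technical crux that will yield the claimed $(ak_F)^{3/4}$ improvement.
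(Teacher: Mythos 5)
Your general plan — normal-order $R^*\ud\Gamma(V)R$, isolate the constant, $\Q_2$, and $\Q_4$ pieces, and bound the rest using a priori estimates and a Cauchy--Schwarz split with an optimized $\eps$ — is the right skeleton and is essentially what the paper does. However, there are two gaps that would prevent this proposal from working as written.

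First, the structural claim is wrong: after normal-ordering, the remainder $\mathcal{R}_V$ is \emph{not} a finite sum of quadratic operators. In addition to genuinely quadratic terms of the type you describe (with kernels involving $\hat V(p-q)-\hat V(0)$, corresponding to the $\X_\eps$ operator in the paper), the remainder contains quartic operators: the $vvVvv$ sector produces a $c^*c^*cc$ term, and the ``mixed'' sectors $uvV vu$, $vuVuv$, etc.\ produce $b^*c^*cb$-type terms. These quartics (the paper's $\Q_\eps$ in \eqref{eqn.def.Qeps}) carry the bulk of the difficulty, and cannot be controlled in the same way as the quadratic pieces. Moreover, the paper does not isolate them by a literal identity $R^*\ud\Gamma(V)R=\expect{\ud\Gamma(V)}_F+\Q_2+\Q_4+\mathcal{R}_V$ followed by term-by-term bounds; instead it exploits $V\geq 0$ at the operator level, doing the Cauchy--Schwarz at the two-body level, $\pm(uuV(vu+uv)+\hc)\leq\eps\, uuVuu+\eps^{-1}(vu+uv)V(vu+uv)$, and likewise for the $vv$ cross terms, which both produces the $\eps$-weighted $\Q_4$ contribution and absorbs the cross terms cleanly. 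A naive post-hoc Cauchy--Schwarz on the quartic remainder operators would be much messier.

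Second, the ``technical crux'' that you flag but don't resolve is precisely the step that makes the proof work, and it is not merely a matter of bookkeeping. The fractional savings come from a Taylor expansion in the relative coordinate: since $(c_x^*)^2=0$, one can write $c_y^*=c_x^*+(y-x)\cdot\int_0^1\nabla c_{x+t(y-x)}^*\,dt$, which produces an explicit factor $|x-y|^2$ in the quartic terms. Integrated against $V(x-y)$ this gives $\int|x|^2V\,dx\lesssim a^3$ (not $\|V\|_{L^1}\lesssim a$), and this is what converts an a priori $\mcN$- or $\H_0$-bound into something at the scale $a^3 k_F^5\expect{\mcN}$ or $a^3 k_F^3\expect{\H_0}$, small relative to $Na^3k_F^5$. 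Without this Taylor step the $\eps^{-1}$-weighted side of your split would not be small. Finally, note that to close the argument with $\eps=(ak_F)^{3/4}$ you need the three a priori bounds $\expect{\H_0}_{\xi_0}\leq CNa^3k_F^5$, $\expect{\mcN}_{\xi_0}\leq CN(ak_F)^{3/2}$, and also $\expect{\Q_4}_{\xi_0}\leq CNa^3k_F^5$; the last is not immediate from the first two and requires its own bootstrap (the paper's \Cref{lem.a.priori.Q4}, proved using \Cref{lem.ini.V=Q2+Q4} and \Cref{lem.a.priori.Q2}).
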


\begin{prop}
\label{prop.mcE_Q2}
\label{eqn.mcE_Q2.main}
Let $\psi$ be an approximate ground state. Then 
\begin{equation*}
\abs{\mcE_{\Q_2}(\psi)}
  \leq C N a^3 k_F^5 (ak_F)^{3/2}.
\end{equation*} 
\end{prop}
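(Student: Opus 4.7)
The plan is to bound $\mcE_{\Q_2}(\psi)$ uniformly in $\lambda$, i.e.\ to show
\[
\abs{\longip{\xi_\lambda}{\Q_{2;B}^\mcE}{\xi_\lambda}} \leq C N a^3 k_F^5 (ak_F)^{3/2}
\qquad \text{for every } \lambda\in[0,1],
\]
after which the two $\lambda$-integrals in \Cref{eqn.def.mcE_Q2} are trivially bounded. Writing $B=B_+-B_+^*$ with $B_+ = \tfrac12\iint \varphi(z-z') b_z^r b_{z'}^r c_{z'}c_z\,\ud z\,\ud z'$ and $\Q_2=\Q_2^++\Q_2^{+\,*}$ with $\Q_2^+$ the $b^*b^*c^*c^*$-part, the first step is to normal-order $[\Q_2,B]=[\Q_2^+,B_+]+[\Q_2^+,B_+]^*-[\Q_2,B_+^*]_{\mathrm{rest}}$ using the canonical anticommutation relations together with $\{b_x^r,b_y^{r*}\}=u^r(x-y)$, $\{c_x,c_y^*\}=v(x-y)$, and $\{b^{r},c^*\}=0$ (disjoint momentum support). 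The contractions produce scalar (fully-contracted), quadratic, quartic and sextic pieces; the scalar piece is, by construction, exactly the $-2\expect{\!\ud\Gamma(V\varphi)}_F$ in \Cref{comQ2B} up to a controllable $(ak_F)^{3/2}$-type error coming from $u^r\ne\delta$ (using \Cref{lem.bdd.vr.L1}). Everything else is gathered into $\Q_{2;B}^\mcE$.

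The second and main step is to bound each surviving operator-valued term of $\Q_{2;B}^\mcE$ in expectation in $\xi_\lambda$. The ingredients are:
\begin{itemize}[leftmargin=2em]
\item the bound $\norm{c_x}\leq C k_F^{3/2}$, which absorbs two $c$'s per term at the cost of $k_F^3$;
\item the regularization: since $b^r_x$ has momentum support in $\{|k|\geq 2k_F\}$, we have $\iint|b_x^r\psi|^2\ud x \leq Ck_F^{-2}\expect{\H_0}_\psi$ and more generally $b^r$-pairs can be traded for one factor of $\H_0$;
\item the locality of the kernels, i.e.\ $\norm{V}_{L^1}\leq Ca$, $\norm{V\varphi}_{L^1}\leq Ca^4$, $\norm{\varphi}_{L^1}\leq Ca^3 k_F^{-1}|\log ak_F|$ (from the cut-off at $|x|\sim k_F^{-1}$), controlling the spatial integrals;
\item the gain from a factor of $V$ combined with a factor of $\varphi$ in the same position: pointwise $|V\varphi|\leq C|V|$ on $\supp V$, but the probability of two particles being close is $O((k_F a)^3)$ by the Fermi-ball two-point function \eqref{rho2b}, which is the source of the improvement over the naive $ak_F$-bound.
\end{itemize}
For instance, a typical quadratic term (two creation/annihilation operators left over) is bounded by Cauchy--Schwarz as
$\big|\iint V(x-y)\varphi(x-y') \longip{\xi_\lambda}{b_x^*c_y^*c_y c_{y'} b^r_{x}}{\xi_\lambda}\big|$
$\leq \|V\|_{L^1}\|\varphi\|_{L^1} k_F^{6}\cdot (\expect{\H_0}_{\xi_\lambda})^{1/2} \cdot (\text{number density})^{1/2}$, and a careful accounting of the powers of $a,k_F,N$ produces the stated $Na^3k_F^5(ak_F)^{3/2}$ bound after using the a priori control of $\expect{\H_0}_{\xi_\lambda}$. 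The quartic and sextic pieces admit analogous bounds, generally with more $c$-factors replaced by $k_F^{3/2}$ and remaining $b^r$-factors bounded in terms of $\Q_4^{1/2}$ or $\H_0^{1/2}$.

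The third step is the a priori bound $\expect{\H_0+\Q_4}_{\xi_\lambda}\leq CNa^3k_F^5$ for $\lambda\in[0,1]$, whenever $\psi$ is an approximate ground state. This I would obtain by a Grönwall-type argument: differentiating $\lambda\mapsto\expect{\H_0+\Q_4}_{\xi_\lambda}$ gives $\expect{[\H_0+\Q_4,B]}_{\xi_\lambda}$, which by \eqref{eqn.aprrox.scat.eqn.H0+Q} equals $-\expect{\Q_2}_{\xi_\lambda}+$ (scattering-equation error), and $\expect{\Q_2}$ is in turn estimated by $\tfrac12\expect{\H_0+\Q_4}+C Na^3 k_F^5$; Grönwall closes the loop and feeds back into the bounds above. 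The \emph{main obstacle} is the combinatorial bookkeeping at the normal-ordering step and the verification that \emph{every} partially-contracted term gains the full $(ak_F)^{3/2}$ improvement over the leading $Na^3k_F^5$ scale. Several dangerous terms—in particular those in which only one pair of $b^r$'s is contracted with a $u^r$-kernel rather than a $\delta$-kernel, and those producing $\Q_4$-like pieces paired with $\varphi$—are borderline and must either be absorbed into $\langle\H_0+\Q_4\rangle_{\xi_\lambda}$ (with coefficient $\leq\tfrac12$ so as not to spoil the Grönwall step) or bounded using the decay $|\hat\varphi(p)|\lesssim a^3/(1+(p/k_F)^2)$ together with the high-momentum regularization $\hat u^r$.
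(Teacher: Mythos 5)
Your overall architecture matches the paper's: reduce to a uniform-in-$\lambda$ bound on $\langle\xi_\lambda|\Q_{2;B}^\mcE|\xi_\lambda\rangle$, normal-order the commutator $[\Q_2,B]$ using the $b$- and $c$-anticommutators, estimate each surviving term, and close with Gr\"onwall-propagated a priori bounds on $\mcN,\H_0,\Q_4$ along $\lambda\in[0,1]$. Your third step (Gr\"onwall) is essentially \Cref{lem.propagation.mcN} and \Cref{lem.propagation.H0.Q4}. So the plan is not a different route.

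The gap is the second step, which is where all the work is. You write that ``a careful accounting of the powers of $a,k_F,N$ produces the stated bound'' and that ``several dangerous terms\dots must either be absorbed into $\langle\H_0+\Q_4\rangle$ or bounded using the decay of $\hat\varphi$,'' but you do not carry out any of these estimates, and your one explicit sample estimate is not a term that actually arises and is not sharply justified. The paper establishes (as part of \Cref{lemma.bdd.list.operators}) the operator inequality
\begin{equation*}
\abs{\expect{\Q_{2;B}^\mcE}_\xi} \lesssim a^2 k_F^2 \expect{\H_0}_\xi^{1/2}\expect{\Q_4}_\xi^{1/2} + a^3 k_F^5 \expect{\mcN}_\xi + N a^5 k_F^7 |\log ak_F|,
\end{equation*}
and this requires a genuinely careful case analysis of eight families of terms ($\A_{1a},\A_{1b},\A_{2a},\A_{2b},\A_{3a},\A_{3b},\A_{3c},\A_{3d}$), each using specific tools (splitting $u^r=\delta-v^r$ and controlling $\|v^r\|_{L^1}$; Taylor expansion and the norm bound $\|c_y c_x\|\lesssim |x-y|k_F^4$; momentum-space cancellation in the $c^*c$-piece; the operator inequality $0\le u^r\le\mathbbm{1}$ combined with an $\eps$-optimization; and \Cref{lem.bdd.F*ac.b(F)} to control $\int\varphi\, b^r c\,c$). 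None of this is in your proposal, and without it the claim is unproven.

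A second, smaller issue: your stated mechanism for the $(ak_F)^{3/2}$ improvement is off. It does not come from ``$V$ and $\varphi$ in the same position'' or directly from the two-point function $\rho^{(2)}(x,y)\sim k_F^8|x-y|^2$; those feed into the \emph{leading} term $\expect{\!\ud\Gamma(V(1-\varphi))}_F$. The dominant contribution to $\Q_{2;B}^\mcE$ in the final tally is the middle term $a^3 k_F^5\expect{\mcN}_\xi$, and the $(ak_F)^{3/2}$ gain is exactly the a priori bound $\expect{\mcN}_{\xi_\lambda}\lesssim N(ak_F)^{3/2}$ (\Cref{eqn.bdd.mcN.xi_lambda}), which ultimately traces back to the trade-off $\mcN\le CN\lambda k_F^{-2}+\lambda^{-1}\H_0$ together with $\expect{\H_0}\lesssim Na^3k_F^5$ for an approximate ground state. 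The first and third terms in the display above actually carry a stronger gain, of order $(ak_F)^2$ (up to a logarithm), so it is not the case that every term gains ``exactly'' $(ak_F)^{3/2}$.
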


\begin{prop}
\label{prop.mcE_scat}
\label{eqn.mcE_scat.main}
Let $\psi$ be an approximate ground state. Then 
\begin{equation*}
\abs{\mcE_{\textnormal{scat}}(\psi)}
  \leq C N a^3 k_F^5 (ak_F)^{3/10} \abs{\log ak_F}.
\end{equation*}
\end{prop}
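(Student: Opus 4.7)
The proof of \Cref{prop.mcE_scat} shows that $[\H_0 + \Q_4, B] + \Q_2$ is small in the states $\xi_\lambda$, by virtue of $\varphi$ approximately satisfying the $p$-wave scattering equation \eqref{eqn.scat}; the residual is controlled by the function $\mcE_\varphi$ defined in \eqref{eqn.defn.mcE_varphi.scat.eqn}. I would structure the argument in three steps.

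First, I would explicitly compute $[\H_0, B]$ and $[\Q_4, B]$ via the canonical anticommutation relations, working in configuration space. Since $B$ has integral kernel $\varphi(z-z') b_z^r b_{z'}^r c_{z'} c_z$ and $\H_0 = \sum_k \bigl||k|^2 - k_F^2\bigr| a_k^* a_k$, the commutator $[\H_0, B]$ attaches to each term the sum of four kinetic-energy factors. After undoing the Fourier transform and integrating by parts, these factors translate into $-\Delta$ acting on the kernel $\varphi$. Exploiting the antisymmetry of $b_z^r b_{z'}^r c_{z'} c_z$ under $z \leftrightarrow z'$---which projects onto odd angular modes in the relative coordinate---one extracts precisely the combination $(x\Delta\varphi + 2\nabla\varphi)(z-z')$ appearing in \eqref{eqn.scat}. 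The fully contracted piece of $[\Q_4, B]$, obtained by pairing one $b^*$ of $\Q_4$ with one $b^r$ of $B$, produces a quartic operator with kernel $V\varphi$ and matches the $\tfrac12 xV\varphi$ piece of \eqref{eqn.scat}.

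Combining these with $\Q_2$ (kernel $V$) and using the definition of $\mcE_\varphi$, one obtains a decomposition
\[
[\H_0 + \Q_4, B] + \Q_2 = \mathcal{R}_\varphi + \mathcal{R}_{\textnormal{reg}} + \mathcal{R}_{\textnormal{Wick}},
\]
where $\mathcal{R}_\varphi$ is a quartic operator with $c$-number kernel $\mcE_\varphi$, $\mathcal{R}_{\textnormal{reg}}$ collects contributions from the regularization $u^r = \delta - v^r$ in $b^r$ (small by \Cref{lem.bdd.vr.L1}), and $\mathcal{R}_{\textnormal{Wick}}$ gathers the noncontracted sextic pieces of $[\Q_4, B]$. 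To bound $\longip{\xi_\lambda}{\mathcal{R}_\varphi}{\xi_\lambda}$ I would use the pointwise estimate $|\mcE_\varphi(x)| \leq Ck_F\varphi(x/2)$ from \eqref{eqn.bdd.mcE_varphi}, followed by Cauchy--Schwarz on the quartic operator: the two $c$-factors are controlled by $\norm{c_x} \leq Ck_F^{3/2}$, while the two $b^r$-factors are absorbed into a number-type operator counting excitations outside $B_F$. The analogous strategy handles $\mathcal{R}_{\textnormal{reg}}$ (using $\norm{v^r}_{L^1} \leq C$) and $\mathcal{R}_{\textnormal{Wick}}$ (using $V \in L^1$ together with bounds on $\varphi$ on the cutoff support). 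A uniform bound on the number of excitations in $\xi_\lambda$---needed to close the estimates---follows from the approximate-ground-state property of $\psi$, the positivity of $\H_0 + \Q_4$, and a Gronwall-type argument in $\lambda$ propagating the estimate along the flow generated by $B$.

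The main obstacle is the unconventional exponent $(ak_F)^{3/10}|\log ak_F|$. It arises from an optimal interpolation between two Cauchy--Schwarz strategies for $\mathcal{R}_\varphi$: one that incurs an $L^p$-norm of $\varphi$ (logarithmically divergent, since $\varphi_0 \sim a^3/|x|^3$ integrated up to the cutoff scale $k_F^{-1}$), and one that uses the positivity of $\H_0 + \Q_4$ to absorb kinetic factors at the price of extra $k_F$ powers. Balancing the cutoff scale $k_F^{-1}$ of $\chi_\varphi$ against these two bounds produces the exponent $3/10$, and verifying that all number-operator expectations on $\xi_\lambda$ remain uniformly controlled throughout is the technical heart of the argument.
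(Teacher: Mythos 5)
Your overall skeleton — compute the commutators, subtract $\Q_2$, identify the residual governed by $\mcE_\varphi$, bound it via a priori estimates on $\mcN,\H_0,\Q_4$ propagated in $\lambda$ by Gr\"onwall — matches the paper. But there are two concrete gaps that keep this from being a proof, plus an inaccuracy in your explanation of the exponent.

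First, the combination $x\Delta\varphi+2\nabla\varphi$ does not come out of the antisymmetry of $b_z^rb_{z'}^rc_{z'}c_z$. That operator is \emph{symmetric} under $z\leftrightarrow z'$ (each pair of fermionic operators flips sign, and the two flips cancel), so pairing it against the even function $\Delta\varphi$ gives a nonvanishing contribution. The momentum-space computation of $[\H_0,B]$ gives the kernel $\Delta\varphi(x-y)\,c_yc_x+2\nabla^\mu\varphi(x-y)\,c_y\nabla^\mu c_x$; the extra factor $(x-y)^\mu$ needed to assemble $(x-y)^\mu\Delta\varphi+2\nabla^\mu\varphi$ is obtained by Taylor-expanding $c_x=c_y+(x-y)^\mu\nabla^\mu c_x+\text{(second order)}$ and using $c_y^2=0$, as in \eqref{eqn.expand.cx.2nd.order}. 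This expansion is not a cosmetic step: it produces a nontrivial error term $\T$ (the ``Taylor'' pieces of $[\H_0,B]$, $[\Q_4,B]$ and $\Q_2$), which is of the same size as $\Q_{\textnormal{scat}}$ and is absent from your decomposition $\mathcal R_\varphi+\mathcal R_{\textnormal{reg}}+\mathcal R_{\textnormal{Wick}}$. Without it your $\mathcal R_\varphi$ is not even a well-formed operator with kernel $\mcE_\varphi$.

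Second, the bound you sketch for $\mathcal R_\varphi$ — norm-bound the $c$'s, absorb the two $b^r$'s into a number operator — does not close with the claimed power. The paper needs the operator-norm bound of \Cref{lem.bdd.F*ac.b(F)} (a Birman--Solomjak trace-class estimate applied to the kernel $F(x-\cdot)v(\cdot-\cdot)$), which lets one trade an $L^1$-norm of $\mcE_\varphi$ for an $L^2$-norm at the cost of one $b$ being handled exactly, so that only a single $\expect{\mcN}^{1/2}$ or $\expect{\mcN_{>\alpha}}^{1/2}$ appears. Related to this, the $(ak_F)^{3/10}|\log ak_F|$ does not come from balancing a position cutoff on $\varphi$ against kinetic absorption. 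It comes from splitting $b=b^<+b^>$ at the momentum scale $k_F(ak_F)^{-\alpha}$: the $b^<$ branch pays $\|b^<\|\lesssim k_F^{3/2}(ak_F)^{-3\alpha/2}$ and lands on $\expect{\mcN}_{\xi_\lambda}\lesssim N(ak_F)^{3/2}$, giving $(ak_F)^{3/4-3\alpha/2}|\log ak_F|$, while the $b^>$ branch uses \Cref{lem.bdd.F*ac.b(F)} and $\mcN_{>\alpha}\lesssim k_F^{-2}(ak_F)^{2\alpha}\H_0$, giving $(ak_F)^\alpha$; the optimum $\alpha=3/10$ balances these two. You would need to add both the Taylor error accounting and the trace-class lemma before the interpolation you allude to can actually be carried out.
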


Noting that the ground state is in particular an approximate ground state, 
we conclude the proof of \Cref{thm.main}.
\end{proof}

The remainder of the paper deals with the proofs of \Cref{prop.mcE_V,prop.mcE_Q2,prop.mcE_scat}.

\subsubsection*{Structure of the paper}
First, in \Cref{sec.a.priori}, we prove some useful a priori bounds and give the proof of \Cref{eqn.mcE_V.main}.
Next, in \Cref{sec.calc.commutators}, we calculate the commutators $[\H_0,B]$, $[\Q_4,B]$ and $[\Q_2,B]$ 
and extract the claimed leading terms. 
Then, in \Cref{sec.bdd.error-terms}, we bound the error terms arising from the commutators in a general state $\xi$.
Finally, in \Cref{sec.proof.final}, 
we use the bounds of \Cref{sec.bdd.error-terms} for the particular states $\xi_\lambda$ 
to prove \Cref{eqn.mcE_scat.main,eqn.mcE_Q2.main}.
Additionally we shall give the proof of  \Cref{prop.upper.bdd}.

In \Cref{sec.d=2} we sketch how to adapt the proof to the two-dimensional setting.

\section{A Priori Bounds}\label{sec.a.priori}
We begin our analysis with some a priori bounds on the operators $\mcN, \H_0, \Q_2$ and $\Q_4$  
(defined in \Cref{eqn.KE.particle-hole,eqn.def.Q2,eqn.def.Q4} above)
and  on the scattering function $\varphi$ in \Cref{eqn.def.scat.fun.phi}.

\subsection{A priori analysis of the kinetic energy}
We may bound $\expect{\mcN}_{R^*\psi}$ and $\expect{\H_0}_{R^*\psi}$ for an approximate ground state $\psi$ 
exactly as in \cite[Lemma 3.5 and Corollary 3.7]{Falconi.Giacomelli.ea.2021} using the positivity of the interaction.
We give the arguments here for completeness.

\begin{lemma}[{A priori bound on $\H_0$}]\label{lem.a.priori.H0}
Let $\psi$ be an approximate ground state. Then 
\begin{equation*}
\expect{\H_0}_{R^*\psi} \leq C N a^3 k_F^5.
\end{equation*}
\end{lemma}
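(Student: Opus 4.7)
The strategy is immediate and mirrors the argument cited from \cite{Falconi.Giacomelli.ea.2021}: exploit unitarity of the particle-hole transformation $R$ together with the positivity $V \geq 0$. The plan is as follows.

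First, I would conjugate $H_N$ by $R$. Writing $H_N = \ud\Gamma(-\Delta) + \ud\Gamma(V)$ in second quantization (i.e.\ using $\mcH$ on $N$-particle states) and applying \Cref{eqn.KE.particle-hole}, we get
\begin{equation*}
R^* H_N R = E_F + \H_0 + R^*\ud\Gamma(V) R
\end{equation*}
as operators on the image $R^* \psi$ of the $N$-particle sector. Since $R$ is unitary,
\begin{equation*}
\expect{H_N}_\psi = \expect{R^* H_N R}_{R^*\psi} = E_F + \expect{\H_0}_{R^*\psi} + \expect{\ud\Gamma(V)}_\psi.
\end{equation*}

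Next, I would use the assumption $V \geq 0$, which gives $\ud\Gamma(V) \geq 0$ as an operator on Fock space and hence $\expect{\ud\Gamma(V)}_\psi \geq 0$. Rearranging yields
\begin{equation*}
\expect{\H_0}_{R^*\psi} \leq \expect{H_N}_\psi - E_F.
\end{equation*}

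Finally, invoking the definition of an approximate ground state, $|\expect{H_N}_\psi - E_F| \leq C N a^3 k_F^5$, concludes the bound. There is no real obstacle here — the only ``work'' is recognizing that the kinetic part of $R^* H_N R$ splits exactly as $E_F + \H_0$ on the relevant $N$-particle sector (which is precisely the content of \Cref{eqn.KE.particle-hole}) and that positivity of $V$ lets us discard the interaction contribution for free.
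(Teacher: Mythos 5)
Your proof is correct and is essentially identical to the paper's: both drop the non-negative interaction $\ud\Gamma(V)$, apply $R^*\ud\Gamma(-\Delta)R = E_F + \H_0$ from \Cref{eqn.KE.particle-hole}, and conclude from the definition of an approximate ground state. The only cosmetic difference is that you conjugate the full $H_N$ and then discard the interaction term, while the paper discards it before conjugating; the content is the same.
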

\begin{proof}
By the positivity of the interaction we have (recalling \Cref{eqn.KE.particle-hole})
\begin{equation*}
\begin{aligned}
\expect{H_N}_\psi
=
\expect{\mcH}_{\psi} 
  & \geq \expect{\!\ud\Gamma(-\Delta)}_{\psi}
  = \expect{R^*\!\ud\Gamma(-\Delta)R}_{R^*\psi} 
  = E_F + \expect{\H_0}_{R^*\psi}.
\end{aligned}   
\end{equation*}
Then,
\begin{equation*}
\expect{\H_0}_{R^*\psi} \leq \expect{H_N}_\psi - E_F \leq C N a^3 k_F^5
\end{equation*}
by assumption of $\psi$ being an approximate ground state.
\end{proof}

\begin{lemma}[{A priori bound on $\mcN$}]\label{lem.a.priori.mcN}
Let $\psi$ be an approximate ground state. Then 
\begin{equation*}
\expect{\mcN}_{R^*\psi} \leq C N (ak_F)^{3/2}.
\end{equation*}
\end{lemma}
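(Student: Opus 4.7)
My plan is to deduce the bound on $\expect{\mcN}_{R^*\psi}$ from the a priori bound on $\H_0$ already proven in \Cref{lem.a.priori.H0}, by splitting the momentum sum according to the distance to the Fermi surface and using the Pauli exclusion principle.

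Concretely, for any cutoff $\delta > 0$ I would write
\begin{equation*}
\mcN = \sum_k a_k^* a_k
  = \sum_{k:\,\abs{|k|^2 - k_F^2|}\leq \delta} a_k^* a_k
  + \sum_{k:\,\abs{|k|^2 - k_F^2|} > \delta} a_k^* a_k .
\end{equation*}
For the first sum I use the Pauli bound $a_k^* a_k \leq 1$ together with the shell count
\begin{equation*}
\#\bigl\{k \in \tfrac{2\pi}{L}\Z^3 : \abs{|k|^2 - k_F^2|}\leq \delta\bigr\} \leq C k_F L^3 \delta
\end{equation*}
(valid for $\delta \leq k_F^2/2$ and $k_F L$ large enough, coming from the standard Riemann sum estimate for the volume of a thin spherical shell). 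For the second sum, since $\abs{|k|^2-k_F^2|}/\delta \geq 1$ on the range of summation, I bound it by $\delta^{-1} \H_0$.

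Taking expectations in the state $R^*\psi$ and combining with \Cref{lem.a.priori.H0} yields
\begin{equation*}
\expect{\mcN}_{R^*\psi} \leq C k_F L^3 \delta + \delta^{-1} \expect{\H_0}_{R^*\psi}
 \leq C k_F L^3 \delta + C \delta^{-1} N a^3 k_F^5 .
\end{equation*}
Optimizing in $\delta$ (taking $\delta \sim \sqrt{Na^3 k_F^4/L^3}$, which is indeed $\ll k_F^2$ for $ak_F$ small), one obtains
\begin{equation*}
\expect{\mcN}_{R^*\psi} \leq C \sqrt{k_F L^3 \cdot N a^3 k_F^5},
\end{equation*}
and using $L^3 \sim N/k_F^3$ (since $k_F = (6\pi^2\rho)^{1/3}(1 + O(N^{-1/3}))$) the right hand side becomes $CN(ak_F)^{3/2}$, as claimed.

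The approach has essentially no obstacle: the argument is a textbook ``shell splitting'' plus optimization. The only point that requires a little care is ensuring that $\delta$ stays in the regime $\delta \lesssim k_F^2$ so that the linear shell count is valid; this is automatic from $ak_F \ll 1$. The nontrivial ingredient -- the kinetic a priori bound $\expect{\H_0}_{R^*\psi} \leq C N a^3 k_F^5$ -- has already been established, so the work here is purely algebraic/combinatorial.
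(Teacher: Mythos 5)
Your proposal is correct and is essentially the same argument the paper gives: split $\mcN$ by distance to the Fermi surface, use the Pauli bound plus a shell count for the near part and $\delta^{-1}\H_0$ for the far part, invoke \Cref{lem.a.priori.H0}, and optimize (your $\delta \sim k_F^2(ak_F)^{3/2}$ agrees with the paper's $\lambda$ after inserting $L^3\sim N/k_F^3$). The only cosmetic difference is that the paper writes the shell count directly as $CN\lambda k_F^{-2}$ rather than passing through $L^3$.
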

\begin{proof}
For any $\lambda > 0$ we can bound
\begin{equation*}
\begin{aligned}
\mcN = \sum_{k} a_k^* a_k 
  & \leq \sum_{k: ||k|^2 - k_F^2| \leq \lambda} 1 
 + \lambda^{-1} \sum_{k: ||k|^2 - k_F^2| > \lambda}  ||k|^2 - k_F^2| a_k^{*} a_k
  \\
  & \leq C N \lambda k_F^{-2} + \lambda^{-1} \H_0.
\end{aligned}
\end{equation*}
Using the bound on $\H_0$ from \Cref{lem.a.priori.H0} and choosing the optimal $\lambda = k_F^2 (ak_F)^{3/2}$ we conclude the desired bound.
\end{proof}

\subsection{A priori analysis of the interaction}\label{sec.a.priori.interaction}
Next, we study the interaction $V$ and prove a priori bounds on the operators $\Q_2$ and $\Q_4$ (defined in \Cref{eqn.def.Q2,eqn.def.Q4})
and bound the error-term $\mcE_V(\psi)$ from \Cref{eqn.main.expand.V=Q2+Q4}.

\begin{lemma}\label{lem.ini.V=Q2+Q4}
Let $\psi\in \mcF$ be any state. Then 
\begin{equation*}
\expect{\!\ud\Gamma(V)}_{\psi} = \expect{\!\ud\Gamma(V)}_F + \expect{\Q_2 + \Q_4}_{R^*\psi} + \mcE_V(\psi),
\end{equation*}
with
\begin{equation*}
\begin{aligned}
\abs{\mcE_V(\psi)} 
  & \lesssim 
    \eps N a^3 k_F^5 
    + \eps \expect{\Q_4}_{R^*\psi}
    + \eps^{-1} a^3 k_F^5 \expect{\mcN}_{R^*\psi} 
    + \eps^{-1} a^3 k_F^3 \expect{\H_0}_{R^*\psi}
\end{aligned}
\end{equation*}
for any $0 < \eps < 1$.
\end{lemma}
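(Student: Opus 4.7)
The plan is to conjugate $\ud\Gamma(V)$ by the particle-hole transformation $R$, expand via $R^*a_xR = b_x+c_x^*$ into monomials in $\{b, b^*, c, c^*\}$, identify the principal contributions $\expect{\ud\Gamma(V)}_F + \Q_2 + \Q_4$, and bound the remainder by weighted Cauchy--Schwarz.

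\textbf{Step 1 (Expansion).} Starting from $\ud\Gamma(V) = \frac12\iint V(x-y) a_x^* a_y^* a_y a_x\, \ud x\, \ud y$ and inserting $RR^* = \mathbbm{1}$, apply \eqref{eqn.prop.particle-hole.R} to write
\[
R^*\ud\Gamma(V)R = \tfrac12\iint V(x-y)(b_x^*+c_x)(b_y^*+c_y)(b_y+c_y^*)(b_x+c_x^*)\,\ud x\,\ud y,
\]
which expands into a sum of $16$ monomials indexed by $\alpha \in \{b,c\}^4$.

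\textbf{Step 2 (Principal pieces).} The monomial $(b,b,b,b)$ equals (half of) $\Q_4$. The monomials $(b,b,c,c)$ and $(c,c,b,b)$ combine, via the anticommutation $\{b, c\} = \{b, c^*\} = 0$, into $\Q_2$. Finally $(c,c,c,c) = c_xc_yc_y^*c_x^*$ is brought into normal-ordered form using $\{c_x,c_y^*\} = v(x-y)$; the fully contracted (constant) part yields $\frac12\iint V(x-y)(\rho^2 - |v(x-y)|^2)\,\ud x\,\ud y = \expect{\ud\Gamma(V)}_F$, while the operator remainder contributes to $\mcE_V$.

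\textbf{Step 3 (Error classification).} Collect the remaining $11$ monomials together with the operator corrections from $(c,c,c,c)$ as $\mcE_V(\psi)$. These fall into: (i) 3-$b$-1-$c$ monomials and their h.c.\ 1-$b$-3-$c$ counterparts, containing factors $b_yb_x$ naturally controlled by $\Q_4$; (ii) 2-$b$-2-$c$ monomials not matching $\Q_2$, which after normal ordering produce either $c^*b^*bc$-type pieces (non-negative, $= \|bc\psi\|^2$) or $v(0)b^*b$-pieces; (iii) $v(\cdot)$-weighted quadratic-in-$c$ corrections from the $(c,c,c,c)$ reduction. The $v(0)\|V\|_{L^1}$-type contributions from classes (ii) and (iii) that individually scale as $ak_F^3\expect{\mcN}$ must cancel in expectation: on $N$-particle states with $N = \#B_F$ one has $\expect{\mcN^b - \mcN^c} = 0$, and this cancellation is what allows the $\eps N a^3k_F^5$ residual to absorb the leftover mismatch.

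\textbf{Step 4 (Bounds).} Apply weighted Cauchy--Schwarz term by term. For a representative 3-$b$-1-$c$ monomial,
\[
\Big|\iint V(x-y)\langle b_yb_x\psi,\, b_yc_x^*\psi\rangle\,\ud x\,\ud y\Big|
\leq \eps\iint V\|b_yb_x\psi\|^2 + \eps^{-1}\iint V\|b_yc_x^*\psi\|^2,
\]
the first term yielding $2\eps\expect{\Q_4}$ and the second, after using $c_xc_x^* = \rho - c_x^*c_x$ and a momentum-space decomposition of $b_y$ into $\H_0$- and $\mcN$-controlled parts, yielding a contribution $\eps^{-1}(a^3k_F^5\expect{\mcN} + a^3k_F^3\expect{\H_0})$. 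The non-negative $\|bc\psi\|^2$-type terms and the $v$-weighted quadratic corrections are controlled by analogous arguments, each contributing to one of the four terms on the right of the claimed inequality.

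\textbf{Main obstacle.} The delicate point is extracting the scaling $a^3k_F^5$ in the $\eps^{-1}\expect{\mcN}$ coefficient: a naive bound using $\|V\|_{L^1} \lesssim a$ and $v(0) = \rho \sim k_F^3$ would produce $\eps^{-1}ak_F^3\expect{\mcN}$, worse than required by a factor $(ak_F)^{-2}$. The required improvement demands a careful momentum-space decomposition of each $b$-operator, trading $\mcN$-factors for $\H_0$-factors at a rate that preserves dimensions but gains the necessary powers of $ak_F$, together with the exploitation of the $\mcN^b = \mcN^c$ identity (in expectation) on $N$-particle states with $N = \#B_F$. Treating these residual cancellations and momentum-space estimates cleanly is the main technical content of the lemma.
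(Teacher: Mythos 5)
Your proposal gets the overall architecture right — conjugate by $R$, expand, identify $\expect{\!\ud\Gamma(V)}_F + \Q_2 + \Q_4$ as the principal pieces, and bound the remainder by weighted Cauchy--Schwarz — and you correctly flag the key quantitative difficulty: a naive bound on the $b^*c^*cb$-type and $c^*c^*cc$-type pieces gives only $\eps^{-1} a k_F^3\expect{\mcN}$, off by $(ak_F)^{-2}$. However, the mechanism you propose for closing that gap (a momentum-space decomposition of the $b$-operators plus the identity $\expect{\mcN^b - \mcN^c}=0$ on $N$-particle states) is not what makes the proof work, and I do not see how it could. The paper's argument is entirely configuration-space and does not rely on $\mcN^b=\mcN^c$ at all; rather, it exploits two structural vanishings at spatial coincidence:
\begin{itemize}
\item For the $vvVvv$ piece $c_x^* c_y^* c_y c_x$, Taylor expand $c_y$ around $c_x$ via $c_y = c_x + (y-x)\cdot\int_0^1 \nabla c_{x+t(y-x)}\,\ud t$. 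Since $c_x^2=0$, the zeroth order term drops, and each of $c_y, c_y^*$ contributes a factor $|x-y|$. Combined with $\int |x|^2 V(x)\,\ud x \lesssim a^3$ (instead of $\norm{V}_{L^1}\lesssim a$) this is exactly the gain $a\to a^3$.
\item For the $(vu+uv)V(vu+uv)$ piece, define $\phi(x,y) = \expect{b_x^*c_x^*c_yb_y + b_y^*c_y^*c_xb_x - b_x^*c_y^*c_yb_x - b_y^*c_x^*c_xb_y}_\xi$. This function is symmetric, $\phi(x,y)=\phi(y,x)$, and — crucially — $\phi(x,x)=0$ (the four terms cancel at $y=x$). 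Hence its Taylor expansion around $y=x$ starts at second order, $\phi(x,y) = O(|x-y|^2)$, and again $\iint V(x-y)\phi(x,y)$ picks up $\int|x|^2 V\lesssim a^3$. The extra two derivatives land on $b$, $c$ and produce $\H_0$, $k_F^2\mcN$ after $L^2$ integration.
\end{itemize}
The quadratic ($\X_\eps$) piece is handled by yet a third Taylor argument, $\hat V(k) = \hat V(0) + O(a^3|k|^2)$ for radial $V$, again not a number-operator cancellation. In other words, the missing ingredient in your write-up is the configuration-space Taylor expansion around $x=y$ combined with the algebraic vanishings $c_x^2=0$ and $\phi(x,x)=0$; without these, the estimates in your Step~4 stall at $\eps^{-1} a k_F^3\expect{\mcN}$, which is too large by two powers of $ak_F$. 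As a secondary organizational remark: the paper avoids the $16$-monomial bookkeeping by writing the operator inequality $V \geq (1-\eps)vvVvv + vvVuu + uuVvv + (1-\eps)uuVuu + (1-2/\eps)(vu+uv)V(vu+uv)$ directly at the level of the two-body operator, which channels all the $3$-$b$-$1$-$c$ monomials into the $(vu+uv)V(vu+uv)$ piece in one stroke and makes the sign structure transparent; your expansion is not wrong, but this is cleaner.
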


\begin{proof}
Recall \Cref{def.u.v.proj} and the short-hand notation $uu$ for $u \otimes u$. Viewing $V$ as a two-particle operator, we can write
\begin{equation*}
V = [vv + vu + uv + uu] V [vv + vu + uv + uu] 
\end{equation*}
and expand the product. 
Since $V\geq 0$ we may bound the cross-terms as 
\begin{equation*}
\begin{aligned}
\pm (uu V (vu+uv) + \hc) & \leq \eps uu V uu + \frac{1}{\eps} (vu + uv)V (vu+uv)
\\
\pm (vv V (vu+uv) + \hc) & \leq \eps vv V vv + \frac{1}{\eps} (vu+uv)V (vu+uv)
\end{aligned}
\end{equation*}
for any $\eps > 0$. Then we have the bound 
\begin{equation}\label{bound:v}
V \geq (1-\eps) vvVvv + vvVuu + uuVvv + (1-\eps) uuVuu + \left(1 - \frac{2}{\eps}\right)(vu+uv)V(vu+uv).
\end{equation}
(An analogous upper bound on $V$ holds if the signs in front of all $\eps$-dependent terms are reversed.)
Next, we shall write \eqref{bound:v} in second quantization,  conjugate by the particle-hole  transformation $R$ in \eqref{def:ph}, 
 use \Cref{eqn.prop.particle-hole.R} and normal order the terms. 
We leave out the details of the computations; 
they can be found for instance in \cite[Proposition 4.1]{Hainzl.Porta.ea.2020}.
The result is 
\begin{equation}\label{eqn.bdd.R*VR}
R^* \!\ud \Gamma(V) R 
  \geq (1-\eps) \expect{\!\ud\Gamma(V)}_F + \Q_2 + (1-\eps)\Q_4 + \X_\eps + \Q_{\eps}.
\end{equation}
(Again, an analogous upper bound holds reversing the sign of $\eps$.) Here $\Q_2,\Q_4$ are as in \Cref{eqn.def.Q2,eqn.def.Q4} and 
\begin{align}
\X_\eps & = \frac{1-\eps}{L^3} \sum_{\substack{p,q \\ p,q \in B_F}} [\hat V(p-q) - \hat V(0)] c_{q}^* c_q
  - \frac{1 - 2/\eps}{L^3} \sum_{\substack{p,q \\ p \in B_F \\ q \notin B_F}} [\hat V(p-q) - \hat V(0)] b_{q}^* b_q  ,
  \nonumber
\\
\Q_{\eps}
  & = \frac{1-\eps}{2} \iint V(x-y) c_x^* c_y^* c_y c_x \ud x \ud y 
  \nonumber 
  \\ & \quad 
    +  \left(\frac{1}{2} - \frac{1}{\eps}\right) \iint V(x-y) 
    	[ b_x^* c_x^* c_y b_y + b_y^* c_y^* c_x b_x - b_x^* c_y^* c_y b_x - b_y^* c_x^* c_x b_y] \ud x \ud y.
    \label{eqn.def.Qeps}
\end{align}
In particular, for any $\eps>0$
\begin{equation*}
\abs{\mcE_V(\psi)} 
	\leq \eps \expect{\!\ud\Gamma(V)}_F + \eps \expect{\Q_4}_{R^*\psi} + \abs{\expect{\X_{\pm\eps}}_{R^*\psi}} + \abs{\expect{\Q_{\pm\eps}}_{R^*\psi}}
\end{equation*}
(where the $\pm \eps$ has to be interpreted as taking the maximum over the two terms, either with $\eps$ or $-\eps$). 
To bound this we first note that $\expect{\!\ud\Gamma(V)}_F \leq C N a^3 k_F^5$. 
Indeed, by a simple Taylor-expansion, the two-particle density of the free Fermi gas is bounded by 
$\rho^{(2)}(x,y) \leq C k_F^8 |x-y|^2$, see \cite[Lemma 2.14]{Lauritsen.Seiringer.2024} or \Cref{rho2b} above.
Hence 
\begin{equation*}
\expect{\!\ud\Gamma(V)}_F = \frac{1}{2} \iint V(x-y) \rho^{(2)}(x,y) \ud x \ud y 
	\leq C k_F^8 \iint V(x-y) |x-y|^2 \ud x \ud y \leq C N a^3 k_F^5.
\end{equation*}
(Here $\int |x|^2 V \ud x$ is a finite constant of dimension $(\textnormal{length})^3$ and hence we write it as $C a^3$, 
as discussed in \Cref{rmk.length=a.in.bdds}.)
We are left with bounding the operators $\X_{\pm\eps}$ and $\Q_{\pm\eps}$.

To bound $\X_{\pm\eps}$ view $\hat V(k) = \int_\Lambda V(x) e^{-ikx} \ud x = \int_{\R^3} V(x) e^{-ikx}$ (for $L$ sufficiently large, by the compact support of $V$)
as defined on all of $\R^3$ (i.e., not just on $\frac{2\pi}{L}\Z^3$) and Taylor expand.
We find $\hat V(k) = \hat V(0)+ O(|k|^2 \int_{\R^3} |x|^2V(x) \ud x) = \hat V(0) + O(a^3|k|^2)$
since $V$ is a real-valued radial function.
Thus, for any state $\xi\in \mcF$,
\begin{equation*}
\begin{aligned}
\abs{\expect{\X_{\pm\eps}}_\xi}
  & \lesssim \frac{a^3 }{L^3}  \sum_{k\in B_F} \sum_{\ell \in B_F} |k-\ell|^2 \expect{c_\ell^*c_\ell}_\xi 
  \\ & \quad 
    + \eps^{-1} \frac{a^3}{L^3} 
    \left[\sum_{k\in B_F} |k|^2 \sum_{\ell} \expect{b_\ell^*b_\ell}_\xi 
    + \sum_{k\in B_F} \sum_{\ell} |\ell|^2 \expect{b_\ell^*b_\ell}_\xi \right]
  \\ & 
    \lesssim
    \eps^{-1} a^3 k_F^5 \expect{\mcN}_\xi + \eps^{-1} a^3 k_F^3 \expect{\H_0}_\xi
\end{aligned}
\end{equation*}
(for any $0 < \eps < 1$) by noting that $\sum_k |k|^2 a_k^* a_k \leq \H_0 + k_F^2 \mcN$.

Finally, we shall show that $\Q_{\pm\eps}$ is suitably small. Note that the first term in \Cref{eqn.def.Qeps} is actually non-negative for $0\leq \eps\leq 1$ and could be dropped for a lower bound; since we will also derive an upper bound with our method, however, we shall not do this and estimate it instead. The main idea is to ``Taylor expand in $x-y$''. 
More precisely, for the first term in \Cref{eqn.def.Qeps} we write 
\begin{equation}
\label{eqn.Taylor.c_x.1st.order}
c_y = c_x + (y-x) \cdot \int_0^1 \nabla c_{x + t(y-x)} \ud t.
\end{equation}
The integration has to be understood as connecting the two points $x$ and $y$ by the shortest line on the torus. 
We apply this to the factors $c_y^*$ and $c_y$.
Noting that $c_x^2 = 0$ (recall that $c_x$ is in fact a bounded operator) 
and changing variables to $z = y-x$  we then find (recalling that $V$ is even)  
\begin{equation*}
\iint V(x-y) c_x^* c_y^* c_y c_x \ud x \ud y 
	= \iint V(z) z^\mu z^\nu 
      \int_0^1 \ud t\int_0^1 \ud s \, 
      c_x^* \nabla^\mu c_{x+tz}^* \nabla^\nu c_{x+sz} c_x 
      \ud x \ud z.
\end{equation*}
In norm we have $\norm{\nabla c_x}\leq C k_F^{3/2+1}$.
Thus, the expectations value of the first term in \Cref{eqn.def.Qeps} in any state $\xi$ is bounded by 
\begin{equation*}
\abs{\frac{1\mp \eps}{2} \iint V(x-y) \expect{c_x^* c_y^* c_y c_x}_\xi \ud x \ud y} 
	\leq  C k_F^5 \int |z|^2 V(z) \ud z \int \norm{c_x \xi}^2 \ud x.
\end{equation*}
Furthermore, 
\begin{equation*}
\int \norm{c_x\xi}^2 \ud x 
  = \sum_{k\in B_F} \longip{\xi}{a_k^*a_k}{\xi} 
  \leq \expect{\mcN}_\xi.
\end{equation*}

Denote the second term of \Cref{eqn.def.Qeps} by $\Q_{\eps,2}$ and define for any state $\xi$ the function
$\phi(x,y) = \expect{b_x^* c_x^* c_y b_y + b_y^* c_y^* c_x b_x - b_x^* c_y^* c_y b_x - b_y^* c_x^* c_x b_y}_\xi$.
We note that $\phi(y,x) = \phi(x,y)$ and $\phi(x,x) = 0$. Thus, by Taylor-expansion in $y$ around $y=x$ 
the zeroth and first orders vanish, and hence  
\begin{equation*}
\phi(x,y) = (y-x)^\mu (y-x)^\nu \int_0^1 \ud t \, (1-t) [\nabla_2^\mu \nabla_2^\nu \phi](x,x+t(y-x)).
\end{equation*}
(Here $\nabla^\mu_2$ denotes a derivative in the second variable.)
Thus (changing variables to $z=x-y$ and using that $V$ is even) we have 
\begin{equation*}
\expect{\Q_{\pm\eps,2}}_\xi
	= \left(\frac{1}{2} \mp \frac{1}{\eps}\right) 
		\int_0^1 \ud t \, (1-t) 
		\int \ud z \, z^\mu z^\nu V(z) 
		\int \ud x \, [\nabla_2^\mu \nabla_2^\nu \phi](x,x+tz).
\end{equation*}
The $x$-integral is given by (with the derivatives now being with respect to $x$)
\begin{equation*}
\begin{aligned}
& \int \ud x \,
	\Bigl< b_x^* c_x^* [ \nabla^\mu \nabla^\nu c_{x+tz} b_{x+tz}] 
		+ [\nabla^\mu \nabla^\nu  b_{x+tz}^* c_{x+tz}^*] c_x b_x
		- b_x^*[ \nabla^\mu \nabla^\nu c_{x+tz}^* c_{x+tz}]  b_x
	\\ & \qquad 
		- [\nabla^\mu \nabla^\nu b_{x+tz}^*] c_x^* c_x b_{x+tz}
		- b_{x+tz}^* c_x^* c_x [\nabla^\mu\nabla^\nu b_{x+tz}]
		- ( [\nabla^\mu b_{x+tz}^*] c_x^* c_x [\nabla^\nu b_{x+tz}] + (\mu \leftrightarrow \nu))
	\Bigr>_\xi
\\ & 
	= \int \ud x \, 
	\Bigl<
		- [\nabla^\mu b_x^* c_x^*] [\nabla^\nu c_{x+tz} b_{x+tz}]  
		- [\nabla^\mu b_{x+tz}^* c_{x+tz}^*] [\nabla^\nu c_{x} b_{x}] 
		- b_x^* [\nabla^\mu \nabla^\nu c_{x+tz}^* c_{x+tz}]  b_x
	\\ & \qquad 
		+ [\nabla^\mu b_{x+tz}^*]  [\nabla^\nu c_x^* c_x b_{x+tz}]
		+ [\nabla^\mu b_{x+tz}^* c_x^* c_x] [\nabla^\nu b_{x+tz}]
		- ([\nabla^\mu b_{x+tz}^*] c_x^* c_x [\nabla^\nu b_{x+tz}] + (\mu \leftrightarrow \nu))
	\Bigr>_\xi
\end{aligned}
\end{equation*}
by integration by parts. Using the norm  estimate $\norm{\nabla^n c_x}\leq C k_F^{3/2+n}$ with $\nabla^n$ denoting any $n$'th derivative we thus bound the $x$-integral with the aid of Cauchy--Schwarz as
\begin{equation*}
\begin{aligned}
\lesssim 
 k_F^3 \int \ud x \norm{\nabla b_x \xi}^2 
 + k_F^5 \int \ud x \norm{b_x\xi}^2.
\end{aligned}
\end{equation*}
Noting that 
\begin{equation*}
\begin{aligned}
\int \norm{b_x\xi}^2 \ud x 
  & 
  = \sum_{k\notin B_F} \longip{\xi}{a_k^*a_k}{\xi} 
  \leq \expect{\mcN}_\xi,
\\
  \int \norm{\nabla b_x \xi}^2 \ud x 
    & 
    = \sum_{k\notin B_F} |k|^2 \longip{\xi}{a_k^* a_k}{\xi} 
    \leq \expect{\H_0}_\xi + k_F^2 \expect{\mcN}_\xi
\end{aligned}
\end{equation*}
we thus obtain 
\begin{equation*}
\abs{\expect{\Q_{\pm \eps,2}}_\xi} 
	\lesssim  \eps^{-1} a^3 k_F^5 \expect{\mcN}_\xi + \eps^{-1} a^3 k_F^3 \expect{\H_0}_\xi.
\end{equation*}
Together with the bounds above, this concludes the proof of the lemma.
\end{proof}

Additionally, we need a priori bounds on $\Q_2$ and $\Q_4$ (defined in \Cref{eqn.def.Q2,eqn.def.Q4}).
These follow from arguments similar to the analogous bounds in \cite[Proposition 3.3(e) and Lemma 3.9]{Falconi.Giacomelli.ea.2021}.

\begin{lemma}[{A priori bound for $\Q_2$}]\label{lem.a.priori.Q2}
For suitable $C>0$ we have 
\begin{equation*}
\abs{\expect{\Q_2}_{\xi}} \leq \delta \expect{\Q_4}_{\xi} + C \delta^{-1} N a^3 k_F^5.
\end{equation*}
for any $\delta > 0$ and any state $\xi\in \mcF$.
\end{lemma}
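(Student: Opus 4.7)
The plan is to split $\langle \Q_2\rangle_\xi$ via Cauchy--Schwarz into a piece controlled by $\Q_4$ and a piece involving only the $c^*$-operators on the Fermi sea, and then exploit the Pauli structure $c_x^* c_x^* = 0$ via a Taylor expansion of $c_y^*$ around $c_x^*$ to gain two factors of $|x-y|$ from $V$.

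First, I would write
\begin{equation*}
  \langle \Q_2\rangle_\xi
  = \tRe \iint V(x-y) \langle b_y b_x\, \xi,\; c_y^* c_x^*\, \xi\rangle\, \ud x\, \ud y,
\end{equation*}
using $\langle \xi, b_x^* b_y^* c_y^* c_x^* \xi\rangle = \langle b_y b_x\xi, c_y^* c_x^* \xi\rangle$. Two applications of Cauchy--Schwarz then give, for any $\delta>0$,
\begin{equation*}
  |\langle \Q_2\rangle_\xi|
  \leq \tfrac{\delta}{2} \iint V(x-y) \|b_y b_x \xi\|^2 \ud x\, \ud y
     + \tfrac{1}{2\delta} \iint V(x-y) \|c_y^* c_x^* \xi\|^2 \ud x\, \ud y.
\end{equation*}
The first term equals $\delta \langle \Q_4\rangle_\xi$. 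So the task is to bound the second term by $C N a^3 k_F^5 \|\xi\|^2$.

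The naive bound $\|c_x^*\|\|c_y^*\|\leq C k_F^3$ gives only $N a k_F^3$, which is too large. To improve it, I would use $(c_x^*)^2 = 0$, which gives
\begin{equation*}
  c_y^* c_x^* = (c_y^* - c_x^*) c_x^* = (y-x)^\mu \int_0^1 \nabla^\mu c_{x+t(y-x)}^* \ud t\; c_x^*.
\end{equation*}
Combined with the norm estimate $\|\nabla c_z^*\|\leq C k_F^{5/2}$ (coming directly from the definition $c_z = L^{-3/2}\sum_{k\in B_F} e^{-ikz} a_k$ and $\sum_{k\in B_F} |k|^2 \leq C k_F^5 L^3$), this yields $\|c_y^* c_x^*\xi\|^2 \leq C k_F^5 |x-y|^2 \|c_x^* \xi\|^2$. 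Integrating against $V$ and using $\int|z|^2 V(z)\ud z \leq C a^3$ (as in \Cref{rmk.length=a.in.bdds}),
\begin{equation*}
  \iint V(x-y) \|c_y^* c_x^* \xi\|^2 \ud x\, \ud y
  \leq C a^3 k_F^5 \int \|c_x^* \xi\|^2 \ud x
  \leq C N a^3 k_F^5 \|\xi\|^2,
\end{equation*}
where the last step uses $\int \|c_x^*\xi\|^2\ud x = \sum_{k\in B_F}\langle \xi, a_k a_k^*\xi\rangle \leq \#B_F\, \|\xi\|^2 = N\|\xi\|^2$. Putting the two estimates together gives the claim, with a harmless factor $\|\xi\|^2$ that is $1$ for normalized states.

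The only mild subtlety is the Taylor expansion step on the torus: one must interpret $\int_0^1 \nabla c^*_{x+t(y-x)}\ud t$ along the shortest geodesic connecting $x$ and $y$, which is well-defined since $V$ has compact support and $L$ is large. Aside from this, every ingredient is elementary; the main conceptual point is that the Pauli relation $(c_x^*)^2=0$ is what produces the two extra factors of $|x-y|$ needed to convert the crude $\int V$ into $\int |z|^2 V$, which is the mechanism that turns $a$ into $a^3$.
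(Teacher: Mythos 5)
Your proof is correct and takes essentially the same route as the paper's: split $\Q_2$ by Cauchy--Schwarz into a $\delta\Q_4$ piece and a piece involving $c_y^*c_x^*$, then exploit $(c_x^*)^2=0$ via a first-order Taylor expansion to gain $|x-y|$ and convert $\int V$ into $\int |z|^2 V \leq Ca^3$. The only (cosmetic) difference is in the final step: the paper bounds $\|c_y^*c_x^*\| \leq C k_F^4 |x-y|$ in operator norm and then uses $L^3 k_F^3 \sim N$ from the $x$-integration, whereas you keep $c_x^*$ acting on $\xi$, bound only $\|\nabla c^*\| \leq C k_F^{5/2}$ in norm, and use $\int\|c_x^*\xi\|^2\,\ud x \leq N\|\xi\|^2$; both yield the same $N a^3 k_F^5$ bound.
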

\begin{proof}
By Cauchy--Schwarz we have for any $\delta > 0$
\begin{equation*}
\begin{aligned}
2\abs{\expect{\Q_2}_\xi}
  & \leq \iint V(x-y) \abs{\expect{b_x^* b_y^* c_y^* c_x^* + \hc}_\xi} \ud x \ud y
  \\
  & \leq 
    \iint V(x-y) \left(\delta \norm{b_y b_x \xi}^2 + \delta^{-1} \norm{c_y^* c_x^* \xi}^2\right) \ud x \ud y
  \\
  & \leq 2\delta \expect{\Q_4}_\xi + \delta^{-1} \iint V(x-y) \norm{c_y^* c_x^*}^2 \ud x \ud y.
\end{aligned}
\end{equation*}
We Taylor expand $c_y^*$ as in \Cref{eqn.Taylor.c_x.1st.order} above. 
Being fermionic operators $(c_x^*)^2 = 0$. 
In particular we have $\norm{c_y^* c_x^*}\leq C k_F^4 |x-y|$ since $\norm{c_x^*}\leq C k_F^{3/2}$ and $\norm{\nabla c_x^*}\leq C k_F^{3/2+1}$. 
With $\int |x|^2 V \ud x \leq C a^3$ we conclude the proof of the lemma. 
\end{proof}

\begin{lemma}[{A priori bound for $\Q_4$}]\label{lem.a.priori.Q4}
Let $\psi$ be an approximate ground state. Then 
\begin{equation*}
\expect{\Q_4}_{R^*\psi} \leq C N a^3 k_F^5.
\end{equation*}
\end{lemma}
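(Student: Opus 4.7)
The plan is to rearrange the identity of \Cref{lem.ini.V=Q2+Q4} to isolate $\expect{\Q_4}_{R^*\psi}$ and then absorb all $\Q_4$-dependent error contributions into the left-hand side via a suitable choice of small parameters. Concretely, writing
\[
\expect{\Q_4}_{R^*\psi} = \expect{\!\ud\Gamma(V)}_\psi - \expect{\!\ud\Gamma(V)}_F - \expect{\Q_2}_{R^*\psi} - \mcE_V(\psi),
\]
all four terms on the right will be estimated by $C Na^3k_F^5$ possibly plus a small multiple of $\expect{\Q_4}_{R^*\psi}$.

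First I would bound $\expect{\!\ud\Gamma(V)}_\psi$: by \Cref{eqn.KE.particle-hole} and positivity of $\H_0$,
\[
\expect{\!\ud\Gamma(V)}_\psi = \expect{H_N}_\psi - E_F - \expect{\H_0}_{R^*\psi} \leq \expect{H_N}_\psi - E_F \leq C Na^3k_F^5,
\]
the last inequality being the definition of an approximate ground state. Next, the proof of \Cref{lem.ini.V=Q2+Q4} already contains the bound $\expect{\!\ud\Gamma(V)}_F \leq C N a^3 k_F^5$, which is non-negative and hence may simply be dropped when deriving an upper bound on $\expect{\Q_4}_{R^*\psi}$ (alternatively, it is absorbed with a prefactor $C$).

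For $\expect{\Q_2}_{R^*\psi}$, I would invoke \Cref{lem.a.priori.Q2} with a small parameter $\delta$, giving
\[
|\expect{\Q_2}_{R^*\psi}| \leq \delta \expect{\Q_4}_{R^*\psi} + C\delta^{-1} N a^3 k_F^5.
\]
For the error term $\mcE_V(\psi)$ I would use the a priori bounds \Cref{lem.a.priori.H0} ($\expect{\H_0}_{R^*\psi}\leq CNa^3k_F^5$) and \Cref{lem.a.priori.mcN} ($\expect{\mcN}_{R^*\psi}\leq CN(ak_F)^{3/2}$) together with the bound supplied by \Cref{lem.ini.V=Q2+Q4}, so that
\[
|\mcE_V(\psi)| \lesssim \eps N a^3 k_F^5 + \eps \expect{\Q_4}_{R^*\psi} + C\eps^{-1} N a^3 k_F^5 \bigl((ak_F)^{3/2} + (ak_F)^3\bigr).
\]
For $ak_F$ small the last parenthesis is $O(1)$, so $|\mcE_V(\psi)| \leq \eps \expect{\Q_4}_{R^*\psi} + C\eps^{-1} Na^3k_F^5$.

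Combining all of these and choosing $\eps$ and $\delta$ small enough (e.g.\ $\eps=\delta=\tfrac{1}{4}$) to ensure that the prefactor of $\expect{\Q_4}_{R^*\psi}$ on the right is strictly less than $1$, one absorbs the $\Q_4$-terms into the left-hand side and obtains the claimed bound $\expect{\Q_4}_{R^*\psi} \leq C Na^3k_F^5$. There is no real obstacle here since all necessary ingredients are already in place; the only point that requires a little care is making sure that the bounds on $\mcN$ and $\H_0$, which themselves are of the ``right'' order in $ak_F$, combine in the error term $\mcE_V$ to give a contribution of size $Na^3k_F^5$ (up to a prefactor controlled by $\eps$), which is exactly what makes the absorption argument go through.
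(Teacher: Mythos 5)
Your proposal is correct and is essentially the same argument as the paper's: both start from the decomposition of \Cref{lem.ini.V=Q2+Q4}, bound $\expect{\Q_2}_{R^*\psi}$ via \Cref{lem.a.priori.Q2}, use the a priori bounds on $\mcN$ and $\H_0$ from \Cref{lem.a.priori.mcN,lem.a.priori.H0} to control the error $\mcE_V$, and then absorb the $\expect{\Q_4}_{R^*\psi}$ contributions by choosing $\eps,\delta$ small. You merely rearrange the identity to isolate $\expect{\Q_4}_{R^*\psi}$, whereas the paper phrases it as a lower bound on $\expect{H_N}_\psi$; the content is identical.
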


\begin{proof}
Recalling \Cref{eqn.KE.particle-hole}, applying \Cref{lem.ini.V=Q2+Q4} and \Cref{lem.a.priori.Q2} with $\delta = 1/2$, 
and using $\expect{\!\ud\Gamma(V)}_F \geq 0$ we find 
\begin{equation*}
\begin{aligned}
\expect{H_N}_\psi 
  & \geq E_F 
  + \expect{\H_0}_{R^*\psi} 
  + \left(\frac{1}{2} - \eps\right) \expect{\Q_4}_{R^*\psi}
  - C N a^3 k_F^5 
  \\ & \quad 
  - C \eps^{-1} a^3 k_F^5 \expect{\mcN}_{R^*\psi}
  - C \eps^{-1} a^3 k_F^3 \expect{\H_0}_{R^*\psi}.
\end{aligned}
\end{equation*}
Using the a priori bounds of \Cref{lem.a.priori.H0,lem.a.priori.mcN} 
and noting that $\H_0 \geq 0$ we obtain the desired result (by taking $\eps=1/4$, say).
\end{proof}

With the results of this section we can give the

\begin{proof}[Proof of \Cref{eqn.mcE_V.main}]
Combining \Cref{lem.ini.V=Q2+Q4,lem.a.priori.H0,lem.a.priori.Q4,lem.a.priori.mcN}
and choosing the optimal $\eps = (ak_F)^{3/4}$ we conclude the proof of \Cref{prop.mcE_V}.
\end{proof}

\subsection{The scattering function}\label{sec.a.priori.scat.fun}
We now prove some a priori bounds on the scattering function $\varphi$ defined in \Cref{eqn.def.scat.fun.phi} above.

\begin{lemma}[{Properties of $\varphi$}]\label{lem.prop.phi}
The scattering function $\varphi$ satisfies 
\begin{equation*}
\begin{aligned}
\norm{|\cdot|^n \varphi}_{L^1}
& \leq 
C a^3 k_F^{-n}, 
& n &= 1,2,
\quad  
&
\norm{|\cdot|^n \nabla^n \varphi}_{L^1}
& \leq C a^3 \abs{\log ak_F}, 
&n&=0,1,2
\quad  
\\
\norm{ |\cdot| \varphi}_{L^2 }
& \leq 
C a^{3/2+1}, 
&&
\quad 
&
\norm{|\cdot|^n \nabla^n \varphi}_{L^2}
& \leq C a^{3/2}, 
&n&=0,1.
\end{aligned}
\end{equation*}
Here $\nabla^n$ represents any combination of $n$ derivatives
and $\abs{\cdot}$ denotes the metric on the torus in the sense that $\abs{x}$ is the distance between $x$ and the point $0$.

\end{lemma}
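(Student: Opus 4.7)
The plan is to estimate every norm by reducing to an explicit spherically symmetric computation, using that $\varphi$ is supported in a ball of radius $2/k_F$ (so for $L > 4k_F^{-1}$ the periodization adds no overlap, and the metric on the torus coincides with the Euclidean one on $\supp\varphi$). Outside $\supp V$ we know explicitly that $\varphi_0(x)=a^3/|x|^3$, so on $R_0\leq |x|\leq 2k_F^{-1}$ we have
\begin{equation*}
\varphi(x)=\frac{a^3}{|x|^3}\chi_\varphi(k_F|x|),\qquad
\nabla\varphi(x)=-\frac{3a^3 x}{|x|^5}\chi_\varphi(k_F|x|)+\frac{a^3}{|x|^3}\,k_F\chi_\varphi'(k_F|x|)\,\frac{x}{|x|},
\end{equation*}
and similarly for the second derivative, with the $\chi_\varphi',\chi_\varphi''$ terms supported in the annulus $\{k_F^{-1}\leq |x|\leq 2k_F^{-1}\}$. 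Inside $\supp V$ (which has diameter $\leq R_0 \leq Ca$) both $\varphi_0$ and its first two derivatives are bounded by $O(1)$-constants depending only on $V$; this follows from standard regularity theory applied to the radial ODE $g''+4g'/r=-\tfrac12 V(1-g)$ that $\varphi_0(x)=g(|x|)$ satisfies (derived from \Cref{eqn.scat}), with matching condition $g(R_0)=a^3/R_0^3,\ g'(R_0)=-3a^3/R_0^4$. Thus the contribution from $\{|x|\leq R_0\}$ to every norm is bounded by a power of $R_0\leq Ca$, which turns out to be dominated by the ``outer'' contribution in every case.

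The first $L^1$ bound is obtained in spherical coordinates: for $n=1,2$,
\begin{equation*}
\int|x|^n\varphi(x)\,dx\ \leq\ Ca^3\int_{R_0}^{2/k_F}r^{n-1}\,dr+O(a^{3+n})\leq Ca^3k_F^{-n}.
\end{equation*}
The borderline case $n=0$ (and the same integrals for $|x|^n|\nabla^n\varphi|$ when $n=1,2$) gives $Ca^3\int_{R_0}^{2/k_F}r^{-1}dr\leq Ca^3|\log(ak_F)|$, which produces the logarithm; the $\chi_\varphi'$-piece contributes $Ca^3 k_F\int_{1/k_F}^{2/k_F}r^{n-2}dr$, which is of the same or smaller order. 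For $n=2$ one needs the small algebraic cancellation $\Delta(1/|x|^3)=6/|x|^5$ (no logarithm from the principal term, just the explicit radial bound $|x|^2|\nabla^2\varphi_0|\leq Ca^3|x|^{-3}$), and the cut-off derivatives again give a comparable estimate.

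The $L^2$ bounds follow from the same dyadic estimates squared: $\int\varphi^2dx\leq Ca^6\int_{R_0}^{2/k_F}r^{-4}dx\leq Ca^6R_0^{-3}\leq Ca^3$; since $|x||\nabla\varphi_0|\leq Ca^3|x|^{-3}$ outside $\supp V$, the same computation gives $\int(|x||\nabla\varphi|)^2dx\leq Ca^3$, hence the $a^{3/2}$ bounds. For $\||\cdot|\varphi\|_{L^2}$ the integrand is $Ca^6|x|^{-4}$, so $\int_{R_0}^{2/k_F}r^{-2}dr\sim R_0^{-1}\sim a^{-1}$ gives $Ca^5$, hence $\||\cdot|\varphi\|_{L^2}\leq Ca^{5/2}=Ca^{3/2+1}$. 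In every estimate the contribution from $\{|x|\leq R_0\}$ (using the uniform bound on $\varphi_0$ and $\nabla^n\varphi_0$ there) adds only a power of $R_0\leq Ca$ which is dominated.

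The only mild subtlety is bookkeeping of the cut-off derivatives and checking that the interior contribution $|x|\leq R_0$ is always subdominant — this is straightforward here because $R_0\leq Ca\ll k_F^{-1}$, so in each case the outer annulus $R_0\leq|x|\leq 2k_F^{-1}$ yields the stated estimate with room to spare. Periodization is harmless since $\varphi$ has compact support inside the fundamental cell.
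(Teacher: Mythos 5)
Your outer-region computations (using $\varphi_0 = a^3/|x|^3$ for $|x|\geq R_0$ and the explicit cut-off pieces) are fine, and the observation that $a\leq R_0\leq Ca$ (so the split at $R_0$ is the natural one) is correct. However, the treatment of the inner region $|x|\leq R_0$ has a genuine gap.

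You assert that ``$\varphi_0$ and its first two derivatives are bounded by $O(1)$-constants depending only on $V$'' by ``standard regularity theory'' applied to the ODE $g''+4g'/r = -\tfrac12 V(1-g)$. The lemma only assumes $V\in L^1$, and under this hypothesis the right-hand side of the ODE is merely integrable, not bounded; hence $g''$ cannot be controlled pointwise. (Take, e.g., a radial $V$ with $V(r)\sim r^{-3}\log^{-2}(e/r)$ near the origin: it is integrable, but $g''$ inherits the $V$-singularity.) The boundedness of $g'$ near $r=0$ is also not automatic for such $V$. In fact, the hard content of the lemma is precisely to control the interior, since the outer region is explicit. The paper proves this via a monotonicity argument: from the $5$-dimensional radial form of the scattering equation it shows that $\partial_r\bigl[\partial_r\varphi_0 + \tfrac{4}{r}\varphi_0\bigr]\leq 0$, which after integration (and combined with $\partial_r\varphi_0\leq 0$ and the outer boundary value) yields the pointwise bound $\abs{\partial_r\varphi_0(x)}\leq Ca^3/\bigl(\abs{x}(a^3+\abs{x}^3)\bigr)$, valid for all $x$ including inside $\supp V$; this is as singular as $C/r$ near $r=0$, not bounded. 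The ODE is then used to bound $\abs{\partial_r^2\varphi_0}\lesssim a^3/\bigl(\abs{x}^2(a^3+\abs{x}^3)\bigr) + V$, keeping the $V$-term explicit and controlling its contribution to the weighted $L^1$/$L^2$ norms via $\norm{V}_{L^1}$ rather than pointwise. Your estimate would be valid if one additionally assumed $V\in L^\infty$, but that is not what the lemma hypothesizes, and the resulting constants would depend on $\norm{V}_\infty$ rather than only on $\norm{V}_{L^1}$ and $R_0$ as required for \Cref{rmk.length=a.in.bdds}. You should replace the ``standard regularity theory'' claim by the monotonicity argument (or some other device giving the integrable singular bound near $r=0$).
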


\begin{remark}\label{eqn.bdd.mcE_phi.L1}
Recalling the bound in \Cref{eqn.bdd.mcE_varphi} for $\mcE_\varphi$ we see that $\mcE_\varphi$ 
satisfies the same bounds as $\varphi$ only with an additional power $k_F$.  
\end{remark}

\begin{proof}
Recall that the scattering function is given by (the periodization of) $\varphi = \varphi_0 \chi_\varphi(k_F|\cdot|)$. Since $\varphi_0$ is a radial function, so is $\varphi$ (rather, $\varphi_0 \chi_\varphi(k_F|\cdot|)$ is).
In particular, for the bounds we may replace $\nabla^n$ by $\partial_r^n$, 
with $\partial_r$ the derivative in the radial direction.

We first establish the bound 
\begin{equation}\label{eqn.bdd.nabla.phi}
  \abs{\partial_r \varphi_0(x)} \leq \frac{Ca^3}{|x|(a^3+|x|^3)},
  \qquad \textnormal{for all } x\in \R^3.
\end{equation}
To prove this, we note that, in radial coordinates, the scattering equation \eqref{eqn.scat} reads
\begin{equation*}
  r \partial_r^2 \varphi_0 + 4 \partial_r \varphi_0 + \frac{1}{2} r V (1-\varphi_0) = 0.
\end{equation*}
Further, we recall that 
$0 \leq \varphi_0(x) \leq \min\{1, a^3 / |x|^3\}$ for all $x$
and that $\varphi_0$ is a radially decreasing function, which follows from \cite[Lemma A.1]{Lieb.Yngvason.2001} applied to $5$ dimensions. 
Using the scattering equation we then compute 
\begin{equation*}
  \partial_r \left[\partial_r \varphi_0 + \frac{4}{r} \varphi_0 \right]
    = \partial_r^2 \varphi_0 + \frac{4}{r} \partial_r \varphi_0 - \frac{4}{r^2} \varphi_0
    = - \frac{4}{r^2} \varphi_0 - \frac{1}{2} V (1-\varphi_0) \leq 0.
\end{equation*}
Integrating and recalling that $\varphi_0(x) = a^3/|x|^3$ outside the support of $V$ we find 
\begin{equation*}
  \partial_r \varphi_0 + \frac{4}{r} \varphi_0 \geq 0.
\end{equation*}
Since $\partial_r\varphi_0 \leq 0$ this proves \Cref{eqn.bdd.nabla.phi}.
From \Cref{eqn.bdd.nabla.phi} and the scattering equation \eqref{eqn.scat} we immediately find 
\begin{equation*}
\abs{\partial_r^2\varphi_0} 
  \lesssim \frac{a^3}{|x|^2 (a^3 + |x|^3)} + V.
\end{equation*}

To extract the bounds on $\varphi$ 
we note that $\norm{\partial^n \chi_\varphi}_{L^\infty} \leq C$ for any $n$, 
since $\chi_\varphi$ is smooth by assumption. 
Furthermore, also by assumption, $V\in L^1$. Finally, $\varphi$ is supported only for $|x|\leq 2k_F^{-1}$ 
since $\chi_\varphi(t) = 0$ for $t> 2$. 
The bounds immediately follow.
\end{proof}

\section{Calculation of Commutators}\label{sec.calc.commutators}
In this section we shall calculate the commutators $[\H_0,B]$, $[\Q_4,B]$ and $[\Q_2,B]$ 
and find formulas for the error-terms $\mcE_{\Q_2}(\psi)$ and $\mcE_{\textnormal{scat}}(\psi)$ in \Cref{eqn.def.mcE_Q2,eqn.def.mcE_scat}.
The computations of the commutators are analogous to those in \cite{Falconi.Giacomelli.ea.2021,Giacomelli.2023},
where similar commutators are computed.

\subsection{\texorpdfstring{$[\H_0,B]$}{[H0,B]}:}
We have (recalling the formula for $\H_0$ in \Cref{eqn.KE.particle-hole} and using the representation of $B$ in momentum-space in \Cref{eqn.def.B})
\begin{equation*}
[ \H_0, B]	
	= \frac{1}{2L^3} \sum_{p,q,k,k'} ||q|^2 - k_F^2| \hat\varphi(p) [a_q^* a_q, b_{k+p}^r b_{k'-p}^r c_{k'} c_k] + \hc.
\end{equation*}
To compute the commutator 
we note that $[a_q^* a_q, a_k] = - \delta_{q,k} a_k $, hence
\begin{equation*}
\begin{aligned}
[a_q^* a_q, b_{k+p}^r b_{k'-p}^r c_{k'} c_k] 
	& = 
	- (\delta_{q,k+p} + \delta_{q,k'-p} + \delta_{q,k'} + \delta_{q,k}) b_{k+p}^r b_{k'-p}^r c_{k'} c_k.
\end{aligned}
\end{equation*}
Next, for $k+p,k'-p\notin B_F$ and $k,k'\in B_F$ we have 
\begin{equation*}
\begin{aligned}
||k+p|^2 - k_F^2| + ||k'-p|^2 - k_F^2| + ||k|^2 - k_F^2| + ||k'|^2 - k_F^2|
&
= |k+p|^2 + |k'-p|^2 -|k|^2 -|k'|^2 
\\ & = 2 |p|^2 + 2 p\cdot (k-k').
\end{aligned}
\end{equation*}
We conclude that 
\begin{equation*}
\begin{aligned}
  [\H_0, B] & = 
    - \frac{1}{L^3} \sum_{p,k, k'} \left(|p|^2 + p \cdot (k - k')\right) \hat\varphi(p) 
    b_{k+p}^r b_{k'-p}^r c_{k'} c_k + \hc.
\end{aligned}     
\end{equation*}
By the symmetry $k\leftrightarrow k'$ and $p\to -p$ we may write the second summand as 
\begin{equation*}
- \frac{2}{L^3} \sum_{p,k, k'} p\cdot k \hat\varphi(p)     
b_{k+p}^r b_{k'-p}^r c_{k'} c_k + \hc.
\end{equation*}
Writing then in configuration space we get (recall the Einstein summation convention, introduced in Notation~\ref{Enot}.)
\begin{equation*}
[\H_0, B]
  = \iint \left(\Delta \varphi(x-y) b_x^r b_y^r c_y c_x  + 2 \nabla^\mu \varphi(x-y) b_x^r b_y^r c_y \nabla^\mu c_x\right) \ud x \ud y + \hc.
\end{equation*}
We first replace the $b$'s by their non-regularized counterparts. That is, we write 
\begin{equation}\label{eqn.[H0.B].ini}
[\H_0, B]
  = \iint \left(\Delta \varphi(x-y) b_x b_y c_y c_x  + 2 \nabla^\mu \varphi(x-y) b_x b_y c_y \nabla^\mu c_x\right) \ud x \ud y + \hc
    + \H_{0;B}^{\div r}
\end{equation}
with $\H_{0;B}^{\div r}$ defined so that this holds. 
Write now (similarly as in \Cref{eqn.Taylor.c_x.1st.order})
\begin{align}
c_x 
  & = c_y + (x-y)^\mu \int_0^1 \ud t \, \nabla^\mu c_{y+t(x-y)} 
  \nonumber
  \\
  & = c_y + (x-y)^\mu \nabla^\mu  c_x - (x-y)^\mu (x-y)^\nu  \int_0^1 \ud t \, (1-t) \nabla^\mu \nabla^\nu c_{x + t(y-x)}.
\label{eqn.expand.cx.2nd.order}
\end{align}
Note that the first order term $\nabla^\mu c_x$ is evaluated at $x$ and not at $y$. 
We apply \Cref{eqn.expand.cx.2nd.order} to the factor $c_x$ in the first term in \Cref{eqn.[H0.B].ini} above. 
Being a fermionic operator $c_y^2 = 0$. 
Thus, we have 
\begin{equation}\label{eqn.calc.[H0.B]}
\begin{aligned}
[\H_0,B]
  & = \iint \left[(x-y)^\mu \Delta \varphi(x-y)   + 2 \nabla^\mu \varphi(x-y) \right] b_x b_y c_y \nabla^\mu c_x \ud x \ud y + \hc
    + \H_{0;B}^{\textnormal{Taylor}} + \H_{0;B}^{\div r},
\end{aligned}
\end{equation}
with $\H_{0;B}^{\textnormal{Taylor}}$ as in the first term of \Cref{eqn.[H0.B].ini} only with the factor $c_x$ replaced by 
the last term of \Cref{eqn.expand.cx.2nd.order}. 
It is an error term.

\subsection{\texorpdfstring{$[\Q_4,B]$}{[Q4,B]}:}
Recall the definition of $\Q_4$ from \Cref{eqn.def.Q4} and of $B$ from \cref{eqn.def.B}. 
We compute 
\begin{equation*}
[\Q_4,B]
  = \frac{1}{4} \iiiint \ud x \ud y \ud z \ud z' \, V(x-y) \varphi(z-z') 
   [ b_x^* b_y^* b_y b_x, b_z^r b_{z'}^r c_{z'} c_z]
   + \textnormal{h.c.}
\end{equation*}
The commutator is (using that $b$'s and $c$'s have disjoint support in momentum-space so they anti-commute: $\{b_x^*, c_y\}=0$)
\begin{equation*}
[ b_x^* b_y^* b_y b_x, b_z^r b_{z'}^r c_{z'} c_z  ]
  = [ b_x^* b_y^*, b_z^r b_{z'}^r  ] b_y b_x c_{z'} c_z.
\end{equation*}
Using that 
\begin{equation*}
 \{b_x^*, b_y^r\} = \frac{1}{L^3} \sum_{k} \hat u^r (k) e^{ik(y-x)} = u^r(y-x) = u^r(x-y)
\end{equation*}
one computes 
\begin{equation}\label{eqn.calc.[bb.bb]}
\begin{aligned}
[b_{x}^* b_{y}^*, b_{z}^r b_{z'}^r ]
& = 
   u^{r}(x-z) u^{r}(y-z') - u^{r}(y-z)u^{r}(x-z') 
\\ & \quad 
  + u^{r}(y-z) b_x^* b_{z'}^r + u^{r}(x-z') b_y^* b_{z}^r
  - u^{r}(x-z) b_y^* b_{z'}^r - u^{r}(y-z') b_x^* b_z^r
\end{aligned}
\end{equation}
The two first terms, respectively the four last terms, give the same contributions to $[\Q_4,B]$ when the integration is performed by the symmetries 
$x\leftrightarrow y$ and $z\leftrightarrow z'$.
We write the $u^r$'s as $u^r = \delta - v^r$.
The quartic term with two $\delta$'s is the main term. 
In conclusion then 
\begin{equation}\label{commQ4}
\begin{aligned}
[\Q_4, B]
  & = 
    - \frac{1}{2} \iint V(x-y) \varphi(x-y) b_x b_y c_y c_x \ud x \ud y 
  + \hc
  + \Q_{4;B}^{\mcE}
\end{aligned}
\end{equation}
with 
\begin{equation}\label{eqn.def.[Q4.B]-error}
\begin{aligned}
  \Q_{4;B}^{\mcE}
  & =
    \frac{1}{2}
    \iiiint 
    V(x-y) \varphi(z-z') 
      \Bigl[
      \bigl(
      2\delta(y-z)v^r(x-z') 
      - v^r(x-z')v^r(y-z)
      \bigr)
      b_y b_x c_{z'} c_z
  \\ & \qquad  
      -
      2 (\delta(x-z) - v^r(x-z))b_y^* b_{z'}^r b_y b_x c_{z'} c_z
      \Bigr]
    \ud x \ud y \ud z \ud z' 
  + \hc.
\end{aligned}
\end{equation}
In the first term we rewrite $c_x$ using \Cref{eqn.expand.cx.2nd.order} as for the case of $\H_0$. 
Then we find
\begin{equation}\label{eqn.calc.[Q4.B]}
\begin{aligned}
[\Q_4, B]
  & = -\frac{1}{2}\iint (x-y)^\mu V(x-y) \varphi(x-y) b_x b_y c_y \nabla^\mu c_x \ud x \ud y + \hc + \Q_{4;B}^{\textnormal{Taylor}} + \Q_{4;B}^{\mcE}
\end{aligned}
\end{equation}
and $ \Q_{4;B}^{\textnormal{Taylor}}$ as in the first term of \eqref{commQ4} only with $c_x$ replaced by the last term of \Cref{eqn.expand.cx.2nd.order}.

\subsection{\texorpdfstring{$[\Q_2,B]$}{[Q2,B]}:}
Recall the definition of $\Q_2$ from \Cref{eqn.def.Q2} and of $B$ from \Cref{eqn.def.B}. 
We have 
\begin{equation*}
[\Q_2,B]
  = \frac{1}{4} \iiiint V(x-y) \varphi(z-z') [b_x^* b_y^* c_y^* c_x^*, b_z^r b_{z'}^r c_{z'} c_z] \ud x \ud y \ud z \ud z' + \hc.
\end{equation*}
The commutator is given by (recall again that $b$'s and $c$'s anti-commute)
\begin{equation}
\begin{aligned}
[b_{x}^* b_{y}^* c_{y}^* c_x^* , b_{z}^r b_{z'}^r c_{z'} c_z]
  &  =
    - [b_{x}^* b_{y}^*  , b_{z}^r b_{z'}^r ] [c_{y}^* c_x^*, c_{z'} c_z]
    + b_{x}^* b_{y}^* b_{z}^r b_{z'}^r [c_{y}^* c_x^* ,  c_{z'} c_z]
  \\ & \quad
    +  c_{y}^* c_x^* c_{z'} c_z [b_{x}^* b_{y}^*  , b_{z}^r b_{z'}^r].
\end{aligned}
\label{eqn.com.bbcc.b*b*c*c*}
\end{equation}
The $b$-commutator is as in \Cref{eqn.calc.[bb.bb]}. Similarly
\begin{equation}\label{eqn.calc.[cc.cc]}
\begin{aligned}
[c_{y}^* c_{x}^*, c_{z'} c_{z} ]
& = 
  v(x-z) v(y-z') - v(y-z)v(x-z') 
\\ & \quad 
  + v(y-z) c_x^* c_{z'} + v(x-z') c_y^* c_{z}
  - v(x-z) c_y^* c_{z'} - v(y-z') c_x^* c_z
\end{aligned}
\end{equation}
The leading term is the constant (i.e. fully contracted) one. All other terms contribute to the error.
Furthermore, in the constant term we write $u^r = \delta - v^r$. Then the term with all $\delta$'s is the main one and the remainder are errors.
We conclude 
\begin{equation}\label{eqn.calc.[Q2.B]}
\begin{aligned}
[\Q_2,B]
  & = -\frac{1}{2} 
    \iiiint V(x-y) \varphi(z-z') 
    \left[\delta(x-z)\delta(y-z')  - \delta(x-z') \delta(y-z)\right]
  \\ & \hphantom{= -\frac{1}{2} \iiiint } \times 
    \left[v(x-z)v(y-z')  - v(x-z') v(y-z)\right]
    \ud x \ud y \ud z \ud z'
    + \Q_{2;B}^\mcE
  \\
  & 
    = - \iint V(x-y) \varphi(x-y) \rho^{(2)}(x,y) \ud x \ud y  + \Q_{2;B}^\mcE
    = - 2 \expect{\!\ud\Gamma(V\varphi)}_F + \Q_{2;B}^\mcE
\end{aligned}
\end{equation}
using that the two-particle density of the free Fermi state is given by $\rho^{(2)}(x,y) = |v(0)|^2 - |v(x-y)|^2$.

\section{Bounding Error Terms}\label{sec.bdd.error-terms}
To prove the bounds in \Cref{eqn.mcE_scat.main,eqn.mcE_Q2.main} we first  bound 
$[\H_0 + \Q_4, B] + \Q_2$ and $\Q_{2;B}^\mcE$.  
Then in \Cref{sec.proof.final} we use these bounds for the particular states $\xi_\lambda$ from \Cref{eqn.def.xi.lambda} 
with $\psi$ an approximate ground state 
to conclude the proofs of \Cref{eqn.mcE_scat.main,eqn.mcE_Q2.main}.

To bound $[\H_0 + \Q_4, B] + \Q_2$ we first recall the definition of $\Q_2$ in \Cref{eqn.def.Q2}
\begin{equation*}
\Q_2 = \frac{1}{2} \iint V(x-y) b_x b_y c_y c_x \ud x \ud y + \hc.
\end{equation*}
Rewriting the factor $c_x$ as in \Cref{eqn.expand.cx.2nd.order} we thus have 
\begin{equation*}
\Q_2 = \frac{1}{2} \iint (x-y)^\mu V(x-y) b_x b_y c_y \nabla^\mu c_x \ud x \ud y + \hc + \Q_2^{\textnormal{Taylor}}
\end{equation*}
with $\Q_2^{\textnormal{Taylor}}$ appropriately defined.
Then, recalling \Cref{eqn.calc.[H0.B],eqn.calc.[Q4.B]}, we have
\begin{equation}\label{eqn.[H0.B]+Q2.decompose}
\begin{aligned}
[\H_0 + \Q_4, B] + \Q_2 
  & = \Q_{\textnormal{scat}}
    + \H_{0;B}^{\textnormal{Taylor}} + \H_{0;B}^{\div r} + \Q_{4;B}^{\textnormal{Taylor}} + \Q_{4;B}^{\mcE} + \Q_2^{\textnormal{Taylor}}
\end{aligned}
\end{equation}
with (recalling \Cref{eqn.defn.mcE_varphi.scat.eqn})
\begin{equation}\label{eqn.def.Qscat}
\begin{aligned}
\Q_{\textnormal{scat}} 
  & = \iint \mcE_\varphi^\mu(x-y) b_x b_y c_y \nabla^\mu c_x \ud x \ud y + \hc.
\end{aligned}
\end{equation}
Below we bound separately each of the operators 
\begin{equation*}
\T := 
\H_{0;B}^{\textnormal{Taylor}} +  
\Q_{4;B}^{\textnormal{Taylor}} + 
\Q_2^{\textnormal{Taylor}},
\quad 
\Q_{\textnormal{scat}}, \quad 
\H_{0;B}^{\div r}, \quad 
\Q_{4;B}^\mcE
\quad 
\textnormal{and} 
\quad 
\Q_{2;B}^\mcE
\end{equation*}
with $\Q_{2;B}^\mcE$ from \Cref{eqn.calc.[Q2.B]}. 
We shall show that 
\begin{lemma}\label{lemma.bdd.list.operators}
For any state $\xi\in \mcF$, and any $\alpha > 0$, we have 
\begin{align*}
\abs{\expect{\T}_\xi}
+
\abs{\expect{\Q_{\textnormal{scat}}}_\xi}
& \lesssim 
N^{1/2} a^3 k_F^5 (ak_F)^{-3\alpha/2} \abs{\log ak_F}   \expect{\mcN}_\xi^{1/2} 
  + N^{1/2} a k_F^2 (ak_F)^{1/2+\alpha}  \expect{\H_0}_\xi^{1/2},
\\
\abs{\expect{\H_{0;B}^{\div r}}_\xi}
& \lesssim 
N^{1/2}  a^3 k_F^4 \abs{\log ak_F} \expect{\H_0}_{\xi}^{1/2} 
+ N^{1/2} a^3 k_F^5 \abs{\log ak_F} \expect{\mcN}_{\xi}^{1/2},
\\
\abs{\expect{\Q_{4;B}^\mcE}_\xi}
& \lesssim 
N^{1/2} a^3 k_F^4 (ak_F)^{1/2}  \expect{\Q_4}_\xi^{1/2}
  + a^3 k_F^4 \expect{\mcN}_\xi^{1/2}\expect{\Q_4}_\xi^{1/2},
\\
\abs{\expect{\Q_{2;B}^\mcE}_\xi}
& \lesssim 
a^2 k_F^2 \expect{\H_0}_\xi^{1/2} \expect{\Q_4}_{\xi}^{1/2}
+ a^3 k_F^5 \expect{\mcN}_\xi 
+ N a^5 k_F^7 \abs{\log ak_F}.
\end{align*}
\end{lemma}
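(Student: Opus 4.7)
The strategy for all five bounds is a unified Cauchy--Schwarz analysis exploiting three ingredients. First, $c$-operators are uniformly bounded in norm: $\norm{c_x}\leq Ck_F^{3/2}$ and $\norm{\nabla^n c_x}\leq Ck_F^{3/2+n}$. Second, integrating $b$- and $c$-operators produces global quantities:
\begin{equation*}
\int \norm{b_x\xi}^2\ud x \leq \expect{\mcN}_\xi, \qquad \int \norm{\nabla b_x\xi}^2\ud x \leq \expect{\H_0}_\xi + k_F^2\expect{\mcN}_\xi,
\end{equation*}
together with $\int \norm{c_x^*\xi}^2\ud x \leq |B_F| = N$ (from $c^* = a^*$ inside the Fermi ball) and the $\Q_4$-identity $\iint V(x-y)\norm{b_yb_x\xi}^2 \ud x \ud y = 2\expect{\Q_4}_\xi$. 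Third, the relevant kernel norms are controlled by \Cref{lem.prop.phi} and \Cref{eqn.bdd.mcE_phi.L1} for $\varphi$ and $\mcE_\varphi$, and by \Cref{lem.bdd.vr.L1} for $v^r$.

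For $\Q_{\textnormal{scat}}$ and $\T$, each has the schematic form $\iint K(x-y) b_xb_y c_y \nabla^\mu c_x \ud x\ud y + \hc$ with $K$ equal to $\mcE_\varphi^\mu$ for $\Q_{\textnormal{scat}}$ and to weighted combinations such as $(x-y)^\mu\Delta\varphi$, $(x-y)^\mu(x-y)^\nu V\varphi$, $(x-y)^\mu V$ for the three summands of $\T$. Using that $b_xb_y$ anti-commutes with $c_y\nabla^\mu c_x$ and rearranging, a weighted Cauchy--Schwarz with parameter $\lambda$ gives
\begin{equation*}
\abs{\expect{\cdot}_\xi} \leq \frac{\lambda}{2}\iint |K(x-y)|\,\norm{c_y^*(\nabla^\mu c_x)^*\xi}^2 \ud x\ud y + \frac{1}{2\lambda}\iint |K(x-y)|\,\norm{b_xb_y\xi}^2 \ud x\ud y.
\end{equation*}
Bounding one $c^*$ in operator norm and integrating the other yields a factor proportional to $k_F^5 \norm{K}_{L^1} N$. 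On the other side we absorb $b_xb_y$ either via $\expect{\mcN}_\xi$ directly (producing the $\mcN^{1/2}$-type bound) or, after using $b_y = b_x + (y-x)\cdot\int_0^1\nabla b_{x+t(y-x)}\ud t$, via $\expect{\H_0}_\xi$ (producing the $\H_0^{1/2}$-type bound). The parameter $\alpha$ interpolates by shifting a weight $(k_F|x-y|)^\alpha$ between the two sides of Cauchy--Schwarz; the $\abs{\log ak_F}$ arises from $\norm{\varphi}_{L^1}\lesssim a^3\abs{\log ak_F}$.

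For $\H_{0;B}^{\div r}$: the difference $b_x-b_x^r$ has Fourier support in $k_F\leq|k|\leq 3k_F$, so the same splitting applies with number- and kinetic-type integrals for $b-b^r$, and the $\abs{\log ak_F}$ comes from $\norm{\Delta\varphi}_{L^1}$ and $\norm{\nabla\varphi}_{L^1}$. For $\Q_{4;B}^\mcE$ from \Cref{eqn.def.[Q4.B]-error}: the six-operator pieces $b_y^* b^r_{z'} b_yb_xc_{z'}c_z$ split via Cauchy--Schwarz into a $V$-weighted $\norm{b_yb_x\xi}$-integral (extracting $\expect{\Q_4}_\xi^{1/2}$) paired with $\norm{b_y^*b^r_{z'}c_{z'}c_z\xi}$, the latter estimated using $\norm{c}\leq Ck_F^{3/2}$ and $\norm{\varphi}_{L^1}\lesssim a^3\abs{\log ak_F}$; the four-operator pieces with two $v^r$-factors are handled using $\norm{v^r}_{L^1}\leq C$ together again with the $\Q_4$-identity.

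The principal obstacle is $\Q_{2;B}^\mcE$ from \Cref{eqn.calc.[Q2.B]}, which collects \emph{all} subleading contributions in $[\Q_2,B]$ beyond the constant term $-2\expect{\ud\Gamma(V\varphi)}_F$. Following the decomposition \Cref{eqn.com.bbcc.b*b*c*c*} combined with \Cref{eqn.calc.[bb.bb]}, \Cref{eqn.calc.[cc.cc]} and the substitution $u^r=\delta-v^r$, there are three classes of errors. First, eight-operator pieces from the quartic parts of both the $b$- and $c$-commutators, bounded by a Cauchy--Schwarz that separates the $b^*b^*b^rb^r$-string from the $c^*c^*cc$-string; after a Taylor expansion of $c_y^*$ around $c_x^*$ (as in \eqref{eqn.expand.cx.2nd.order}) to manufacture an $|x-y|V$-weight, this gives $\expect{\H_0}_\xi^{1/2}\expect{\Q_4}_\xi^{1/2}$ multiplied by a kernel norm of size $a^2k_F^2$. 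Second, mixed four-operator terms where one commutator contracts fully while the other produces a bilinear, giving $a^3k_F^5\expect{\mcN}_\xi$ via decay estimates on $v$ and an integration against $V\varphi$. Third, fully-contracted contributions from non-leading $v$- and $v^r$-pairings that are purely numerical and produce the $Na^5k_F^7\abs{\log ak_F}$ term by direct $L^1/L^2$-estimates of $V\varphi$ against $v$ and $v^r$. The main technical burden lies not in any single step but in bookkeeping: enumerating inequivalent contractions, tracking inside-versus-outside Fermi-ball momenta, and choosing in each case the Cauchy--Schwarz split that optimally distributes $\mcN$, $\H_0$ and $\Q_4$.
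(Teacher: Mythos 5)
You correctly identify the three driving principles (operator norms for $c$-factors, converting integrated $b$-factors to $\expect{\mcN}$, $\expect{\H_0}$, $\expect{\Q_4}$, and $L^1/L^2$ estimates on $\varphi$, $\mcE_\varphi$, $v^r$), but your proposal omits the single most important technical tool the proof rests on: the Birman--Solomjak trace-norm estimate of \Cref{lem.bdd.F*ac.b(F)} and its consequences \eqref{eqn.bdd.phi*bc}--\eqref{eqn.bdd.phi*bcc}, which give an operator-norm bound on $\int F(x-y)\,a_y c_y(\nabla^n c_{\cdot})\ud y$. Without it, several of your Cauchy--Schwarz splits do not close. For $\T$ and $\Q_{\textnormal{scat}}$ you propose estimating $\iint|K|\norm{b_xb_y\xi}^2\ud x\ud y$ by $\expect{\mcN}_\xi$ or $\expect{\H_0}_\xi$: but with no $V$-factor present this cannot be turned into $\Q_4$, and since \emph{both} $b$-operators are unbounded the quantity is generically of size $\expect{\mcN^2}_\xi$, not $\expect{\mcN}_\xi$. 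The paper instead keeps only \emph{one} $b$-factor acting on $\xi$ (giving $\int\norm{b_y\xi}\ud y\leq L^{3/2}\expect{\mcN}_\xi^{1/2}$), bounds $b_x^<$ in norm after a momentum cutoff, and treats $b_x^>$ by applying \Cref{lem.bdd.F*ac.b(F)}. Likewise for $\Q_{4;B}^{\mcE}$ your second Cauchy--Schwarz factor $\norm{b_y^*b_{z'}^r c_{z'}c_z\xi}$ still carries two unbounded $b$-operators; the paper avoids this by bounding $\norm{\int\varphi(z-z')b_{z'}^r c_{z'}c_z\ud z'}\lesssim k_F^{3/2}(ak_F)^{5/2}$ as an operator (using the cancellation $c_{z'}^2=0$ together with the trace-norm lemma) and only then Cauchy--Schwarzing the remaining $\norm{b_y\xi}\norm{b_xb_y\xi}$ against $V$.

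Your proposed $\alpha$-interpolation mechanism also does not match the stated bound. Shifting a position-space weight $(k_F|x-y|)^\alpha$ between the two sides of a Cauchy--Schwarz would change the $L^1$-norms of $K$, not produce a factor $(ak_F)^{-3\alpha/2}$ on one side and $(ak_F)^\alpha\expect{\H_0}^{1/2}$ on the other. In the paper $\alpha$ is a momentum cutoff: $b_x$ is split as $b_x^<+b_x^>$ with $b_x^>$ supported on $|k|>k_F(ak_F)^{-\alpha}$, so that $\norm{b_x^<}\leq Ck_F^{3/2}(ak_F)^{-3\alpha/2}$ while the high-momentum piece is controlled through $\mcN_{>\alpha}\leq Ck_F^{-2}(ak_F)^{2\alpha}\H_0$ combined with \Cref{lem.bdd.F*ac.b(F)}. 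A few further details also need correction: in $\H_{0;B}^{\div r}$ the logarithm comes from $\norm{|\cdot|^2\Delta\varphi}_{L^1}$ and $\norm{|\cdot|\nabla\varphi}_{L^1}$ (after Taylor-expanding $c_y$ around $c_x$ to manufacture the $|x-y|$-weights), not from $\norm{\Delta\varphi}_{L^1}$; the decomposition \Cref{eqn.com.bbcc.b*b*c*c*} produces at most six-operator pieces, not eight; and the $\expect{\H_0}_\xi^{1/2}\expect{\Q_4}_\xi^{1/2}$ contribution in $\Q_{2;B}^{\mcE}$ arises from $\A_{1a},\A_{1b}$ via the operator positivity of $v$ and $u^r$ (giving $\pm\A\leq\eps\A^V+\eps^{-1}\A^\varphi$) and the relation $\mcN_>\lesssim k_F^{-2}\H_0$, rather than from a Taylor expansion of $c_y^*$.
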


The rest of this section deals with the proof of \Cref{lemma.bdd.list.operators}. 
We first give some preliminary bounds.

\subsection{Technical lemmas}
In the proof of \Cref{lemma.bdd.list.operators} below 
we shall need a bound on integrals of the form $\int F(y) b_y c_y \ud y$ and similar.
The following lemma will turn out to be very helpful.

\begin{lemma}\label{lem.bdd.F*ac.b(F)}
Let $F$ be a compactly supported function with $F(x)=0$ for $|x| \geq C k_F^{-1}$.
Then, uniformly in $x\in \Lambda$ and $t\in [0,1]$, (with $\nabla^n$ denoting any $n$'th derivative)
\begin{align}
\norm{\int F(x-y) a_y c_y \ud y} 
	& \lesssim k_F^{3/2} \norm{F}_{L^2} ,
	\label{eqn.bdd.Fac.b(F)}
	\\ 
\norm{\int F(x-y) a_y c_y \nabla^n c_{ty+(1-t)x} \ud y 	}
	& \lesssim k_F^{3+n} \norm{F}_{L^2}.
	\label{eqn.bdd.Facc.b(F)}
\end{align}
\end{lemma}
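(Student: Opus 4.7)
My plan is to expand each operator in its momentum representation, perform the $y$-integration to produce a Fourier coefficient $\hat F$, and then bound the resulting polynomial in fermionic annihilation operators by the Hilbert--Schmidt (HS) norm of its coefficient kernel. The key ingredient would be the standard bound $\norm{\sum_{k_1,\ldots,k_j} h(k_1,\ldots,k_j)\,a_{k_1}\cdots a_{k_j}}_{\mathrm{op}} \lesssim \norm{h}_{\mathrm{HS}}$ (for $j=2,3$), which follows from the fermionic estimate $\norm{a(\tilde H)} \leq \norm{\tilde H}_{L^2(\wedge^j L^2(\Lambda))}$ applied to the antisymmetrization of $h$ interpreted as a $j$-particle wave function.

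For \Cref{eqn.bdd.Fac.b(F)} I would insert $a_y = L^{-3/2}\sum_k e^{iky}a_k$ and $c_y = L^{-3/2}\sum_{k'\in B_F}e^{-ik'y}a_{k'}$ and carry out the $y$-integration to obtain
\begin{equation*}
\int F(x-y)\,a_y c_y \ud y = \frac{1}{L^3}\sum_{k}\sum_{k'\in B_F} e^{i(k-k')x}\hat F(k-k')\,a_k a_{k'},
\end{equation*}
with kernel $g(k,k') = L^{-3}e^{i(k-k')x}\hat F(k-k')\chi_{k'\in B_F}$. The shifted Parseval identity $\sum_{k\in \frac{2\pi}{L}\Z^3}|\hat F(k-q)|^2 = L^3 \norm{F}_{L^2}^2$, which is valid for any $q\in\R^3$ because the compact support of $F$ allows $F(x)e^{iqx}$ to be periodised on $\Lambda$ without wraparound, combined with $\#B_F \lesssim k_F^3 L^3$, would yield
\begin{equation*}
\norm{g}_{\mathrm{HS}}^2 = \frac{1}{L^6}\sum_{k'\in B_F}\sum_k |\hat F(k-k')|^2 \leq \frac{\#B_F}{L^3}\norm{F}_{L^2}^2 \lesssim k_F^3 \norm{F}_{L^2}^2,
\end{equation*}
and hence the claimed bound $k_F^{3/2}\norm{F}_{L^2}$, uniformly in $x$.

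For \Cref{eqn.bdd.Facc.b(F)} I would proceed analogously, additionally expanding $\nabla^n c_{z} = L^{-3/2}\sum_{k''\in B_F}(-ik'')^n e^{-ik''z}a_{k''}$ at $z = ty+(1-t)x$. After the $y$-integration the resulting three-annihilation operator has kernel
\begin{equation*}
h(k,k',k'') = \frac{1}{L^{9/2}}(-ik'')^n e^{i(k-k'-k'')x}\hat F(k-k'-tk'')\chi_{k',k''\in B_F}.
\end{equation*}
The same shifted Parseval identity, combined with the elementary count $\sum_{k''\in B_F}|k''|^{2n}\lesssim L^3 k_F^{3+2n}$, would then give $\norm{h}_{\mathrm{HS}}^2 \lesssim k_F^{6+2n}\norm{F}_{L^2}^2$, and hence the claimed $k_F^{3+n}\norm{F}_{L^2}$ bound. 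Uniformity in $x$ and $t\in[0,1]$ would be automatic, since both parameters enter only through phases or through shifts of the argument of $\hat F$ that do not affect the magnitude under the Parseval sum.

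The only non-routine input is the HS-norm bound on the operator norm of a multi-particle fermionic annihilation operator; everything else reduces to Fourier calculus and a lattice-point count in momentum space.
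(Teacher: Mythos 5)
The proposal's organising computation (Fourier coefficients, shifted Parseval, lattice-point count) is fine, but it hinges on a ``key ingredient'' that is simply false, so the argument does not go through.

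You invoke the bound
\begin{equation*}
\Bigl\lVert \sum_{k_1,\ldots,k_j} h(k_1,\ldots,k_j)\,a_{k_1}\cdots a_{k_j}\Bigr\rVert_{\mathrm{op}} \;\lesssim\; \norm{h}_{\mathrm{HS}},
\end{equation*}
claiming it follows from $\norm{a(\tilde H)}\leq\norm{\tilde H}_{L^2}$ for a $j$-particle wave function $\tilde H$. For fermions this inequality holds only for $j=1$. For $j\geq 2$ the $j$-fold annihilation $\iint h(x_1,\ldots,x_j)\,a_{x_1}\cdots a_{x_j}$ acting on the $n$-particle sector carries a combinatorial factor $\sqrt{n(n-1)\cdots(n-j+1)}$, so its operator norm is not controlled by $\norm{h}_{L^2}$ uniformly in the particle number. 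A concrete counterexample for $j=2$: take an orthonormal family $e_1,\ldots,e_{2M}$ and set $h=\sum_{i=1}^M e_{2i-1}\otimes e_{2i}$, so $\norm{h}_{\mathrm{HS}}=\sqrt M$; then the operator $A=\sum_i a_{2i-1}a_{2i}$ applied to the BCS-type state $2^{-M/2}\prod_i(1+a_{2i-1}^*a_{2i}^*)\Omega$ gives $\norm{A}\gtrsim M$. What \emph{is} true is the trace-norm bound
\begin{equation*}
\Bigl\lVert \iint h(x,y)\,a_x a_y \ud x\ud y\Bigr\rVert_{\mathrm{op}} \leq \norm{h}_{\mathfrak{S}_1},
\end{equation*}
obtained from the singular value decomposition $h=\sum_i\mu_i\,\phi_i\otimes\psi_i$ and $\norm{a(\overline{\phi_i})a(\overline{\psi_i})}\leq 1$.

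This is not a cosmetic point: in your situation the kernel $g(k,k')=L^{-3}e^{i(k-k')x}\hat F(k-k')\chi_{k'\in B_F}$ has rank of order $\#B_F\sim N$, so a priori its trace norm can exceed its HS norm by a factor $\sqrt N$; inserting that loss would destroy the estimate completely. The content of the lemma is therefore precisely that the trace norm \emph{does not} pick up such a factor, and this is where the Birman--Solomjak theorem (in the form of \cite[Theorem~4.5]{Simon.2005}) enters: because $F$ has compact support of diameter $\lesssim k_F^{-1}$ and $\hat v$ has compact support of diameter $\lesssim k_F$, after rescaling both to unit scale one gets $\norm{K^x}_{\mathfrak{S}_1}\lesssim\norm{F}_{L^2}\,k_F^{3/2}$, with no stray $N$. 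Your computation of $\norm{g}_{\mathrm{HS}}\lesssim k_F^{3/2}\norm{F}_{L^2}$ is correct as far as it goes, and your shifted Parseval observation is valid, but it is bounding the wrong Schatten norm. The same objection applies verbatim to the three-operator term \eqref{eqn.bdd.Facc.b(F)}: there the paper performs two successive singular value decompositions and applies Birman--Solomjak twice; again an HS-norm count does not suffice.
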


\begin{remark}
Recall that $\int F(y) a_y \ud y = a(\overline{F})$ has norm $\norm{a(\overline{F})} = \norm{F}_{L^2}$.
The lemma shows that for certain  bounded operators $A_y$ (being either $c_y$ or $c_y \nabla^n c_{ty+(1-t)x}$) $\norm{\int F(y) a_y A_y \ud y}$ can be bounded by (a constant times) $\norm{F}_{L^2} \sup_y \norm{A_y}$.
\end{remark}

\begin{remark}
A similar bound is given in \cite[Lemma 4.8]{Giacomelli.2023}. 
There, however, the right-hand side depends on $\norm{F}_{L^1}, \norm{\nabla F}_{L^1}$ and $\norm{\Delta F}_{L^1}$.
In \Cref{lem.bdd.F*ac.b(F)} we do not require any smoothness on $F$. 
This translates to having no smoothness assumptions on $V$ in \Cref{thm.main}.
\end{remark}

\begin{remark}
Applying \Cref{eqn.bdd.Fac.b(F)} for $F=\varphi$ we find (by writing $b^r_{z'} = a_{z'} - (a_{z'}- b^r_{z'})$ and 
noting that $\norm{a_{z'} - b_{z'}^r} \leq C k_F^{3/2}$ since both operators agree for  momenta $|k|\geq 3k_F$)
\begin{align}
\norm{\int \varphi(z-z') b^r_{z'} c_{z'} \ud z'} 
& \lesssim k_F^3 \norm{\varphi}_{L^1} + k_F^{3/2} \norm{\varphi}_{L^2}
\leq C (ak_F)^{3/2}
\label{eqn.bdd.phi*bc} 
\end{align}
where we have used \Cref{lem.prop.phi} to bound the norms of $\varphi$.
Moreover, by Taylor-expanding as in \Cref{eqn.Taylor.c_x.1st.order} we have
\begin{equation*}
\int \varphi(z-z') b^r_{z'} c_{z'} c_z \ud z'
	= \int (z-z')^\mu \varphi(z-z') b^r_{z'} c_{z'} \int_0^1 \ud t \,\nabla^\mu c_{z'+t(z-z')} \ud z'.
\end{equation*}
Thus, by \Cref{eqn.bdd.Facc.b(F)}, we conclude in a similar way the bound 
\begin{equation}
\label{eqn.bdd.phi*bcc}
\norm{\int \varphi(z-z') b^r_{z'} c_{z'} c_z \ud z'}
\lesssim 
k_F^{4+3/2} \norm{|\cdot|\varphi}_{L^1} + k_F^{4} \norm{|\cdot|\varphi}_{L^2}
\leq C k_F^{3/2} (ak_F)^{5/2} .
\end{equation}
\end{remark}

\begin{proof}[{Proof of \Cref{lem.bdd.F*ac.b(F)}}]
The main ingredient in the proof is a result of Birman and Solomjak \cite{Birman.Solomjak.1969,Birman.Solomjak.1980}, stated in \cite[Theorem 4.5]{Simon.2005}.
We start with \Cref{eqn.bdd.Fac.b(F)} and write the integral as 
\begin{equation*}
\int F(x-y) a_y c_y \ud y = \iint F^{x}(y) v(y-z) a_y a_z \ud y \ud z, \qquad F^{x}(y) = F(x-y).
\end{equation*}
We write the operator $K^{x}$ with integral kernel $K^{x}(y,z) = F^{x}(y) v(y-z)$ in terms of  its singular value decomposition.
That is, for some orthonormal systems $\left\{\phi^{x}_i\right\}$ and $\left\{\psi^{x}_i\right\}$ we have 
\begin{equation}\label{decF}
F^{x}(y) v(y-z) = \sum_i \mu_i \phi^{x}_i(y) \psi^{x}_i(z)
\end{equation}
with $\mu_i \geq 0$ the singular values. 
(Note that the singular values do not depend on $x$ since $K^x$ is unitarily equivalent to the translated operator with $x=0$; moreover, $\phi_i^x(y) = \phi_i^0(y-x)$ and likewise for $\psi^x_i$.) Thus 
\begin{equation*}
\int F(x-y) a_y c_y \ud y = \sum_{i} \mu_i \iint \phi^{x}_i(y) \psi^{x}_i(z)a_y a_z \ud y \ud z 
	= \sum_{i} \mu_i a(\overline{\phi_i^x}) a(\overline{\psi_i^x}).
\end{equation*}
In norm this is thus bounded by $\sum \mu_i = \norm{K^{x}}_{\mathfrak{S}_1}$, the trace-norm of $K^x$.
Furthermore, $K^x$ is unitarily equivalent to the dilated operator with kernel $k_F^{-3} F^x(k_F^{-1} y) v(k_F^{-1}(y-z))$. 
Since the functions $F^x(k_F^{-1}\cdot)$ and $k_F^{-3} \widehat{v(k_F^{-1}\cdot)} = \widehat{v}(k_F\cdot)$ are compactly supported (in configuration and momentum-space,  respectively) 
with diameter of support of order $1$, we can apply \cite[Theorem 4.5]{Simon.2005}\footnote{\cite[Theorem 4.5]{Simon.2005} is stated only for Euclidean space
but can be extended to the torus $\Lambda$ without much difficulty.}
and conclude that 
\begin{equation*}
\norm{\int F(x-y) a_y c_y}
\leq 
\sum_{i} \mu_i 
=\norm{K^x}_{\mathfrak{S}_1}
\leq C \norm{F^x(k_F^{-1} \cdot)}_{L^2} \norm{\widehat{v}(k_F\cdot)}_{\ell^2} \leq C k_F^{3/2} \norm{F}_{L^2},
\end{equation*}
where we introduced the notation $\|f\|_{\ell^2} = L^{-3/2} ( \sum_k |f(k)|^2)^{1/2}$. 

In order to prove \Cref{eqn.bdd.Facc.b(F)} we similarly write 
\begin{equation*}
\int F(x-y) a_y c_y \nabla^n c_{ty+(1-t)x} \ud y
= \iiint F^x(y) v(y-z) \nabla^n v^{(1-t)x}(ty-w) a_y a_z a_w \ud y \ud z \ud w
\end{equation*}
with $v^x(y)  = v(y+x)$. Using again \eqref{decF}, 
\begin{equation*}
\int F(x-y) a_y c_y \nabla^n c_{ty+(1-t)x} \ud y
= - \sum_i \mu_i  a(\overline{\psi^{x}_i}) \iint \phi^{x}_i(y) \nabla^n v^{(1-t)x}(ty-w) a_y a_w \ud y \ud w.
\end{equation*}
We do one more singular value decomposition, this time for the operators $\widetilde K_i^{x,t}$  with kernels 
$\widetilde K_i^{x,t}(y,w) = \phi^{x}_i(y) \nabla^n v^{(1-t)x}(ty-w) = \sum_j \widetilde\mu_{ij}^{x,t} \widetilde \phi_{ij}^{x,t}(y) \widetilde \psi_{ij}^{x,t}(w)$. 
In combination we thus obtain the bound 
\begin{equation*}
\norm{\int F(x-y) a_y c_y \nabla^n c_{ty+(1-t)x} \ud y} \leq \sum_i \mu_i \sum_j \widetilde \mu_{ij}^{x,t} .
\end{equation*}
Similarly as above, observe that $\widetilde K_i^{x,t}$ has the same trace norm as the dilated operator with integral kernel $ \phi^{x}_i(k_F^{-1} y) k_F^{-3} t^{3/2}\nabla^n v^{(1-t)x}(tk_F^{-1} (y-w))$. 
Each $\phi^{x}_i$ has compact support of diameter $\lesssim k_F^{-1}$ (since $F^x$ has). 
Furthermore, the Fourier transform of $k_F^{-3} t^{3/2}\nabla^n v^{(1-t)x}(tk_F^{-1} \cdot)$ has compact support  of diameter $t \leq 1$. 
Thus we can again apply \cite[Theorem 4.5]{Simon.2005}
to conclude 
\begin{equation*}
\sum_j \widetilde \mu_{ij}^{x,t} = 
\norm{\widetilde K_i^{x,t}}_{\mathfrak{S}_1} \leq C \norm{\phi^{x}_i(k_F^{-1}\cdot)}_{L^2} \norm{t^{-3/2} (|\cdot|^n\widehat{v})(t^{-1} k_F\cdot)}_{\ell^2}
	\leq C k_F^{3/2+n}
\end{equation*}
uniformly in $i$, $x$ and $t$. 
In combination with the bound $\sum_i \mu_i \leq C k_F^{3/2} \norm{F}_{L^2}$ from above, this concludes the proof of \Cref{eqn.bdd.Facc.b(F)}.
\end{proof}

In order to state some of the intermediary bounds in the proof of~\Cref{lemma.bdd.list.operators}, 
we introduce the following operators. 
\begin{defn}[Highly excited particles]\label{def:na}
For  $\alpha > 0$ we define 
\begin{equation*}
\mcN_{>\alpha} = \sum_{|k| > k_F(ak_F)^{-\alpha}} a_k^* a_k,
\qquad 
\mcN_> = \sum_{|k| > 2k_F} a_k^* a_k.
\end{equation*}
\end{defn}

\begin{remark}[{see also \cite[Proposition 4.15]{Giacomelli.2023}}]
For any $|k| > 2k_F$ we have $||k|^2-k_F^2|\geq k_F^2$, and 
for any $|k| > k_F(ak_F)^{-\alpha}$ with $\alpha > 0$ we have $||k|^2-k_F^2|\geq C k_F^2(ak_F)^{-2\alpha}$ for small $ak_F$.
Hence 
\begin{equation}\label{eqn.bdd.N>.a.priori}
\mcN_> 
\leq C k_F^{-2} \H_0,
\qquad
\mcN_{>\alpha}  
\leq C k_F^{-2}(ak_F)^{2\alpha} \H_0.
\end{equation}
\end{remark}

Before giving the proof of \Cref{lemma.bdd.list.operators} 
we first explain some intuition behind the proof.

\begin{remark}[Intuition behind the bounds]
Conceptually, the main difficulty is ``how to deal with the $b_x$'s''.
The operator-valued distribution $b_x$ is \emph{not} a bounded operator, so we cannot bound it in norm. 
Essentially we have $3$ different methods of treating a factor $b_x$. 

\begin{itemize}
\item We can bound one factor $b_x$ in terms of $\mcN$  via a computation similar to 
\begin{equation*}
\int \norm{b_x\xi} \ud x 
  \leq L^{3/2} \left(\int \norm{b_x\xi}^2 \ud x\right) 
  \leq L^{3/2} \expect{\mcN}_{\xi}^{1/2}
  \leq C N^{1/2} k_F^{-3/2} \expect{\mcN}_\xi^{1/2}.
\end{equation*}
Analogously, the regularized operator $b_x^r$ can be bounded in terms of  $\mcN_>$
and the operator $b_x^>$ (introduced in \Cref{eqn.def.b>} below) can be bounded in terms of $\mcN_{>\alpha}$.

\item We can bound two factors of $b_x$ in terms of $\Q_4$ using a factor $V(x-y)$ via a computation similar to 
\begin{equation*}
\begin{aligned}
\iint V(x-y) \norm{b_x b_y \xi} \ud x \ud y 
  & 
    \leq \left(\iint V(x-y)\ud x \ud y\right)^{1/2} 
    \left(\iint V(x-y) \norm{b_x b_y \xi}^2 \ud x \ud y\right)^{1/2}
  \\ & 
    = L^{3/2} \norm{V}_{L^1}^{1/2} \expect{2\Q_4}_\xi^{1/2}
    \leq C N^{1/2} a^{1/2} k_F^{-3/2} \expect{\Q_4}_\xi^{1/2}.  
\end{aligned}
\end{equation*}

\item Finally, we can bound one factor $b_x$ ``in norm'' using any appropriate (compactly supported) 
function $F$ and a factor $c_x$ (or two) via an application of \Cref{lem.bdd.F*ac.b(F)}.
\end{itemize} 

Factors of $c_x$ are not problematic. These we either bound in norm as $\norm{c_x}\leq C k_F^{3/2}$ or Taylor expand 
and bound the resulting $\nabla c_x$ in norm as $\norm{\nabla c_x}\leq C k_F^{3/2+1}$. 
Similarly, factors of $v(x-y)$ can either be bounded by $\norm{v}_{L^2}\leq C k_F^{3/2}$ or $\norm{v}_{L^\infty}\leq C k_F^3$. 
Finally, for factors $u^r(x-y)$ we either use $0\leq u^r\leq \mathbbm{1}$ as operators 
or write $u^r(x-y) = \delta(x-y) - v^r(x-y)$ and compute separately the terms with $\delta$ and $v^r$.
Here we note that $\norm{v^r}_{L^1} \leq C$ by \Cref{lem.bdd.vr.L1}. 
\end{remark}

The remainder of this section gives the proof of \Cref{lemma.bdd.list.operators}.
We prove each of the bounds separately.

\subsection{Taylor-expansion errors}
We bound all error-terms with a superscript ``Taylor'' in \Cref{eqn.[H0.B]+Q2.decompose}, resulting from the remainder term in the expansion \eqref{eqn.expand.cx.2nd.order}. 
Together they are of the form 
\begin{equation}\label{eqn.formula.T.Taylor}
\T 
=
\iint F^{\mu\nu}(x-y) b_x b_y c_y \int_0^1 \ud t \, (1-t) \nabla^\mu \nabla^\nu c_{x + t(y-x)} \ud x \ud y + \hc
\end{equation}
for the function(s) (using \Cref{eqn.defn.mcE_varphi.scat.eqn})
\begin{equation*}
F^{\mu\nu}(x) 
= - \left[x^\mu x^\nu \Delta\varphi(x) + \frac{1}{2}  x^\mu x^\nu V(x) (1 - \varphi(x))\right]
= 2 x^\mu \nabla^\nu \varphi(x) - x^\mu \mcE_\varphi^\nu(x).
\end{equation*}
We note that by \Cref{lem.prop.phi} and \Cref{eqn.bdd.mcE_phi.L1} we have
(with $\norm{F}_{L^1} = \sum_{\mu,\nu} \norm{F^{\mu\nu}}_{L^1}$ and similarly for the $L^2$ norm)
\begin{equation*}
\norm{F}_{L^1} \leq C a^3 \abs{\log ak_F},
\qquad 
\norm{F}_{L^2} \leq C a^{3/2}.
\end{equation*}
To bound $\expect{\T}_\xi$ we write 
\begin{equation}\label{eqn.def.b>}
b_x = b_x^< + b_x^>, \qquad 
  b_x^> = \frac{1}{L^{3/2}} \sum_{|k| > k_F(ak_F)^{-\alpha}} e^{ikx} a_k
\end{equation}
for some $\alpha > 0$. 
Note that 
$\mcN_{>\alpha} 
= \int (b_x^>)^* b_x^> \ud x$ with $\mcN_{>\alpha}$ from \Cref{def:na}. 
For the term with $b_x^<$ 
we bound $\norm{b_x^<} \leq C k_F^{3/2} (ak_F)^{-3\alpha/2}$ and 
$\norm{\nabla^\mu \nabla^\nu c_{x + t(y-x)}}\leq C k_F^{3/2+2}$.
Then by Cauchy--Schwarz
\begin{align*}
& \abs{\expect{\iint F^{\mu\nu}(x-y) b_x^< b_y c_y \int_0^1 \ud t \, (1-t) \nabla^\mu \nabla^\nu c_{x + t(y-x)} \ud x \ud y}_\xi  }
\\ & \quad \leq 
C \norm{F}_{L^1} k_F^{5+3/2} (ak_F)^{-3\alpha/2}  \int \norm{b_y \xi}\ud y
\leq 
C N^{1/2} a^3 k_F^{5} \abs{\log ak_F} (ak_F)^{-3\alpha/2}   \expect{\mcN}_\xi^{1/2}.
\end{align*}
Here we used that $b_x$ and $c_y$ anti-commute to reorder the annihilation operators 
such that $b_y$ is the last operator and  we can bound all others in norm, while $b_y$ yields the bound $\norm{b_y\xi}$.
For the terms with $b_x^>$ we use \Cref{eqn.bdd.Facc.b(F)} to bound the $y$-integral. 
To do this we write $b_y = a_y - c_{-y}$. The term with $c_{-y}$ can be bounded in the same way as the term with $b_x^<$ above. 
For the term with $a_y$ we obtain 
\begin{equation*}
\begin{aligned}
& \abs{\expect{\iint F^{\mu\nu}(x-y) b_x^> a_y c_y \int_0^1 \ud t \, (1-t) \nabla^\mu \nabla^\nu c_{x + t(y-x)} \ud x\ud y}_\xi  }
\\ & \quad \leq 
C k_F^{5} \norm{F}_{L^2} \int \norm{b_x^> \xi} \ud x
\leq 
C N^{1/2} a k_F^3 (ak_F)^{1/2} \expect{\mcN_{>\alpha}}_\xi^{1/2}.
\end{aligned}
\end{equation*}
We conclude the bound 
\begin{equation*}
\begin{aligned}
\abs{\expect{\T}_\xi}
  & \lesssim 
  N^{1/2} a^3 k_F^{5} \abs{\log ak_F} (ak_F)^{-3\alpha/2}   \expect{\mcN}_\xi^{1/2}
  + 
  N^{1/2} a k_F^3 (ak_F)^{1/2}  \expect{\mcN_{>\alpha}}_\xi^{1/2}
\end{aligned}
\end{equation*}
for any $\alpha > 0$. Using the bound on $\mcN_{>\alpha}$ in \Cref{eqn.bdd.N>.a.priori} we conclude the desired.

\subsection{Scattering equation cancellation}
We proceed with bounding $\Q_{\textnormal{scat}}$  in \Cref{eqn.def.Qscat}.
This bound is analogous to that of $\T$ above, only $F^{\mu\nu}$ is replaced by $\mcE_\varphi^\mu$ and $\nabla^\mu \nabla^\nu c$ is replaced by $\nabla^\mu c$. 
That is, the bound for $\Q_{\textnormal{scat}}$ is as for $\T$, only we have one fewer power of $k_F$ (from $\norm{\nabla c}$ instead of $\norm{\nabla^2 c}$)
and norms of $F$ are replaced by norms of $\mcE_\varphi$. 
We conclude the bound 
\begin{equation*}
\begin{aligned}
\abs{\expect{\Q_{\textnormal{scat}}}_\xi} 
	& \leq C N^{1/2} k_F^4 (ak_F)^{-3\alpha/2} \norm{\mcE_\varphi}_{L^1} \expect{\mcN}_\xi^{1/2}
		+ C N^{1/2} k_F^{3/2+1} \norm{\mcE_\varphi}_{L^2} \expect{\mcN_{>\alpha}}_\xi^{1/2}
  \\ & 
  \leq C N^{1/2} a^3 k_F^5 (ak_F)^{-3\alpha/2} \abs{\log ak_F}   \expect{\mcN}_\xi^{1/2} 
  + C N^{1/2} a k_F^{3} (ak_F)^{1/2} \expect{\mcN_{>\alpha}}_\xi^{1/2}
\end{aligned}
\end{equation*}
for any $\alpha > 0$, where we have used \Cref{lem.prop.phi} and \Cref{eqn.bdd.mcE_phi.L1} for the bound on the  norms of $\mcE_\varphi$. 
Again, using the bound on $\mcN_{>\alpha}$ in \Cref{eqn.bdd.N>.a.priori} we conclude the desired.

\subsection{(Non-)regularization of \texorpdfstring{$[\H_0,B]$}{[H0,B]}}
Next we bound $\H_{0;B}^{\div r}$ defined in \Cref{eqn.[H0.B].ini}, given by 
\begin{equation}\label{formula.H.non-reg}
\H_{0;B}^{\div r} 
  = \iint \left(\Delta \varphi(x-y) c_y c_x  + 2 \nabla^\mu \varphi(x-y) c_y \nabla^\mu c_x\right) 
    \left(b_x^r b_y^r - b_x b_y \right) \ud x \ud y + \hc.
\end{equation}
To bound this, write $b_x = b_x^r + c_x^r$ with the latter operator satisfying $\norm{c_x^r}\leq C k_F^{3/2}$ since 
$b_k$ and $b_k^r$ agree for all momenta $|k| > 3k_F$.
Further, we Taylor expand the factor $c_y$ in the first term as in \Cref{eqn.Taylor.c_x.1st.order}.
Then we have (changing variables to $z=y-x$)
\begin{equation}\label{eqn.H0B.two.terms}
\begin{aligned}
\expect{\H_{0;B}^{\div r}}_\xi 
	& = - 2 \Re \iint z^\mu \Delta \varphi(z) \int_0^1 \ud t \expect{\nabla^\mu c_{x+tz} c_x (c_x^r b_{x+z}^r + b_x^r c_{x+z}^r + c_x^r c_{x+z}^r)}_\xi \ud x \ud z 
	\\ & \quad 
		+ 2 \Re \iint \nabla^\mu \varphi(z) \expect{c_{x+z} \nabla^\mu c_x (c_x^r b_{x+z}^r + b_x^r c_{x+z}^r + c_x^r c_{x+z}^r)}_\xi \ud x \ud z.
\end{aligned}
\end{equation}
The two terms are treated similarly. We start with the first. 
Define for any state $\xi$ the function $\phi(x,z) = \expect{\nabla^\mu c_{x+tz} c_x (c_x^r b_{x+z}^r + b_x^r c_{x+z}^r + c_x^r c_{x+z}^r)}_\xi$
and note that it vanishes at $z=0$. Thus we Taylor expand: (with $\nabla^\mu_2$ denoting a derivative in the second argument)
\begin{equation*}
\phi(x,z) = z^\nu \int_0^1 \ud s \, [\nabla^\nu_2 \phi](x,sz).
\end{equation*}
The derivative hits either a factor $c$ or $c^r$ or a factor $b^r$. 
In the terms where the derivative hits a factor $c$ or $c^r$  we bound the $c$'s and $c^r$'s in norm. 
(Apart from the term with only $c$'s and $c^r$'s, where we keep one factor $c^r$ without a derivative.)
Recall that $\norm{\nabla^n c} \leq C k_F^{3/2+n}$ for any $n$. Similarly $\norm{\nabla^n c_x^r} \leq C k_F^{3/2+n}$.
Thus, the terms where the derivative hits a factor $c$ or $c^r$ are bounded by 
\begin{equation*}
k_F^{3/2+5} \norm{|\cdot|^2 \Delta \varphi}_{L^1} \int \left(\norm{b_x^r\xi} + \norm{c_x^r\xi}\right) \ud x 
\leq C N^{1/2} a^3 k_F^5 \abs{\log ak_F} \expect{\mcN}_\xi^{1/2}
\end{equation*}
using the bounds of \Cref{lem.prop.phi}. 
For the term where the derivative hits a factor $b^r$ we again bound the $c$'s and $c^r$'s in norm. 
These terms are then bounded by 
\begin{equation*}
k_F^{3/2+4} \norm{|\cdot|^2 \Delta \varphi}_{L^1} \int \norm{\nabla b_x^r\xi} \ud x 
\leq C N^{1/2} a^3 k_F^4 \abs{\log ak_F} \expect{\H_0}_\xi^{1/2}
+ C N^{1/2} a^3 k_F^5 \abs{\log ak_F} \expect{\mcN}_\xi^{1/2}
\end{equation*}
since $\int \norm{\nabla b_x^r \xi}^2 \ud x \leq \expect{\H_0}_\xi + k_F^2 \expect{\mcN}_\xi $.
The second term in \Cref{eqn.H0B.two.terms} is treated analogously, only the bound has a factor $\norm{|\cdot|\nabla \varphi}_{L^1}$ 
instead of $\norm{|\cdot|^2 \Delta\varphi}_{L^1}$. 
Collecting all the terms, we conclude the desired bound on $\H_{0;B}^{\div r}$.

\subsection{Error terms from \texorpdfstring{$[\Q_4,B]$}{[Q4,B]}}
We now bound $\Q_{4;B}^\mcE$, given in \Cref{eqn.def.[Q4.B]-error}. 
We split the operator into $4$ terms and bound each separately: 
\begin{enumerate}[1.]
\item  The quartic term with one $\delta$ and one $v^r$, 
\item The quartic term with two $v^r$'s, 
\item The order $6$ term with $\delta$, and 
\item The order $6$ term with $v^r$.
\end{enumerate}

\subsubsection{Quartic term with one $\delta$ and one $v^r$:}
This term is of the form 
\begin{equation*}
\A_1 = \iiint \ud x \ud y \ud z' \, V(x-y) \varphi(y-z') v^r(z'-x) b_y b_x c_{z'} c_y + \hc.
\end{equation*}
We bound $\norm{c_{z'}c_y}\leq C|y-z'| k_F^4$ and $\norm{v^r}_{L^\infty}\leq C k_F^3$.
Then by Cauchy--Schwarz we get  
(using \Cref{lem.prop.phi} to bound the norm of $\varphi$)
\begin{equation*}
\begin{aligned}
\abs{\expect{\A_1}_\xi}
 & 
  \leq 
  C k_F^7 \norm{ |\cdot|\varphi}_{L^1} \iint V(x-y) \norm{b_x b_y \xi} \ud x \ud y 
  \leq  C k_F^7 \norm{|\cdot|\varphi}_{L^1} \norm{V}_{L^1}^{1/2} L^{3/2} \expect{\Q_4}_{\xi}^{1/2} 
\\ & 
  \leq C N^{1/2} a^{3+1/2} k_F^{4+1/2}  \expect{\Q_4}_\xi^{1/2}.
\end{aligned}
\end{equation*}
(Here we reordered the annihilation operators such that $b_x b_y$ is  last so we can bound $c_{z'}c_y$ in norm and get the factor $\norm{b_xb_y\xi}$.)

\subsubsection{Quartic term with two $v^r$'s:}
This term is of the form 
\begin{equation*}
\A_2 = -\frac{1}{2}\iiiint \ud x \ud y \ud z \ud z' \, V(x-y) \varphi(z-z') v^r(z'-x) v^r(z-y)  b_y b_x c_{z'} c_z + \hc.
\end{equation*}
As above, we bound $\norm{c_{z'}c_z}\leq C |z-z'| k_F^4$. 
By Cauchy--Schwarz the $z,z'$ integrals are then bounded as 
\begin{equation*}
\iint \ud z \ud z' \, |\varphi(z-z')||z-z'| |v^r(z'-x)| |v^r(z-y)|  
\leq \norm{|\cdot|\varphi}_{L^1} \norm{v^r}_{L^2}^2 \leq C a^3 k_F^2.
\end{equation*}
Then by Cauchy--Schwarz as above we obtain 
\begin{equation*}
\abs{\expect{\A_2}_\xi}
  \leq C a^3 k_F^6 \iint V(x-y) \norm{b_x b_y \xi} \ud x \ud y 
  \leq C N^{1/2} a^{3+1/2} k_F^{4+1/2} \expect{\Q_4}_\xi^{1/2}. 
\end{equation*}

\subsubsection{Order $6$ term with $\delta$:}
This term is of the form 
\begin{equation*}
\A_3 = -\iiint \ud x \ud y \ud z' \, V(x-y) \varphi(x-z') b_y^* b_{z'}^r b_y b_x c_{z'} c_{x} + \hc.
\end{equation*}
Computing the expectation in some state $\xi$ we bound the $z'$-integral as 
\begin{equation*}
\abs{\int \ud z' \, \varphi(x-z') \expect{b_y^* b_{z'}^r b_y b_x c_{z'} c_x}_\xi }
\leq  \norm{\int \ud z' \, \varphi(x-z') b_{z'}^r c_{z'} c_{x}} \norm{b_y \xi} \norm{b_x b_y \xi}.
\end{equation*}
Using \Cref{eqn.bdd.phi*bcc} 
to bound the first factor we get by Cauchy--Schwarz
\begin{equation*}
\begin{aligned}
\abs{\expect{\A_3}_\xi}
  & \leq 
  C (ak_F)^{5/2} k_F^{3/2}  \iint V(x-y) \norm{b_y \xi} \norm{b_xb_y\xi} \ud x \ud y 
  \\
  & \leq 
  C (ak_F)^{5/2} k_F^{3/2}  \norm{V}_{L^1}^{1/2} \expect{\mcN}_\xi^{1/2} \expect{\Q_4}_{\xi}^{1/2}
  \\
  & \leq 
  C a^3 k_F^4 \expect{\mcN}_\xi^{1/2} \expect{\Q_4}_\xi^{1/2}.
\end{aligned}   
\end{equation*}

\subsubsection{Order $6$ term with $v^r$:}
This term is of the form 
\begin{equation*}
\A_4 = -\iiiint \ud x \ud y \ud z \ud z' \, V(x-y) \varphi(z-z') v^r(z-x) b_y^* b_{z'}^r b_y b_x c_{z'} c_z + \hc.
\end{equation*}
Computing the expectation in some state $\xi$ we bound the $z,z'$-integrals, again using \Cref{eqn.bdd.phi*bcc}, as 
\begin{equation*}
\begin{aligned}
& \abs{\iint \ud z \ud z' \, \varphi(z-z') v^r(z-x) \expect{b_y^* b_{z'}^r b_y b_x c_{z'} c_z}_\xi}
\\ & \quad 
  \leq \int \ud z \, |v^r(z-x)| \norm{ \int \ud z' \, \varphi(z-z') b_{z'}^r  c_{z'}c_z} \norm{b_y \xi} \norm{b_x b_y\xi}
\\ & \quad 
  \leq C (ak_F)^{5/2} k_F^{3/2} \norm{v^r}_{L^1} \norm{b_y \xi} \norm{b_x b_y\xi}.
\end{aligned}
\end{equation*}
Recall that $\norm{v^r}_{L^1} \leq C$ by \Cref{lem.bdd.vr.L1}. 
Using then Cauchy--Schwarz as above we find 
\begin{equation*}
\abs{\expect{\A_4}_\xi}
  \leq C a^3 k_F^4 \expect{\mcN}_\xi^{1/2} \expect{\Q_4}_\xi^{1/2}.
\end{equation*}
This gives the desired bound on $\Q_{4;B}^\mcE$.

\subsection{Error terms from \texorpdfstring{$[\Q_2,B]$}{[Q2,B]}}
Finally we bound $\Q_{2;B}^\mcE$. Recalling \Cref{eqn.com.bbcc.b*b*c*c*,eqn.calc.[Q2.B]}, it is given by 
\begin{equation*}
\begin{aligned}
\Q_{2;B}^\mcE
  & = \frac{1}{4} \iiiint \ud x \ud y \ud z \ud z' \, 
    V(x-y) \varphi(z-z') 
    \Bigl\{
		b_{x}^* b_{y}^* b_{z}^r b_{z'}^r [c_{y}^* c_x^* ,  c_{z'} c_z]
    +  c_{y}^* c_x^* c_{z'} c_z [b_{x}^* b_{y}^*  , b_{z}^r b_{z'}^r] 
  \\ & \qquad
    + 
    \Bigl(
    (\delta(x-z)\delta(y-z')  - \delta(x-z') \delta(y-z))
    (v(x-z)v(y-z')  - v(x-z') v(y-z))
    \\ & \qquad \quad 
    -[b_{x}^* b_{y}^*  , b_{z}^r b_{z'}^r ] [c_{y}^* c_x^*, c_{z'} c_z]
    \Bigl)
    \Bigr\}
    + \hc .
\end{aligned}
\end{equation*}
We split the terms into three groups: 
\begin{enumerate}[1.]
\item The terms with only the $c$-commutator (i.e. of the form $b^* b^* b b [c^*c^*,cc]$),
\item The terms with only the $b$-commutator, and 
\item The rest, involving terms with both commutators, with the leading term [of the form $\delta\delta v v$] subtracted.
\end{enumerate}

\subsubsection{Terms with only the $c$-commutator:}
Recalling the formula for the commutator in \Cref{eqn.calc.[cc.cc]} all terms with $c^*c$ give the same contribution, 
so do all the constant terms.
That is, we need to bound terms of the form 
\begin{equation*}
\begin{aligned}
\A_{1a} & = -\iiiint V(x-y) \varphi(z-z') v(x-z) b_x^* b_y^* b_z^r b_{z'}^r c_y^* c_{z'} \ud x \ud y \ud z \ud z' + \hc,
\\
\A_{1b} & = \frac{1}{2}\iiiint V(x-y) \varphi(z-z') v(x-z) v(y-z') b_x^* b_y^* b_z^r b_{z'}^r \ud x \ud y \ud z \ud z' + \hc.
\end{aligned}
\end{equation*}
For $\A_{1a}$ we note that as an operator $0 \leq v \leq \mathbbm{1}$. 
Then by Cauchy--Schwarz we have for any $\eps > 0$
(recall that $\{b_x, c_y^*\} = 0$)
\begin{equation*}
\begin{aligned}
\pm\A_{1a} & = \mp\iint \ud x \ud z \, v(x-z) 
    \left[\int \ud y \, V(x-y)  b_{x}^* b_{y}^*  c_y^* \right]
    \left[\int \ud z'\, \varphi(z-z') b_{z}^r b_{z'}^r c_{z'} \right]
    + \hc 
  \\ & \leq 
    \eps k_F^{-2} \iiint \ud x \ud y \ud z \, V(x-y) V(x-z) b_{x}^* b_{z}^*  c_z^* c_y b_y b_x 
  \\ & \quad 
    + \eps^{-1} k_F^2 \iiint \ud x \ud y \ud z \, \varphi(x-y) \varphi(x-z) c_z^* (b_{z}^r)^* (b_{x}^r)^*   b_{x}^r b_{y}^r  c_y
  \\ & =: \eps \A_{1a}^V + \eps^{-1} \A_{1a}^\varphi.
\end{aligned}
\end{equation*}
For $\A_{1a}^V$ we bound $c_z^*,c_y$ in norm. Then by Cauchy--Schwarz we have
\begin{equation*}
\begin{aligned}
\abs{\expect{\A_{1a}^V}_\xi}
  & \leq C k_F \iiint V(x-y)V(x-z) \norm{b_z b_x \xi} \norm{b_y b_x \xi} \ud x \ud y \ud z
  \\
  & \leq C k_F \iiint V(x-y)V(x-z) 
  \norm{b_y b_x \xi}^2
  \ud x \ud y \ud z 
  \leq C ak_F \expect{\Q_4}_\xi.
\end{aligned}
\end{equation*}
For $\A_{1a}^\varphi$ we use \Cref{eqn.bdd.phi*bc} 
to bound the $y$- and $z$-integrations. 
We find 
\begin{equation*}
\abs{\expect{\A_{1a}^\varphi}_\xi }
  \leq k_F^2 \int \ud x \norm{\int \ud y\, \varphi(x-y) b_y^r c_y}^2 
  \norm{b_x^r\xi}^2
  \leq C k_F^2 (ak_F)^{3} \expect{\mcN_>}_\xi.
\end{equation*}
Optimising in $\eps$ we thus obtain the bound  
\begin{equation}\label{eqn.Q2B.A.1.1}
\abs{\expect{\A_{1a}}_\xi}
  \leq C a^2 k_F^3 \expect{\mcN_>}_\xi^{1/2} \expect{\Q_4}_{\xi}^{1/2}.
\end{equation}

 Next, we bound $\A_{1b}$ as
\begin{equation*}
\abs{\expect{\A_{1b}}_\xi}
	\leq \iiint \ud x \ud y \ud z \, V(x-y) |v(x-z)| \norm{\int \ud z' \, \varphi(z-z') v(y-z') b_{z'}^r} \norm{b_z^r \xi} \norm{b_yb_x \xi}.
\end{equation*}
Since $0\leq \hat u^r \leq 1$, we have
\begin{equation*}
\norm{\int \ud z' \, \varphi(z-z') v(y-z') b_{z'}^r} \leq  \norm{\int \ud z' \, \varphi(z-z') v(y-z') a_{z'}} = \left( \varphi^2 * v^2(y-z) \right)^{1/2}
\end{equation*}
(with $*$ denoting convolution) 
and hence, by Cauchy--Schwarz,
\begin{align}
\abs{\expect{\A_{1b}}_\xi}
	& \leq \left[\iiint \ud x \ud y \ud z \, V(x-y) |v(x-z)|^2 \norm{ b_y b_x \xi}^2\right]^{1/2}
	\nonumber
	\\ & \quad \times 
	\left[\iiint \ud x \ud y \ud z \,   V(x-y) \varphi^2 * v^2(y-z) \norm{b_z^r \xi}^2 \right]^{1/2}
	\nonumber
	\\ & \leq 
	 \norm{v}_{L^2}^2 \expect{\Q_4}_\xi^{1/2} \norm{V}_{L^1}^{1/2} \|\varphi\|_{L^2}\expect{\mcN_>}_\xi^{1/2} 
	\leq C a^{2} k_F^{3}  \expect{\Q_4}_\xi^{1/2}  \expect{\mcN_>}_\xi^{1/2}  ,
\label{eqn.Q2B.A.1.2}
\end{align}
where we have used Lemma~\ref{lem.prop.phi} to bound the $L^2$-norm of $\varphi$ in the final step.

\subsubsection{Terms with only the $b$-commutator:}
Recalling the formula for the commutator in \Cref{eqn.calc.[bb.bb]} all terms with $b^*b$ give the same contribution, 
so do all the constant terms.
That is, we need to bound terms of the forms 
\begin{equation*}
\begin{aligned}
\A_{2a} & = -\iiiint V(x-y) \varphi(z-z') u^r(x-z) b_y^* b_{z'}^r c_y^* c_x^* c_{z'} c_z \ud x \ud y \ud z \ud z' + \hc,
\\
\A_{2b} & = \frac{1}{2}\iiiint V(x-y) \varphi(z-z') u^r(x-z) u^r(y-z')  c_y^* c_x^* c_{z'} c_z \ud x \ud y \ud z \ud z'  + \hc.
\end{aligned}
\end{equation*}
For $\A_{2a}$ we use that $0 \leq u^r \leq \mathbbm{1}$ as an operator.
Thus, for any $\eps > 0$, 
\begin{equation*}
\begin{aligned}
\pm\A_{2a}
  & = \mp\iint \ud x \ud z \, u^r(x-z) 
    \left[\int \ud y \, V(x-y) b_y^* c_{y}^* c_x^* \right]
  	\left[\int \ud z' \, \varphi(z-z')  b_{z'}^r c_{z'} c_z \right]
    + \hc 
  \\
  & \leq 
    \eps k_F^{-2} \iiint \ud x \ud y \ud z \,   V(x-y) V(x-z) b_z^* c_z^* c_x^*  c_x c_y b_y 
  \\ & \quad 
    + \eps^{-1} k_F^2 \iiint \ud x \ud y \ud z \,   \varphi(x-y) \varphi(x-z) c_x^* c_z^* (b_z^r)^* b_y^r c_{y} c_x
  \\ & =: \eps \A_{2a}^V + \eps^{-1} \A_{2a}^\varphi.
\end{aligned}
\end{equation*}
For $\A_{2a}^V$ we bound $\norm{c_xc_y}\leq C |x-y|k_F^4$. Then by Cauchy--Schwarz we have 
\begin{equation*}
\begin{aligned}
\abs{\expect{\A_{2a}^V}_\xi}
& \leq C k_F^6 \iiint \ud x \ud y \ud z \,   V(x-y) V(x-z) |x-y| |x-z| \norm{b_z\xi} \norm{b_y\xi}
\\
& 
\leq C k_F^6 \norm{|x|V}_{L^1}^2 \expect{\mcN}_\xi 
\leq C a^4 k_F^6 \expect{\mcN}_\xi.
\end{aligned}
\end{equation*}
Analogously
\begin{equation*}
\abs{\expect{\A_{2a}^\varphi}_\xi}
\leq C k_F^{10} \norm{|x|\varphi}_{L^1}^2 \expect{\mcN_>}_\xi 
\leq C a^6 k_F^8 \expect{\mcN_>}_\xi.
\end{equation*}
Optimising in $\eps$ we find 
\begin{equation}\label{eqn.Q2B.A.2.1}
\abs{\expect{\A_{2a}}_\xi}
  \leq C a^5 k_F^7 \expect{\mcN}_\xi^{1/2} \expect{\mcN_>}_\xi^{1/2}  \leq C a^5 k_F^7 \expect{\mcN}_\xi .
\end{equation}

For $\A_{2b}$ we write $u^r = \delta - v^r$. 
The term with both factors $\delta$ is 
\begin{equation*}
\A_{2b}^{\delta\delta} = -\iint V(x-y) \varphi(x-y) c_x^* c_y^* c_y c_x \ud x \ud y.
\end{equation*}
By Taylor expanding $c_y$ and $c_y^*$ as in \Cref{eqn.Taylor.c_x.1st.order} and bounding $\varphi \leq 1$ we have 
\begin{equation*}
\abs{\expect{\A_{2b}^{\delta\delta}}_\xi}
	\leq C k_F^5 \norm{|\cdot|^2 V}_{L^1} \expect{\mcN}_\xi \leq C a^3 k_F^5 \expect{\mcN}_\xi.
\end{equation*}
Both terms with one factor $\delta$ and one factor $v^r$ can be treated the same way. 
They are of the form 
\begin{equation*}
\A_{2b}^{v\delta} = \frac{1}{2}\iiint V(x-y) \varphi(z-y) v^r(x-z)  c_y^* c_x^* c_y c_z \ud x \ud y \ud z + \hc.
\end{equation*}
Thus, by Cauchy--Schwarz and Taylor-expanding  $c_y$ as in \Cref{eqn.Taylor.c_x.1st.order}
\begin{equation*}
\begin{aligned}
\abs{\expect{\A_{2b}^{v\delta}}_\xi}
	& \leq \left[\iiint V(x-y) v^r(x-z)^2  \norm{c_x c_y \xi}^2 \ud x \ud y \ud z\right]^{1/2}
	\\ & \quad \times 
	\left[\iiint V(x-y) \varphi(z-y)^2 \norm{c_y c_z \xi}^2 \ud x \ud y \ud z\right]^{1/2}
	\\ & \leq 
	C k_F^{5} \norm{|\cdot|^2V}_{L^1}^{1/2} \norm{v^r}_{L^2} \norm{V}_{L^1}^{1/2} \norm{|\cdot|\varphi}_{L^2} \expect{\mcN}_\xi
	\leq C a^4 k_F^6 (ak_F)^{1/2} \expect{\mcN}_\xi.
\end{aligned}
\end{equation*}
Finally, we bound the term $\A_{2b}^{vv}$ with two factors $v^r$ by Cauchy--Schwarz as
\begin{equation*}
\begin{aligned}
\abs{\expect{\A_{2b}^{vv}}_\xi}
	& \leq \left[\iiiint V(x-y) \varphi(z-z') v^r(x-z)^2  \norm{c_x c_y \xi}^2 \ud x \ud y \ud z \ud z'\right]^{1/2}
	\\ & \quad \times 
	\left[\iiiint V(x-y) \varphi(z-z') v^r(y-z')^2  \norm{c_{z'} c_{z} \xi}^2 \ud x \ud y \ud z \ud z'\right]^{1/2}
	\\ & 
	\leq C k_F^5 \norm{|\cdot|^2 V}_{L^1}^{1/2} \norm{\varphi}_{L^1}^{1/2} \norm{V}_{L^1}^{1/2} \norm{|\cdot|^2\varphi}_{L^1}^{1/2} \norm{v^r}_{L^2}^2 \expect{\mcN}_\xi 
	\leq C a^5 k_F^7 \abs{\log ak_F}^{1/2}\expect{\mcN}_\xi.
\end{aligned}	
\end{equation*}
We conclude that 
\begin{equation}\label{eqn.Q2B.A.2.2}
\abs{\expect{\A_{2b}}_\xi} \leq C a^3 k_F^5 \expect{\mcN}_\xi.
\end{equation}

\subsubsection{Terms with both commutators:}
Recalling the formulas for the commutators in \Cref{eqn.calc.[bb.bb],eqn.calc.[cc.cc]} 
we split the terms into four groups 
\begin{enumerate}[a.]
\item Terms of the form $b^*bc^*c$, 
\item Terms of the form $b^*b$,
\item Terms of the form $c^*c$, and 
\item The constant (i.e. fully contracted) terms.
\end{enumerate}

\paragraph{---Terms of the form $b^*bc^*c$:}
These terms have one factor $u^r$ and one factor $v$. 
The factors $u^r$ and $v$ can either have both different arguments, share one variable or have the same argument.
That is, we have the different types of terms
\begin{equation*}
\begin{aligned}
\A_{3a,1}
  & = -\frac{1}{4}\iiiint \ud x \ud y \ud z \ud z' \, V(x-y) \varphi(z-z') 
    u^r(x-z) v(y-z') 
    b_y^* c_x^* c_z b_{z'}^r + \hc,
\\
\A_{3a,2}
  & = \frac{1}{4}\iiiint \ud x \ud y \ud z \ud z' \, V(x-y) \varphi(z-z') 
    u^r(x-z) v(y-z) 
    b_y^* c_x^* c_{z'} b_{z'}^r + \hc,
\\
\A_{3a,3}
  & = -\frac{1}{4}\iiiint \ud x \ud y \ud z \ud z' \, V(x-y) \varphi(z-z') 
  u^r(x-z) v(x-z) 
  b_y^* c_y^* c_{z'} b_{z'}^r + \hc.
\end{aligned}
\end{equation*}
We consider the term $\A_{3a,1}$ and write $u^r(x-z) = \delta(x-z) - v^r(x-z)$.
For the term with $\delta$ we bound $\norm{v}_{L^\infty}\leq C k_F^3$ and $\norm{c_x}\leq C k_F^{3/2}$. 
Then by Cauchy--Schwarz the expectation value of  this term in a state $\xi$ is bounded by 
\begin{equation*}
\begin{aligned}
C k_F^6 \iiint V(x-y) \varphi(x-z') \norm{ b_y \xi} \norm{ b_{z'}^r \xi}  \ud x \ud y \ud z' 
& \leq C k_F^6 \norm{V}_{L^1} \norm{\varphi}_{L^1} \expect{\mcN}_\xi^{1/2} \expect{\mcN_>}_\xi^{1/2}
\\ & 
\leq C a^4 k_F^6 \abs{\log ak_F} \expect{\mcN}_\xi^{1/2} \expect{\mcN_>}_\xi^{1/2}.
\end{aligned}
\end{equation*}
Similarly we bound the term with $v^r$ by 
\begin{equation*}
\begin{aligned}
&
C k_F^3 \iiiint V(x-y) \varphi(z-z') |v^r(x-z)| |v(y-z')| \norm{ b_y \xi} \norm{ b_{z'}^r \xi}  \ud x \ud y \ud z \ud z' 
\\ 
& \quad 
\leq C k_F^3 
\left[\iiiint V(x-y) \varphi(z-z') v(y-z')^2 \norm{ b_y \xi}^2  \ud x \ud y \ud z \ud z'\right]^{1/2} 
\\ & \qquad \times 
\left[\iiiint V(x-y) \varphi(z-z') v^r(x-z)^2  \norm{ b_{z'}^r \xi}^2  \ud x \ud y \ud z \ud z'\right]^{1/2}	 
\\ & \quad 
\leq C a^4 k_F^6 \abs{\log ak_F} \expect{\mcN}_\xi^{1/2} \expect{\mcN_>}_\xi^{1/2}
\end{aligned}
\end{equation*}
using that $\norm{v}_{L^2}, \norm{v^r}_{L^2}\leq C k_F^{3/2}$.
The terms $\A_{3a,2}$ and $\A_{3a,3}$ can be bounded in the same way.
We conclude that
\begin{equation}\label{eqn.Q2B.A.3a}
\abs{\expect{\A_{3a,1}}_\xi} +
\abs{\expect{\A_{3a,2}}_\xi} + 
\abs{\expect{\A_{3a,3}}_\xi}
\leq 
C a^4 k_F^6 \abs{\log ak_F} \expect{\mcN}_\xi^{1/2} \expect{\mcN_>}_\xi^{1/2}\leq 
C a^4 k_F^6 \abs{\log ak_F} \expect{\mcN}_\xi .
\end{equation}

\paragraph{---Terms of the form $b^*b$:}
These may be dealt with as the terms $\A_{3a,1}$,  $\A_{3a,2}$ and $\A_{3a,3}$ only we bound $\norm{v}_{L^\infty}\leq C k_F^3$ instead of $\norm{c}^2\leq C k_F^3$.

\paragraph{---Terms of the form $c^*c$:}
For these terms it is important to take into account the cancellations between different terms. 
Recalling the formulas for the commutators in \Cref{eqn.calc.[bb.bb],eqn.calc.[cc.cc]} and noting the symmetries $x \leftrightarrow y$ and $z\leftrightarrow z'$
we find that together all terms of this form are given by 
\begin{equation*}
\begin{aligned}
\A_{3c} & = 2\iiiint V(x-y) \varphi(z-z') 
[u^r(x-z')u^r(y-z) - u^r(x-z)u^r(y-z')]
  v(y-z) 
\\ & \hphantom{=2\iiiint}  \times 
  c_x^* c_{z'} \ud x \ud y \ud z \ud z'.
\end{aligned}
\end{equation*}
Writing the term in momentum-space we find 
\begin{equation*}
\A_{3c} = \frac{2}{L^6} \sum_{\substack{k,\ell,\ell' \\ \ell,\ell'\in B_F}} \hat V(k) 
  \left[\hat \varphi(k) - \hat\varphi(k + \ell-\ell')\right] \hat u^r(k-\ell) \hat u^r(k+\ell') c_\ell^* c_\ell.
\end{equation*}
We view $\hat\varphi$ as being defined on all of $\R^3$ (as opposed to just on $\frac{2\pi}{L}\Z^3$)
and Taylor expand. 
Noting that $\varphi$ is a real radial function the first order vanishes. That is,
\begin{equation*}
\begin{aligned}
\hat\varphi(k + \ell-\ell') 
	& = \hat\varphi(k) + (\ell-\ell')^\mu (\ell-\ell')^\nu \int_0^1 \ud t \, (1-t) \nabla^\mu \nabla^\nu \hat\varphi(k + t(\ell-\ell')).
\end{aligned}
\end{equation*}
We write $\hat u^r(k) = 1 - \hat v^r(k)$ and observe that $\hat v^r$ is supported for $|k| \leq 3k_F$. 
Since further $\ell,\ell'\in B_F$ we have $\hat v^r(k + \ell'), \hat v^r(k-\ell)\leq \chi_{|k|\leq 4 k_F}$.
The terms with at least one factor $\hat v^r$ we then bound by 
\begin{equation*}
\begin{aligned}
  & 2 \int_0^1 \ud t \, (1-t) \frac{1}{L^6} \sum_{\ell,\ell'\in B_F} |\ell-\ell'|^2 \expect{c_\ell^*c_\ell}_\xi 
    \sum_{|k|\leq 4 k_F} |\hat V(k)| |\nabla^2 \hat \varphi(k + t(\ell-\ell'))|
  \\
  & \quad \leq C k_F^{8} \norm{V}_{L^1} \norm{ |\cdot|^2 \varphi}_{L^1}  \sum_{\ell} \expect{c_\ell^* c_\ell}_\xi 
  \leq C a^4 k_F^6 \expect{\mcN}_\xi.
\end{aligned}
\end{equation*}
We compute the term with both factors $1$ using Parseval's theorem. It is given by 
\begin{equation*}
	- 2 \int_0^1 \ud t \, (1-t) \frac{1}{L^3} \sum_{\ell,\ell'\in B_F} (\ell-\ell')^\mu (\ell-\ell')^\nu \expect{c_\ell^* c_\ell}_\xi 
	\int V(x) x^\mu x^\nu e^{-it(\ell-\ell')x}\varphi(x) \ud x.
\end{equation*}
Using $\varphi \leq 1$ we bound this by $C k_F^5 \norm{|\cdot|^2 V}_{L^1} \expect{\mcN}_\xi \leq C a^3 k_F^5 \expect{\mcN}_\xi$.
We conclude that 
\begin{equation}\label{eqn.Q2B.A.3c}
\begin{aligned}
\abs{\expect{\A_{3c}}_\xi}
	& \leq C a^3 k_F^5 \expect{\mcN}_\xi.
\end{aligned}
\end{equation}

\paragraph{---Constant terms:}
The constant term is given by 
\begin{equation*}
\begin{aligned}
\A_{3d} 
  & = 
  \frac{1}{2}
  \iiiint \ud x \ud y \ud z \ud z' \, V(x-y) \varphi(z-z') 
    \Bigl[ 
          \delta(x-z)\delta(y-z') - \delta(x-z')\delta(y-z) 
  \\ & \quad 
      - 
        u^r(x-z)u^r(y-z') 
        + u^r(x-z')u^r(y-z) 
    \Bigr]  
    \left(
      v(x-z)v(y-z') - v(x-z')v(y-z)
    \right).
\end{aligned}
\end{equation*}
We write $u^r = \delta - v^r$ and expand everything. 
The terms with one factor $v^r$ and one factor $\delta$ together give 
\begin{equation*}
\A_{3d,1}
	= 2 \iiint V(x-y) \varphi(x-z) v^r(y-z) 
    \left(
      v(0)v(y-z) - v(x-z)v(y-x)
    \right)
    \ud x \ud y \ud z.
\end{equation*}
Noting that the last factor 
vanishes for $x=y$ we Taylor expand and bound it as 
\begin{equation*}
\abs{ v(0)v(y-z) - v(x-z)v(y-x)} \leq C k_F^{7} |x-y|
\end{equation*}
using that $\|\nabla v\|_{L^\infty} \leq C k_F^4$. 
Then 
\begin{equation*}
\begin{aligned}
\abs{\A_{3d,1}} 
	& \leq C k_F^7 \iiint |x-y| V(x-y) \varphi(x-z) |v^r(y-z)| \ud x \ud y \ud z
	\\ & 
	\leq C L^3 k_F^{7} \norm{|\cdot|V}_{L^1}\norm{\varphi}_{L^1} \norm{v^r}_{L^\infty}
	\leq C N a^5 k_F^7 \abs{\log ak_F}.
\end{aligned}
\end{equation*}
Similarly all terms with two factors $v^r$ are together bounded by 
\begin{equation*}
C L^3 k_F^{7} \norm{|\cdot|V}_{L^1}\norm{\varphi}_{L^1} \norm{v^r}_{L^2}^2
	\leq C N a^5 k_F^7 \abs{\log ak_F}.
\end{equation*}
In particular, we conclude that 
\begin{equation}\label{eqn.Q2B.A.3d}
\abs{\A_{3d}}
  \lesssim N a^5 k_F^7 \abs{\log ak_F}.
\end{equation}

Combining then 
Equations~\eqref{eqn.Q2B.A.1.1}--\eqref{eqn.Q2B.A.3d}
we find 
\begin{equation*}
\begin{aligned}
\abs{\expect{\Q_{2;B}^\mcE}_\xi}  
  & \lesssim 
  	a^2 k_F^3 \expect{\mcN_>}_\xi^{1/2} \expect{\Q_4}_{\xi}^{1/2} 
  	+ a^3 k_F^5 \expect{\mcN}_\xi
  	+ N a^5 k_F^7 \abs{\log ak_F}.
\end{aligned}
\end{equation*}
Finally, using  \eqref{eqn.bdd.N>.a.priori} to bound $\mcN_>$ we conclude the desired.

\section{Finalizing the Proof}\label{sec.proof.final}
In this section, we use the bounds of \Cref{lemma.bdd.list.operators} to prove \Cref{eqn.mcE_scat.main,eqn.mcE_Q2.main}.
We shall also give also the proof of \Cref{prop.upper.bdd}.

\subsection{Propagation of a priori estimates}
To use the bounds of \Cref{lemma.bdd.list.operators} we need bounds for the expectation value of $\mcN$, $\H_0$ and $\Q_4$ in the states $\xi_\lambda$. 
These states do not necessarily arise from approximate ground states, 
so we cannot just use the a priori bounds of \Cref{lem.a.priori.H0,lem.a.priori.Q4,lem.a.priori.mcN}.
We show (as in \cite[Proposition 4.11]{Giacomelli.2023})

\begin{lemma}[{Propagation of $\mcN$}]\label{lem.bdd.mcN.xi_lambda}\label{lem.propagation.mcN}
Let $\psi \in \mcF$ be any state and let $\xi_\lambda$ be as in \Cref{eqn.def.xi.lambda}.
Then, for any $0 \leq \lambda,\lambda' \leq 1$,
\begin{equation*}
\expect{\mcN}_{\xi_\lambda} \leq C \expect{\mcN}_{\xi_{\lambda'}} + C N (ak_F)^{5}.
\end{equation*}
\end{lemma}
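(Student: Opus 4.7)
The plan is to derive a differential inequality for $f(\lambda) := \expect{\mcN}_{\xi_\lambda}$ and apply Grönwall. Since $B$ is anti-self-adjoint and $\partial_\lambda \xi_\lambda = -B\xi_\lambda$, one computes
\begin{equation*}
\frac{d}{d\lambda} \expect{\mcN}_{\xi_\lambda} = \expect{[B,\mcN]}_{\xi_\lambda}.
\end{equation*}
Writing $B = X - X^*$ with $X = \tfrac{1}{2}\iint \varphi(z-z')\, b_z^r b_{z'}^r c_{z'} c_z\, \ud z\ud z'$ (which lowers the particle number by four), we have $[\mcN,X] = -4X$ and $[\mcN,X^*] = 4X^*$, so $[B,\mcN] = 4(X+X^*)$ and hence $|\partial_\lambda f(\lambda)| \leq 8\,|\expect{X}_{\xi_\lambda}|$.

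Next, I would bound $\expect{X}_{\xi_\lambda}$ by peeling off one $b_z^r$ and applying the operator-norm estimate \eqref{eqn.bdd.phi*bcc} to the remaining triple integral. Concretely, Cauchy--Schwarz gives
\begin{equation*}
|\expect{X}_{\xi_\lambda}|
  \leq \tfrac12 \int \norm{b_z^r \xi_\lambda}\,\Bigl\| \int \varphi(z-z')\, b_{z'}^r c_{z'} c_z\, \ud z'\Bigr\|\, \ud z
  \leq C k_F^{3/2}(ak_F)^{5/2} \int \norm{b_z^r \xi_\lambda}\, \ud z.
\end{equation*}
Using $\int \norm{b_z^r \xi_\lambda}\ud z \leq L^{3/2} \expect{\mcN}_{\xi_\lambda}^{1/2} \leq C N^{1/2} k_F^{-3/2} \expect{\mcN}_{\xi_\lambda}^{1/2}$ and the elementary inequality $2\sqrt{ab} \leq a+b$, one obtains
\begin{equation*}
\Bigl|\frac{d}{d\lambda} \expect{\mcN}_{\xi_\lambda}\Bigr|
  \leq C N^{1/2}(ak_F)^{5/2} \expect{\mcN}_{\xi_\lambda}^{1/2}
  \leq C \expect{\mcN}_{\xi_\lambda} + C N (ak_F)^5.
\end{equation*}

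To conclude, set $g(\lambda) = \expect{\mcN}_{\xi_\lambda} + N(ak_F)^5$. The inequality above gives $|g'(\lambda)| \leq C g(\lambda)$, and Grönwall yields $g(\lambda) \leq e^{C|\lambda-\lambda'|} g(\lambda') \leq e^C g(\lambda')$ for all $\lambda,\lambda' \in [0,1]$, which is exactly the claimed bound. The only nontrivial ingredient is the operator-norm estimate \eqref{eqn.bdd.phi*bcc}; everything else is bookkeeping of the Baker--Campbell--Hausdorff derivative together with Cauchy--Schwarz and Grönwall. I expect no substantial obstacle, as the analogous estimate in \cite[Proposition 4.11]{Giacomelli.2023} has already been carried out in a very similar setting.
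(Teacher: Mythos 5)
Your proposal is correct and follows exactly the same route as the paper's proof: differentiate $\expect{\mcN}_{\xi_\lambda}$ in $\lambda$, observe $[\mcN,B] = -4(X+X^*)$, apply the operator-norm estimate \eqref{eqn.bdd.phi*bcc} together with Cauchy--Schwarz to get $|\partial_\lambda\expect{\mcN}_{\xi_\lambda}| \leq CN^{1/2}(ak_F)^{5/2}\expect{\mcN}_{\xi_\lambda}^{1/2}$, and conclude by Grönwall. You merely spell out more explicitly the Young-inequality step converting the $\sqrt{\cdot}$ bound into a linear Grönwall inequality, which the paper leaves implicit.
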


\begin{proof}
This is an application of Grönwall's lemma. We compute 
\begin{equation*}
\dd{\lambda} \expect{\mcN}_{\xi_\lambda} = - \longip{\xi_\lambda}{[\mcN,B]}{\xi_\lambda}
  = -4 \Re \iint \varphi(z-z') \longip{\xi_\lambda}{b_z^r b_{z'}^r c_{z'}c_z}{\xi_\lambda} \ud z \ud z'.
\end{equation*}
Using \Cref{eqn.bdd.phi*bcc} 
to bound the integral in $z'$ we find
\begin{equation*}
\abs{\dd{\lambda} \expect{\mcN}_{\xi_\lambda}}
\leq C k_F^{3/2} (ak_F)^{5/2} \int \norm{b_z^r \xi_\lambda} \ud z 
  \leq C N^{1/2} (ak_F)^{5/2} \expect{\mcN}_{\xi_\lambda}^{1/2}.
\end{equation*}
By Grönwall's lemma we conclude the desired.
\end{proof} 

\begin{remark}
Using \Cref{lem.bdd.mcN.xi_lambda} for an approximate ground state $\psi$ for $\lambda'=0$ we find (for any $0\leq \lambda \leq 1$)
\begin{equation}\label{eqn.bdd.mcN.xi_lambda}
\expect{\mcN}_{\xi_\lambda} \leq C \expect{\mcN}_{\xi_0} + C N (ak_F)^{5} \leq C N (ak_F)^{3/2}
\end{equation}
by \Cref{lem.a.priori.mcN}. 
\end{remark}

\begin{lemma}[{Propagation of $\H_0,\Q_4$}]\label{lem.propagation.H0.Q4}
Let $\psi\in \mcF$ be any state and let $\xi_\lambda$ be as in \Cref{eqn.def.xi.lambda}. 
Then, for any $\alpha>0$ and any $0\leq \lambda, \lambda' \leq 1$,
\begin{equation*}
\begin{aligned}
\expect{\H_0 + \Q_4}_{\xi_\lambda} 
	&\lesssim \expect{\H_0 + \Q_4}_{\xi_{\lambda'}} + 
		 N a^3 k_F^5 \left[1 + (ak_F)^{5/2 - 3\alpha/2 }\abs{\log ak_F}\right]
 \\ & \quad 
  + k_F^2(ak_F)^{6} \expect{\mcN}_{\xi_{\lambda'}} + N^{1/2}k_F^2 (ak_F)^{3-3\alpha/2} \abs{\log ak_F}  \expect{\mcN}_{\xi_{\lambda'}}^{1/2},
\end{aligned}
\end{equation*}
the implicit constants depending only on $\alpha$.
\end{lemma}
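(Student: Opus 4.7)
The plan is to apply Grönwall's lemma to $f(\lambda) := \expect{\H_0 + \Q_4}_{\xi_\lambda}$. Since $\xi_\lambda = e^{-\lambda B} R^*\psi$ and $B$ is anti-Hermitian, direct differentiation gives
\begin{equation*}
\dd{\lambda} f(\lambda) = -\longip{\xi_\lambda}{[\H_0 + \Q_4, B]}{\xi_\lambda},
\end{equation*}
and the decomposition \Cref{eqn.[H0.B]+Q2.decompose} rewrites the right-hand side as
\begin{equation*}
\longip{\xi_\lambda}{\Q_2}{\xi_\lambda} - \longip{\xi_\lambda}{\Q_{\textnormal{scat}} + \T + \H_{0;B}^{\div r} + \Q_{4;B}^\mcE}{\xi_\lambda}.
\end{equation*}
I would bound the first summand by \Cref{lem.a.priori.Q2} with $\delta$ a small but $\lambda$-independent constant, so $|\expect{\Q_2}_{\xi_\lambda}| \leq \delta\expect{\Q_4}_{\xi_\lambda} + C\delta^{-1}Na^3k_F^5$; the $\delta\expect{\Q_4}_{\xi_\lambda}$ piece is a fraction of $f(\lambda)$ that will be absorbed by the Grönwall mechanism, while $C\delta^{-1}Na^3k_F^5$ already has the desired inhomogeneous form.

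Next I would apply \Cref{lemma.bdd.list.operators} to the four operators in the second summand. Each bound there features either $\expect{\H_0}_{\xi_\lambda}^{1/2}$ or $\expect{\Q_4}_{\xi_\lambda}^{1/2}$, which I would split by Cauchy--Schwarz, $XY \leq \frac{1}{2}(\eps X^2 + \eps^{-1}Y^2)$, with a small universal $\eps$; the $X^2$-part absorbs into $f(\lambda)$, and the $Y^2$-part contributes a scalar or $\expect{\mcN}_{\xi_\lambda}$-dependent inhomogeneity. For instance $N^{1/2}ak_F^2(ak_F)^{1/2+\alpha}\expect{\H_0}_{\xi_\lambda}^{1/2}$ from $\T + \Q_{\textnormal{scat}}$ yields $\eps f(\lambda) + C\eps^{-1}Na^3k_F^5(ak_F)^{2\alpha}$, which is harmless for small $ak_F$; the bilinear piece $a^3k_F^4\expect{\mcN}_{\xi_\lambda}^{1/2}\expect{\Q_4}_{\xi_\lambda}^{1/2}$ from $\Q_{4;B}^\mcE$ similarly splits into $\eps f(\lambda) + C\eps^{-1}k_F^2(ak_F)^6\expect{\mcN}_{\xi_\lambda}$.

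At this stage the remaining dependence is through $\expect{\mcN}_{\xi_\lambda}$ and through scalars. I would invoke \Cref{lem.propagation.mcN} to replace $\expect{\mcN}_{\xi_\lambda}$ by $C\expect{\mcN}_{\xi_{\lambda'}} + CN(ak_F)^5$ and apply $\sqrt{a+b}\leq \sqrt{a}+\sqrt{b}$ on the square-root terms. For example, $N^{1/2}a^3k_F^5(ak_F)^{-3\alpha/2}|\log ak_F|\,\expect{\mcN}_{\xi_\lambda}^{1/2}$ from $\T + \Q_{\textnormal{scat}}$ splits into $N^{1/2}k_F^2(ak_F)^{3-3\alpha/2}|\log ak_F|\,\expect{\mcN}_{\xi_{\lambda'}}^{1/2}$, which is precisely the last summand on the target right-hand side, plus $Na^3k_F^5(ak_F)^{5/2-3\alpha/2}|\log ak_F|$, which is the logarithmic summand there. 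Every other error piece is bookkept in the same way and verified to fit into one of the three inhomogeneous groups on the right-hand side.

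This produces a differential inequality $|f'(\lambda)| \leq C f(\lambda) + g$ with $g$ bounded by the claimed inhomogeneity; Grönwall's lemma together with $|\lambda - \lambda'|\leq 1$ then yields the result. The main obstacle is the careful accounting: \Cref{lemma.bdd.list.operators} produces many error contributions, and for each one a Cauchy--Schwarz split with a suitably chosen weight is needed so that, after the $\mcN$-propagation, the resulting inhomogeneity fits into one of the three groups on the right-hand side of the claim without generating a spurious power of $ak_F$ that would spoil the final bound.
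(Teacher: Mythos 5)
Your proposal is correct and follows the paper's argument in essentially every detail: differentiate $\expect{\H_0+\Q_4}_{\xi_\lambda}$ to produce $-\expect{[\H_0+\Q_4,B]}_{\xi_\lambda}$, decompose via \Cref{eqn.[H0.B]+Q2.decompose}, control $\Q_2$ with \Cref{lem.a.priori.Q2}, control the remaining four operators with \Cref{lemma.bdd.list.operators} together with Cauchy--Schwarz and the $\mcN$-propagation \Cref{lem.propagation.mcN}, then conclude with Grönwall. The bookkeeping you describe (e.g.\ the split of $N^{1/2}ak_F^2(ak_F)^{1/2+\alpha}\expect{\H_0}^{1/2}$ and of $a^3k_F^4\expect{\mcN}^{1/2}\expect{\Q_4}^{1/2}$, and the propagation of the $\expect{\mcN}^{1/2}$ term) matches the paper's intermediate estimates exactly.
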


\begin{remark}
We apply \Cref{lem.propagation.H0.Q4} for an approximate ground state $\psi$ and choose $\alpha < 1/2$ and $\lambda'=0$. 
Noting that both $\H_0$ and $\Q_4$ are positive operators we find (for any $0\leq \lambda \leq 1$)
\begin{equation}\label{eqn.bdd.H0.Q4.xi.lambda}
\expect{\H_0}_{\xi_\lambda} \leq  C N a^3 k_F^5,
\qquad 
\expect{\Q_4}_{\xi_\lambda} \leq  C N a^3 k_F^5
\end{equation}
by \Cref{lem.a.priori.H0,lem.a.priori.Q4,lem.a.priori.mcN}. 
\end{remark}

\begin{proof}
Again, this is an application of Grönwall's lemma. 
We have 
\begin{equation*}
\dd{\lambda} \longip{\xi_\lambda}{\H_0+\Q_4}{\xi_\lambda} 
  = - \longip{\xi_\lambda}{[\H_0+\Q_4,B]}{\xi_\lambda}
  = - \longip{\xi_\lambda}{[\H_0+\Q_4,B] + \Q_2}{\xi_\lambda} + \longip{\xi_\lambda}{\Q_2}{\xi_\lambda}.
\end{equation*}
We may bound the first term using \Cref{eqn.[H0.B]+Q2.decompose} and the bounds of \Cref{lemma.bdd.list.operators}. This way we obtain
\begin{equation*}
\begin{aligned}
& \abs{\longip{\xi_\lambda}{[\H_0+\Q_4,B] + \Q_2}{\xi_\lambda}} 
 \\ & 
 \quad 
 \lesssim 
  N^{1/2} k_F^2 
    (ak_F)^{3 - 3\alpha/2}
    \abs{\log ak_F} 
    \expect{\mcN}_{\xi_\lambda}^{1/2}
  + N^{1/2} k_F [ (ak_F)^{3/2 + \alpha} + (a k_F)^3 \abs{\log ak_F}]  \expect{\H_0}_{\xi_\lambda}^{1/2}
  \\ & 
  \qquad 
  + N^{1/2} k_F (ak_F)^{3+1/2} \expect{\Q_4}_{\xi_\lambda}^{1/2} 
  + a^3 k_F^4 \expect{\mcN}_{\xi_\lambda}^{1/2} \expect{\Q_4}_{\xi_\lambda}^{1/2}
  \\ & 
  \quad 
  \lesssim  \expect{\Q_4}_{\xi_\lambda} 
  +    \expect{\H_0}_{\xi_\lambda} +
  N k_F^2 \left[(ak_F)^{3+2 \alpha} + (ak_F)^{3+5/2-3\alpha/2} \abs{\log ak_F} \right] 
  \\ & \qquad 
  + k_F^2(ak_F)^{6} \expect{\mcN}_{\xi_{\lambda'}} + N^{1/2}k_F^2 (ak_F)^{3-3\alpha/2} \abs{\log ak_F}  \expect{\mcN}_{\xi_{\lambda'}}^{1/2}
\end{aligned}
\end{equation*}
for any $\alpha>0$ and $0\leq \lambda, \lambda'\leq 1$, where we have used \Cref{lem.propagation.mcN} in the last step.
Bounding $\abs{\expect{\Q_2}_{\xi_\lambda}} \leq \expect{\Q_4}_{\xi_\lambda} + C N a^3k_F^5$ 
by \Cref{lem.a.priori.Q2} 
we obtain
\begin{equation*}
\begin{aligned}
 \abs{\longip{\xi_\lambda}{[\H_0+\Q_4,B]}{\xi_\lambda}}  
&	\lesssim \expect{\H_0 + \Q_4}_{\xi_\lambda} 
+ N a^3 k_F^5 \left[1 + (ak_F)^{5/2 - 3\alpha/2 }\abs{\log ak_F}\right]
 \\ & \quad 
  + k_F^2(ak_F)^{6} \expect{\mcN}_{\xi_{\lambda'}} + N^{1/2}k_F^2 (ak_F)^{3-3\alpha/2} \abs{\log ak_F}  \expect{\mcN}_{\xi_{\lambda'}}^{1/2}
  .
\end{aligned}
\end{equation*}
By Grönwall's lemma we conclude the desired.
\end{proof}

\subsection{Lower bound}
We now give the proof of \Cref{prop.mcE_Q2,prop.mcE_scat}, therefore concluding the proof of \Cref{thm.main}.

\begin{proof}[{Proof of \Cref{prop.mcE_Q2}}]
Combining \Cref{lemma.bdd.list.operators} with the bounds on $\mcN, \H_0$ and $\Q_4$ from \Cref{eqn.bdd.H0.Q4.xi.lambda,eqn.bdd.mcN.xi_lambda} 
we find for any approximate ground state $\psi$ that 
\begin{align*}
\abs{\expect{\Q_{2;B}^\mcE}_{\xi_\lambda}}
& \lesssim 
N a^4 k_F^6 (ak_F)^{1/2}
\end{align*}
with $\xi_\lambda$ as in \Cref{eqn.def.xi.lambda}. 
Recalling  \Cref{eqn.def.mcE_Q2}, this concludes the proof of \Cref{prop.mcE_Q2}. 
\end{proof}

\begin{proof}[{Proof of \Cref{prop.mcE_scat}}]
As in the proof of \Cref{prop.mcE_Q2} above 
we find for any approximate ground state $\psi$ and any $\alpha > 0$
\begin{align*}
\abs{\expect{\T}_{\xi_\lambda}}
+ \abs{\expect{\Q_{\textnormal{scat}}}_{\xi_\lambda}} 
& \lesssim 
N a^3 k_F^{5} 
\left[(ak_F)^{3/4-3\alpha/2} \abs{\log ak_F}  + (ak_F)^{\alpha}\right],
\\
\abs{\expect{\H_{0;B}^{\div r}}_{\xi_\lambda}}
& \lesssim 
N a^3 k_F^5 (ak_F)^{3/4} \abs{\log ak_F},
\\
\abs{\expect{\Q_{4;B}^\mcE}_{\xi_\lambda}}
& \lesssim 
N a^3 k_F^5 (ak_F)^2.
\end{align*}
In particular, recalling \Cref{eqn.def.mcE_scat,eqn.[H0.B]+Q2.decompose}, we have the bound
\begin{equation*}
\begin{aligned}
\abs{\mcE_{\textnormal{scat}}(\psi)}
  & \leq C N a^3 k_F^{5} 
\left[(ak_F)^{3/4-3\alpha/2} \abs{\log ak_F}  + (ak_F)^{\alpha}\right]
\end{aligned}
\end{equation*}
 for any $\alpha >0$. 
Choosing the optimal $\alpha = 3/10$ we conclude the proof of \Cref{eqn.mcE_scat.main}.
\end{proof}

\subsection{Upper bound}
Finally, we prove  the upper bound stated in \Cref{prop.upper.bdd}. The starting point is again \Cref{eqn.main}. To obtain an upper bound, 
 we need to choose some trial state $\psi$ and compute the energy. 
For the lower bound we used that $\H_0 + \Q_4 \geq 0$, 
whereas in general we can only bound $\expect{\H_0 + \Q_4}_{\xi_1} \leq C N a^3 k_F^5$ for an approximate ground state $\psi$ 
(recall the definition of $\xi_\lambda$ in \Cref{eqn.def.xi.lambda}).
This bound is not good enough for our purposes. 
We can, however, choose the specific trial state $\psi = RT\Omega$, satisfying  
$\xi_1 = T^{-1}R^*\psi = \Omega$ and thus $\expect{\H_0 + \Q_4}_{\xi_1} = 0$.
However, it is not  a priori clear whether or not this trial state is an approximate ground state.
Thus, we cannot just apply the bounds on the error terms $\mcE_V(\psi), \mcE_{\Q_2}(\psi)$ and $\mcE_{\textnormal{scat}}(\psi)$
from \Cref{prop.mcE_scat,prop.mcE_Q2,prop.mcE_V}.
Instead, we bound these error terms as follows.

We first note the bounds (for any $0\leq \lambda \leq 1$)
\begin{equation}\label{lem.bdd.xi.lambda.upper.bdd}
  \expect{\mcN}_{\xi_\lambda} \leq C N a^5 k_F^5,
  \qquad 
  \expect{\H_0}_{\xi_\lambda} \leq C N a^3 k_F^5,
  \qquad 
  \expect{\Q_4}_{\xi_\lambda} \leq C N a^3 k_F^5
\end{equation}
by applying Lemmas~\ref{lem.propagation.mcN} and \ref{lem.propagation.H0.Q4}
for $\lambda' = 1$ and using that $\xi_1 = \Omega$.
Using these bounds, \Cref{lem.ini.V=Q2+Q4} reads 
\begin{equation*}
  \abs{\mcE_V(\psi)} \lesssim \eps N a^3 k_F^5 
    + \eps^{-1} N a^{6} k_F^8
    \leq C N a^4 k_F^6 (ak_F)^{1/2}
\end{equation*}
by choosing the optimal $\eps = (ak_F)^{3/2}$. 
Next, using \Cref{lemma.bdd.list.operators} and the bounds in \Cref{lem.bdd.xi.lambda.upper.bdd} we find the bounds
 (recall \Cref{eqn.def.mcE_scat,eqn.def.mcE_Q2})
\begin{equation*}
\begin{aligned}
\abs{\mcE_{\Q_2}(\psi)}
  & \leq C N a^5 k_F^7 \abs{\log ak_F},
  \\
\abs{\mcE_{\textnormal{scat}}(\psi)}
  & \leq 
    C N a^5 k_F^7 (ak_F)^{1/2-3\alpha/2} \abs{\log ak_F} 
    + C N a^3 k_F^5 (ak_F)^{\alpha}
    + C N a^4 k_F^6 (ak_F)^{1/2} \abs{\log ak_F}
\end{aligned}
\end{equation*}
for any $\alpha > 0$. 
Choosing the optimal $\alpha = 1$ we find $\abs{\mcE_{\textnormal{scat}}(\psi)} \leq C N a^4 k_F^6 \abs{\log ak_F}$.
Together with the computation of $\expect{\!\ud\Gamma(V(1-\varphi))}_F$ in \Cref{eqn.calc.<Vphi>.F}, 
this concludes the proof of \Cref{prop.upper.bdd}.

\subsubsection*{Acknowledgments}
We thank 
Morris Brooks,
Emanuela Giacomelli,
Christian Hainzl,
Marcello Porta
and 
Phan Thành Nam
for helpful discussions.
Financial support by the Austrian Science Fund (FWF) through 
grant \href{https://doi.org/10.55776/I6427}{DOI: \nolinkurl{10.55776/I6427}}
(as part of the SFB/TRR~352) is gratefully acknowledged.

\appendix 

\section{The Two-Dimensional Case}\label{sec.d=2}
We sketch here how to adapt the proof of \Cref{thm.main} to the two-dimensional setting. 
The structure is the same as in the case $d=3$. 
The main step in the proof of \Cref{thm.main.d=2} is proving the two-dimensional analogue of \Cref{lemma.bdd.list.operators}. 
This reads  
\begin{lemma}\label{lemma.bdd.list.operators.d=2}
For any state  $\xi\in \mcF$, and any $\alpha > 0$, we have 
\begin{align*}
\abs{\expect{\T}_\xi}
+
\abs{\expect{\Q_{\textnormal{scat}}}_\xi}
& \lesssim 
N^{1/2} a^2 k_F^4 (ak_F)^{-\alpha} \abs{\log ak_F}  \expect{\mcN}_\xi^{1/2} 
	+ N^{1/2} a k_F^2 (ak_F)^{\alpha} \abs{\log ak_F}^{1/2} \expect{\H_0}_\xi^{1/2},
\\
\abs{\expect{\H_{0;B}^{\div r}}_\xi}
& \lesssim 
N^{1/2} a^2 k_F^4 \abs{\log ak_F} \expect{\mcN}_\xi^{1/2}
+
N^{1/2} a^2 k_F^3 \abs{\log ak_F} \expect{\H_0}_{\xi}^{1/2},
\\
\abs{\expect{\Q_{4;B}^\mcE}_\xi}
& \lesssim 
N^{1/2} a^2 k_F^3 \abs{\log ak_F} \expect{\Q_4}_\xi^{1/2}
+ a^{2}k_F^3 \abs{\log ak_F}^{1/2} \expect{\mcN}_{\xi}^{1/2} \expect{\Q_4}_{\xi}^{1/2},
\\
\abs{\expect{\Q_{2;B}^\mcE}_\xi}
& \lesssim 
a k_F  \expect{\H_0}_\xi^{1/2} \expect{\Q_4}_\xi^{1/2}
+
a^2 k_F^4 \abs{\log ak_F} \expect{\mcN}_\xi 
+ N a^3 k_F^5 \abs{\log ak_F}.
\end{align*}
\end{lemma}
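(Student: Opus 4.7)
The plan is to mirror the three-dimensional argument verbatim, making the following dimensional substitutions throughout. The operator-norm bounds become $\norm{c_x} \leq C k_F$ and $\norm{\nabla^n c_x} \leq C k_F^{n+1}$ (and likewise for $c_x^r$), the projections satisfy $\norm{v}_{L^2},\norm{v^r}_{L^2} \leq C k_F$, and the free Fermi two-particle density is bounded by $C k_F^{6}|x-y|^2$. In the conventions of \Cref{rmk.length=a.in.bdds} adapted to $d=2$, one has $\int |x|^n V \lesssim a^2$ for $n=1,2$, and the box is related to the particle number by $L \sim N^{1/2}/k_F$.

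The proof will proceed in the same steps as in 3D. First I would establish a 2D analogue of \Cref{lem.bdd.F*ac.b(F)}, yielding $\norm{\int F(x-y) a_y c_y \ud y} \lesssim k_F \norm{F}_{L^2}$ and $\norm{\int F(x-y) a_y c_y \nabla^n c_{ty+(1-t)x} \ud y} \lesssim k_F^{n+2} \norm{F}_{L^2}$ via the same Birman--Solomjak singular-value-decomposition argument. Next, I would reprove \Cref{lem.prop.phi} in 2D: starting from the radial form $\partial_r^2 \varphi_0 + (3/r)\partial_r \varphi_0 + \tfrac{1}{2} V (1-\varphi_0) = 0$ of the scattering equation, together with the fact that $\varphi_0$ is radially decreasing (via \cite[Lemma A.1]{Lieb.Yngvason.2001} interpreted as a radial $s$-wave equation in $4$ dimensions) and the asymptotic $\varphi_0(x) = a^2/|x|^2$ outside the support of $V$, one obtains $|\partial_r \varphi_0(x)| \lesssim a^2/(|x|(a^2+|x|^2))$, from which the relevant norm bounds follow. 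In particular $\norm{|\cdot|^n \varphi}_{L^1} \lesssim a^2 k_F^{-n}$ for $n=1,2$, $\norm{|\cdot|^n \nabla^n \varphi}_{L^1} \lesssim a^2 \abs{\log ak_F}$ for $n=0,1,2$, $\norm{|\cdot|\varphi}_{L^2} \lesssim a^2 \abs{\log ak_F}^{1/2}$, and $\norm{|\cdot|^n \nabla^n \varphi}_{L^2} \lesssim a$ for $n=0,1$. The logarithms, absent in 3D, reflect the slower $1/|x|^2$ decay of $\varphi_0$ in 2D and propagate into the final estimates.

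With these ingredients in hand, the five error operators $\T$, $\Q_{\textnormal{scat}}$, $\H_{0;B}^{\div r}$, $\Q_{4;B}^\mcE$ and $\Q_{2;B}^\mcE$ can be bounded by following the three-dimensional arguments step by step. For $\T$ and $\Q_{\textnormal{scat}}$ I would again split $b_x = b_x^< + b_x^>$ at momentum scale $k_F(ak_F)^{-\alpha}$, bounding $b_x^<$ in norm and pairing with $\norm{F}_{L^1}$ (respectively $\norm{\mcE_\varphi}_{L^1}$) for the low-momentum piece, while using the 2D Birman--Solomjak bound with $\norm{F}_{L^2}$ (resp.\ $\norm{\mcE_\varphi}_{L^2}$) and the a priori bound $\mcN_{>\alpha} \lesssim k_F^{-2}(ak_F)^{2\alpha}\H_0$ for the high-momentum piece. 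For $\H_{0;B}^{\div r}$ I would Taylor-expand $c_y$ as in \Cref{eqn.Taylor.c_x.1st.order}, reducing the estimate to one involving $\int \norm{\nabla b_x^r \xi}^2 \ud x \leq \expect{\H_0}_\xi + k_F^2 \expect{\mcN}_\xi$. The operator $\Q_{4;B}^\mcE$ I would split into the same four sub-operators as in 3D (with one or two $v^r$'s, one or two $\delta$'s), each handled by Cauchy--Schwarz together with the 2D analogue of \Cref{eqn.bdd.phi*bcc}.

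The main obstacle will be the $\Q_{2;B}^\mcE$ bound. As in 3D, I would decompose it into terms with the $c$-commutator only, the $b$-commutator only, and terms involving both (the latter requiring careful extraction of the cancellation between the constant piece and the double-$\delta$ piece via Parseval and a Taylor expansion of $\hat\varphi$ around zero momentum). In 2D the various $\varphi$-integrals sit closer to the borderline of integrability, so several bounds now carry an unavoidable $\abs{\log ak_F}$, and careful bookkeeping is required to ensure that each $\varphi$-norm is paired with the right Cauchy--Schwarz partner so that no spurious powers of $ak_F$ are lost and the $Na^5 k_F^7 \abs{\log ak_F}$ remainder coming from the constant terms is correctly isolated. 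With the lemma in hand, the ultimate balance of errors in the 2D analogue of \Cref{prop.mcE_scat} optimizes at $\alpha = 1/4$ (rather than $3/10$ in 3D), yielding the error $O((ak_F)^{2+1/4}\abs{\log ak_F})$ of \Cref{thm.main.d=2}.
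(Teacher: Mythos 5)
The proposal is correct and matches the paper's own (sketched) proof essentially verbatim: the paper's Appendix~A reduces \Cref{lemma.bdd.list.operators.d=2} to the three-dimensional argument by precisely the dimensional substitutions you list ($\norm{c_x}\lesssim k_F^{d/2}$, $\norm{v}_{L^2}\lesssim k_F^{d/2}$, the $d$-dimensional Birman--Solomjak bounds $k_F^{d/2}\norm{F}_{L^2}$ and $k_F^{d+n}\norm{F}_{L^2}$, the radial scattering equation $r\partial_r^2\varphi_0 + (d+1)\partial_r\varphi_0 + \tfrac12 rV(1-\varphi_0)=0$ with monotonicity from the $(d+2)$-dimensional Lieb--Yngvason lemma, and the $|\log ak_F|$ factors coming from the $a^2/|x|^2$ decay of $\varphi_0$), followed by the same term-by-term bookkeeping and final optimization at $\alpha = 1/4$. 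Your proposed 2D norm bounds on $\varphi$ agree with the paper's table, so I see no gap.
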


Further, as in $3$ dimensions, we have the a priori bounds 
\begin{equation*}
\expect{\mcN}_{\xi_0} \leq C N a k_F,
\qquad 
\expect{\H_0}_{\xi_0} \leq C N a^2 k_F^4,
\qquad 
\expect{\Q_4}_{\xi_0} \leq C N a^2 k_F^4
\end{equation*}
for an approximate ground state $\psi$ (recall the definition of $\xi_\lambda$ in \Cref{eqn.def.xi.lambda}).
Using \Cref{lemma.bdd.list.operators.d=2} and these a priori bounds we give the

\begin{proof}[{Proof of \Cref{thm.main.d=2}}]
Propagating the a priori bounds as in \Cref{sec.proof.final}, using \Cref{lemma.bdd.list.operators.d=2}
and choosing the optimal $\alpha = 1/4$ we conclude the proof of \Cref{thm.main.d=2}.
\end{proof}

To prove \Cref{lemma.bdd.list.operators.d=2} we note the bounds on norms of $\varphi$: (being the analogue of \Cref{lem.prop.phi})

\begin{lemma}
The scattering function $\varphi$ satisfies
\begin{equation*}
\begin{aligned}
\norm{|\cdot|^n \varphi}_{L^1}
& \leq 
C a^2 k_F^{-n}, 
& n &= 1,2,
\quad  
&
\norm{|\cdot|^n \nabla^n \varphi}_{L^1}
& \leq C a^2 \abs{\log ak_F}, 
&n&=0,1,2
\quad  
\\
\norm{ |\cdot| \varphi}_{L^2 }
& \leq 
C a^{2} \abs{\log ak_F}^{1/2}, 
&&
\quad 
&
\norm{|\cdot|^n \nabla^n \varphi}_{L^2}
& \leq C a, 
&n&=0,1.
\end{aligned}
\end{equation*}
\end{lemma}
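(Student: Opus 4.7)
The plan is to follow the structure of the proof of Lemma~\ref{lem.prop.phi}, adapted to two dimensions. The first step is to establish pointwise bounds on the derivatives of $\varphi_0$. In polar coordinates, the 2D scattering equation \eqref{eqn.scat} reads
\[
r\partial_r^2 \varphi_0 + 3\partial_r \varphi_0 + \tfrac{1}{2} r V (1-\varphi_0) = 0,
\]
where the coefficient $3$ (which replaces the $4$ of the 3D case) is $d+1$ with $d=2$. Computing
\[
\partial_r\!\left[\partial_r\varphi_0 + \tfrac{3}{r}\varphi_0\right] = -\tfrac{3}{r^2}\varphi_0 - \tfrac{1}{2}V(1-\varphi_0) \leq 0,
\]
and using that $\partial_r\varphi_0 + 3\varphi_0/r = a^2/r^3 \to 0$ as $r \to \infty$ (from $\varphi_0(x) = a^2/|x|^2$ outside $\supp V$), one obtains $\partial_r\varphi_0 + 3\varphi_0/r \geq 0$ everywhere. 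Combined with the monotonicity $\partial_r \varphi_0 \leq 0$ (which follows from \cite[Lemma A.1]{Lieb.Yngvason.2001} applied in $d+2=4$ dimensions) and the bound $\varphi_0(x) \leq \min\{1, a^2/|x|^2\}$, this yields the pointwise estimate $\abs{\partial_r\varphi_0(x)} \lesssim a^2/(\abs{x}(a^2+\abs{x}^2))$, and then, via the scattering equation, $\abs{\partial_r^2\varphi_0(x)} \lesssim a^2/(\abs{x}^2(a^2+\abs{x}^2)) + V(x)$.

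With these pointwise bounds in hand, the six norms are computed from the definition $\varphi = \varphi_0\, \chi_\varphi(k_F\abs{\cdot})$ via the Leibniz rule, using that $\chi_\varphi$ is smooth with uniformly bounded derivatives and that $\supp \varphi \subset \{\abs{x}\leq 2k_F^{-1}\}$; the contribution from $V$ is controlled via $\norm{V}_{L^1} \lesssim a^2$. Passing to polar coordinates, each of the required norms reduces to an integral of the form
\[
\int_0^{2k_F^{-1}} r^{1+\alpha} \min\!\left\{1, \frac{a^\beta}{r^\beta}\right\} \ud r
\]
for suitable non-negative integers $\alpha,\beta$. If $\beta > 2+\alpha$ the integral is a polynomial in $a$ and $k_F^{-1}$; if $\beta = 2+\alpha$ one picks up a factor $\abs{\log ak_F}$; if $\beta < 2+\alpha$ the answer is again polynomial, dominated by the upper endpoint. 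Checking case by case which of the six norms lands on the borderline $\beta = 2+\alpha$ produces exactly the pattern of logarithmic factors stated.

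I expect no substantial obstacle: once the pointwise estimates on $\varphi_0$ are in place the proof is routine bookkeeping. The only mildly new feature compared to the 3D case is that the 2D radial measure $r\ud r$ (versus $r^2\ud r$) shifts the borderline exponent by one unit, so one additional bound sits on the logarithmic threshold, namely $\norm{\abs{\cdot}\varphi}_{L^2}$; this is precisely the reason for the extra factor $\abs{\log ak_F}^{1/2}$ compared to its 3D counterpart in Lemma~\ref{lem.prop.phi}.
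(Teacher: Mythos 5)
Your proposal is correct and follows essentially the approach one would extract from the paper: the lemma is stated in the appendix without proof, with the understanding that it is the two-dimensional analogue of \Cref{lem.prop.phi}, and your argument is exactly that adaptation. In particular, the 2D radial scattering equation with coefficient $d+1=3$, the monotone quantity $\partial_r\varphi_0+3\varphi_0/r$ integrated from $r=\infty$ using $\varphi_0=a^2/|x|^2$ outside $\supp V$, and the resulting pointwise bounds $\abs{\partial_r\varphi_0}\lesssim a^2/(|x|(a^2+|x|^2))$ and $\abs{\partial_r^2\varphi_0}\lesssim a^2/(|x|^2(a^2+|x|^2))+V$ are all correct, and plugging them into the radial integrals with measure $r\ud r$ over $0\le r\le 2k_F^{-1}$ reproduces each stated bound (including the new $\abs{\log ak_F}^{1/2}$ in $\norm{|\cdot|\varphi}_{L^2}$, which now sits on the logarithmic borderline precisely because the 2D measure drops one power of $r$ relative to 3D).
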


Further, we have the analogue of \Cref{lem.bdd.F*ac.b(F)}: (which we state in general dimension $d$)
\begin{lemma}\label{lem.bdd.F*ac.b(F).d=2}
Let $F$ be a compactly supported function with $F(x)=0$ for $|x| \geq C k_F^{-1}$.
Then, uniformly in $x\in \Lambda$ and $t\in [0,1]$, (with $\nabla^n$ denoting any $n$'th derivative)
\begin{align*}
\norm{\int F(x-y) a_y c_y \ud y} 
	& \lesssim k_F^{d/2} \norm{F}_{L^2} ,
\qquad 
\norm{\int F(x-y) a_y c_y \nabla^n c_{ty+(1-t)x} \ud y 	}
	\lesssim k_F^{d+n} \norm{F}_{L^2}.
\end{align*}
\end{lemma}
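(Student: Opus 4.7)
The plan is to adapt the proof of \Cref{lem.bdd.F*ac.b(F)} essentially verbatim, noting that the argument is dimension-agnostic up to scaling factors. Every step carries over once one replaces $3$ by $d$ in the relevant scaling exponents; the two main ingredients, a singular value decomposition and the Birman--Solomjak trace-ideal estimate from \cite[Theorem 4.5]{Simon.2005}, both work in arbitrary dimension.

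For the first bound I would write
\begin{equation*}
\int F(x-y) a_y c_y \ud y = \iint F^x(y) v(y-z) a_y a_z \ud y \ud z, \qquad F^x(y)=F(x-y),
\end{equation*}
and decompose the kernel $K^x(y,z) = F^x(y) v(y-z) = \sum_i \mu_i \phi_i^x(y) \psi_i^x(z)$ via its singular value decomposition. The operator norm is then bounded by $\sum_i \mu_i = \norm{K^x}_{\mathfrak{S}_1}$. After dilation by $k_F^{-1}$, the kernel becomes $k_F^{-d} F^x(k_F^{-1} y) v(k_F^{-1}(y-z))$; the first factor is supported in a set of diameter $O(1)$ (since $F$ has support of diameter $O(k_F^{-1})$), and $k_F^{-d}\widehat{v(k_F^{-1}\cdot)} = \hat v(k_F\cdot) = \chi_{|\cdot|\leq 1}$ is also compactly supported. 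Applying \cite[Theorem 4.5]{Simon.2005} then yields
\begin{equation*}
\norm{K^x}_{\mathfrak{S}_1} \leq C \norm{F^x(k_F^{-1}\cdot)}_{L^2} \norm{\hat v(k_F\cdot)}_{\ell^2} \leq C k_F^{d/2} \norm{F}_{L^2},
\end{equation*}
where the scaling $\norm{F^x(k_F^{-1}\cdot)}_{L^2} = k_F^{d/2}\norm{F}_{L^2}$ is purely dimensional, and $\norm{\hat v(k_F\cdot)}_{\ell^2}$ is a $k_F$-independent constant.

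For the second bound I would proceed as in the three-dimensional case: write
\begin{equation*}
\int F(x-y) a_y c_y \nabla^n c_{ty+(1-t)x} \ud y = \iiint F^x(y) v(y-z) \nabla^n v^{(1-t)x}(ty-w) a_y a_z a_w \ud y \ud z \ud w,
\end{equation*}
use the SVD above to separate the $z$-variable, and then perform a second SVD for the operator with kernel $\phi_i^x(y) \nabla^n v^{(1-t)x}(ty-w) = \sum_j \widetilde\mu_{ij}^{x,t} \widetilde\phi_{ij}^{x,t}(y)\widetilde\psi_{ij}^{x,t}(w)$. After dilation by $k_F^{-1}$ the first factor is compactly supported of diameter $O(1)$, and the Fourier transform of the dilated $\nabla^n v^{(1-t)x}(t\cdot)$ is $t^{-d/2}(|\cdot|^n \hat v)(t^{-1}k_F\cdot)$, again compactly supported of diameter $t\leq 1$ with an extra factor $k_F^n$ in $\ell^2$ norm via Plancherel. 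A second application of \cite[Theorem 4.5]{Simon.2005} gives $\sum_j \widetilde\mu_{ij}^{x,t} \leq C k_F^{d/2 + n}$ uniformly in $i,x,t$, and combining with $\sum_i \mu_i \leq C k_F^{d/2}\norm{F}_{L^2}$ yields the claimed $k_F^{d+n}\norm{F}_{L^2}$.

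There is no real obstacle beyond bookkeeping; the only point requiring attention is ensuring that all constants in Birman--Solomjak depend only on the (dimension-dependent but $k_F$-independent) diameters of the dilated supports, and in particular are uniform in $t\in[0,1]$ in the second application. This uniformity is already implicit in the three-dimensional argument and needs no new input.
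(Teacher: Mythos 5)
Your proposal is correct and follows the paper's own strategy: the paper does not spell out a separate proof of this $d$-dimensional lemma, but simply refers to it as the analogue of \Cref{lem.bdd.F*ac.b(F)}, whose proof (double singular value decomposition plus the Birman--Solomjak trace-class bound from \cite[Theorem 4.5]{Simon.2005} applied after a dilation by $k_F^{-1}$, with the extra $t$-dilation for the second estimate) is exactly what you reproduce with $3$ replaced by $d$ in the scaling exponents. All the relevant inputs — unitary invariance of the trace norm under dilation, the $k_F$-independence of $\norm{\hat v(k_F\cdot)}_{\ell^2}$, and the uniform-in-$t$ constant from the bounded support diameters — carry over verbatim, so no new ideas are needed.
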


We then sketch the

\begin{proof}[{Proof of  \Cref{lemma.bdd.list.operators.d=2}}]
The proof is the same as that of \Cref{lemma.bdd.list.operators} only with trivial changes. 
We here just state all the intermediate bounds. 
We state these in general dimension $d$, to make clear the changes needed to extend the proof also to the one-dimensional setting.

We may bound $\T, \Q_{\textnormal{scat}}$ and $\H_{0;B}^{\div r}$ by 
\begin{align*}
\abs{\expect{\T}_\xi}
	& \leq C N^{1/2} k_F^{d+2} (ak_F)^{-d\alpha/2}\norm{F}_{L^1} \expect{\mcN}_{\xi}^{1/2}
		+ C N^{1/2} k_F^{d/2+2} \norm{F}_{L^2} \expect{\mcN_{>\alpha}}_\xi^{1/2}.
\\
\abs{\expect{\Q_{\textnormal{scat}}}_\xi}
	& \leq C N^{1/2} k_F^{d+1} (ak_F)^{-d\alpha/2}\norm{\mcE_\varphi}_{L^1} \expect{\mcN}_{\xi}^{1/2}
		+ C N^{1/2} k_F^{d/2+1} \norm{\mcE_\varphi}_{L^2} \expect{\mcN_{>\alpha}}_\xi^{1/2}.
\\
\abs{\expect{\H_{0;B}^{\div r}}_\xi} 
	& \leq C N^{1/2} \left[\norm{|\cdot|^2\Delta\varphi}_{L^1} + \norm{|\cdot|\nabla \varphi}_{L^1}\right] 
	\left[k_F^{d+2} \expect{\mcN}_\xi^{1/2} + k_F^{d+1} \expect{\H_0}_\xi^{1/2}\right].
\end{align*}
For the error terms from $[\Q_4,B]$ we have the bounds 
\begin{equation*}
\begin{aligned}
\abs{\expect{\A_1}_\xi} + \abs{\expect{\A_2}_\xi}
	& 
	\leq C N^{1/2} k_F^{d+d/2+1} \norm{|\cdot|\varphi}_{L^1} \norm{V}_{L^1}^{1/2} \expect{\Q_4}_\xi^{1/2}
	\\
\abs{\expect{\A_3}_\xi} + \abs{\expect{\A_4}_\xi}
	& 
	\leq C 
	k_F^{d+1} \norm{|\cdot|\varphi}_{L^2}
	\norm{V}_{L^1}  \expect{\mcN}_\xi^{1/2} \expect{\Q_4}_\xi^{1/2}.
\end{aligned}
\end{equation*}
Finally, the error terms from $[\Q_2,B]$ are bounded as follows. 
\begin{align*}
\abs{\expect{\A_{1a}}_\xi}
+ \abs{\expect{\A_{1b}}_\xi}
	& \leq C k_F^d
	\norm{\varphi}_{L^2}
	\norm{V}_{L^1}^{1/2} \expect{\mcN_>}_{\xi}^{1/2} \expect{\Q_4}_{\xi}^{1/2}
	\\
\abs{\expect{\A_{2a}}_\xi}
	& \leq C k_F^{2d+2} \norm{|\cdot|\varphi}_{L^1} \norm{|\cdot|V}_{L^1} \expect{\mcN}_\xi
	\\
\abs{\expect{\A_{2b}}_\xi}
	& \leq C k_F^{d+2} \norm{|\cdot|^2V}_{L^1} \expect{\mcN}_\xi  
	+ C k_F^{d+d/2+2} \norm{|\cdot|^2V}_{L^1}^{1/2} \norm{V}_{L^1}^{1/2} \norm{|\cdot|\varphi}_{L^2} \expect{\mcN}_\xi 
	\\ & \quad 
	+ C k_F^{2d+2} \norm{|\cdot|^2V}_{L^1}^{1/2} \norm{V}_{L^1}^{1/2} \norm{|\cdot|^2\varphi}_{L^1}^{1/2} \norm{\varphi}_{L^1}^{1/2} \expect{\mcN}_\xi 
	\\
\abs{\expect{\A_{3a}}_\xi} + \abs{\expect{\A_{3b}}_\xi}
	& \leq C k_F^{2d} \norm{\varphi}_{L^1}\norm{V}_{L^1} \expect{\mcN}_\xi
	\\
\abs{\expect{\A_{3c}}_\xi} 
	& \leq 
	C k_F^{2d+2} \norm{V}_{L^1} \norm{|\cdot|^2\varphi}_{L^1} \expect{\mcN}_\xi 
	+ C k_F^{d+2} \norm{|\cdot|^2 V}_{L^1} \expect{\mcN}_\xi
	\\
\abs{\expect{\A_{3d}}_\xi}
	& \leq C N k_F^{2d+1} \norm{|\cdot|V}_{L^1} \norm{\varphi}_{L^1}.
\end{align*}
Bounding $\mcN_>$ and $\mcN_{>\alpha}$ as in \Cref{eqn.bdd.N>.a.priori} and using the bounds on norms of $\varphi$ above we conclude the proof of \Cref{lemma.bdd.list.operators.d=2}.
\end{proof}

\begin{remark}[{Possible extension to the one-dimensional case}]\label{rmk.1d.d=1}
In dimension $d=1$ the analogous bounds on norms of $\varphi$ read
\begin{equation*}
\begin{aligned}
\norm{|\cdot|^n \varphi}_{L^1}
& \leq 
C a k_F^{-n}, 
& n &= 1,2,
\quad  
&
\norm{|\cdot|^n \nabla^n \varphi}_{L^1}
& \leq C a \abs{\log ak_F}, 
&n&=0,1,2
\quad  
\\
\norm{ |\cdot| \varphi}_{L^2 }
& \leq 
C a k_F^{-1/2}, 
&&
\quad 
&
\norm{|\cdot|^n \nabla^n \varphi}_{L^2}
& \leq C a^{1/2}, 
&n&=0,1.
\end{aligned}
\end{equation*}
Further $\norm{|\cdot|^n V}_{L^1}\leq C a^{n-1}$.
Using these bounds and bounding $\expect{\mcN}_\xi \leq C N (ak_F)^{1/2}$, $\expect{\H_0}_\xi \leq C N a k_F^3$ and $\expect{\Q_4}_\xi \leq C N a k_F^3$
(as is appropriate for relevant $\xi$ by the propagation of the a priori bounds)
we see that only the estimates of the error terms $\T, \Q_{\textnormal{scat}}, \H_{0;B}^{\div r},\A_{2a}$ and $\A_{2b}$ 
in the proof or \Cref{lemma.bdd.list.operators.d=2} are smaller than $N a k_F^3$. 
The estimates of the (many) remaining error-terms are too big.
One would have to improve these remaining estimates in order to extend our proof of \Cref{thm.main,thm.main.d=2} also to the one-dimensional setting. 
We do not pursue this. As already mentioned, the one-dimensional analogue of \Cref{thm.main,thm.main.d=2} is proved in \cite{Agerskov.Reuvers.ea.2022}. 
\end{remark}

\printbibliography

\end{document}